\newcommand\ens[1]{\left\{ #1 \right\}}
\newcommand\oper[4]{#1 \limits_{\substack{#2}}^{#3}{#4}}
\renewcommand{\H}{\mathrm H}
\newcommand{\N}{\mathds N}
\newcommand{\Z}{\mathds{Z}}
\newcommand{\R}{\mathds{R}}
\newcommand{\B}{\mathds{B}}
\newcommand{\Sp}{\mathds S}
\newcommand{\C}{\mathds C}
\newcommand{\lk}{\text{Lk}}
\newcommand{\control}[2]{ ($#2+#1$) .. #1}
\definecolor{coldef}{rgb}{0.1,0.1,0.9}
\newcommand{\emphdef}[1]{\textcolor{coldef}{#1}}
\newtheorem{theorem}{Theorem}[section]
\newtheorem{definition}[theorem]{Definition}
\newtheorem{proposition}[theorem]{Proposition}
\newtheorem{remark}[theorem]{Remark}
\newtheorem{lemma}[theorem]{Lemma}
\title{Hopf Arborescent Links, Minor Theory, and Decidability of the Genus Defect} 
\author{Pierre Dehornoy\thanks{Aix-Marseille Université, CNRS, I2M, Marseille, France, pierre.dehornoy@univ-amu.fr} \and \stepcounter{footnote}Corentin Lunel\thanks{LIGM, CNRS, Univ. Gustave Eiffel, ESIEE Paris, F-77454 Marne-la-Vall\'ee, France, corentin.lunel2@univ-eiffel.fr} \and Arnaud de Mesmay\thanks{LIGM, CNRS, Univ. Gustave Eiffel, ESIEE Paris, F-77454 Marne-la-Vall\'ee, France, arnaud.de-mesmay@univ-eiffel.fr}}
\date{}
\definecolor{dred}{rgb}{0.6,0,0}
\definecolor{dblue}{rgb}{0.3,0.3,0.8}
\definecolor{dgreen}{rgb}{0.0,0.5,0.0}
\begin{document}

\maketitle

\begin{abstract}
While the problem of computing the genus of a knot is now fairly well understood, no algorithm is known for its four-dimensional variants, both in the smooth and in the topological locally flat category. 
In this article, we investigate a class of knots and links called Hopf arborescent links, which are obtained as the boundaries of some iterated plumbings of Hopf bands. 
We show that for such links, computing the genus defects, which measure how much the four-dimensional genera differ from the classical genus, is decidable. 
Our proof is non-constructive, and is obtained by proving that Seifert surfaces of Hopf arborescent links under a relation of minors defined by containment of their Seifert surfaces form a well-quasi-order.
\end{abstract}

\section{Introduction}
A (tame) \emphdef{knot} is a polygonal embedding of the circle $\Sp^1$ into $\R^3$, or equivalently, $\Sp^3$, and a \emphdef{link} is a disjoint union of knots. Knot theory is both an old and very active mathematical field, yet from an algorithmic perspective, many problems arising naturally in knot theory are still shrouded in mystery. This can be illustrated with arguably the most fundamental algorithmic question in knot theory: in the \textsc{Knot Equivalence} problem, we are given two knots $K_1$ and $K_2$ and are tasked with deciding whether they are equivalent, that is, whether one can deform one into the other continuously without creating self-intersections. The best algorithm for this problem, due to Kuperberg, is elementary recursive~\cite{Kuperberg_elem_rec}, yet the problem is not even known to be \textbf{NP}-hard (see for example~\cite[Conclusion]{lackenby2017some}). We refer to Lackenby~\cite{Lackenby_Algorithms} for a survey on algorithmic problems in knot theory.

Given how seemingly hard testing the equivalence of knots is, a huge body of research has been devoted to designing and studying knot invariants in order to tell them apart. A classical invariant of a knot is its \emphdef{genus}: this is the smallest possible genus of an embedded oriented surface, called a \emphdef{Seifert surface}, having the knot as its boundary. Computing the genus of a knot turns out to be significantly more tractable: celebrated works of Hass, Lagarias and Pippenger~\cite{Hass_trivial_NP} and Agol, Hass and Thurston~\cite{aht}, building on the normal surface theory of Haken~\cite{haken}, have shown that deciding if a knot has genus at most $g$ is in \textbf{NP}, while Lackenby has proved that it is also in co-\textbf{NP}~\cite{lackenby2021efficient}. These algorithms run also well in practice within the software Regina~\cite{regina}.

There are, however, different notions of genus that are much less understood: considering~$\Sp^3$ as the boundary of the $4$-dimensional ball~$\B^4$, the $4$-genus of a knot $g_4(K)$ is roughly the smallest possible genus of a surface in $\B^4$ having the knot as its boundary. 
This comes in two flavours that are known to not be equivalent: the \emphdef{topological locally flat 4-genus} and the \emphdef{smooth 4-genus} depending on the regularity of the surface. 
We refer to the preliminaries for precise definitions.
A knot is (topologically or smoothly) \emphdef{slice} if it bounds a disc in $\B^4$, i.e., has $4$-genus zero. 
One of the motivations for the study of such $4$-dimensional invariants comes from algebraic geometry, as such surfaces arise naturally around singularities of algebraic curves in $\mathbb{C}^2$~\cite{kronheimer1994genus,rudolph1993}.
Another motivation is the slice-ribbon conjecture~\cite{fox1962} which states that a knot is smoothly slice if and only if it is ribbon, i.e., it bounds an immersed disc with only ribbon-type singularities in~$\Sp^3$. 
Unfortunately, no algorithmic framework at all is known to attack topological problems in 4-dimensional topology, and indeed many of these problems are known to be undecidable, e.g., the homeomorphism of $4$-manifolds~\cite{markov1958insolubility}. For some other problems, the decidability is a well-known open problem: this is the case for $4$-sphere recognition~\cite{weinberger2002homology} or embeddability of $2$-dimensional complexes in $\R^4$~\cite{matouvsek2010hardness}.
Similarly, no algorithm is known to decide the $4$-genus of a knot or even to decide whether it is slice. To illustrate how hard these problems are, it is only in a recent breakthrough of Picirillo~\cite{piccirillo} that it was proved that the Conway knot is not smoothly slice, although it only has $11$ crossings. From the perspective of lower bounds, recent work of de Mesmay, Rieck, Sedgwick and Tancer~\cite{unbearable} has proved that an analogue of the $4$-genus for links, the $4$-ball Euler characteristic, is \textbf{NP}-hard to compute, but it is also not known to be decidable.

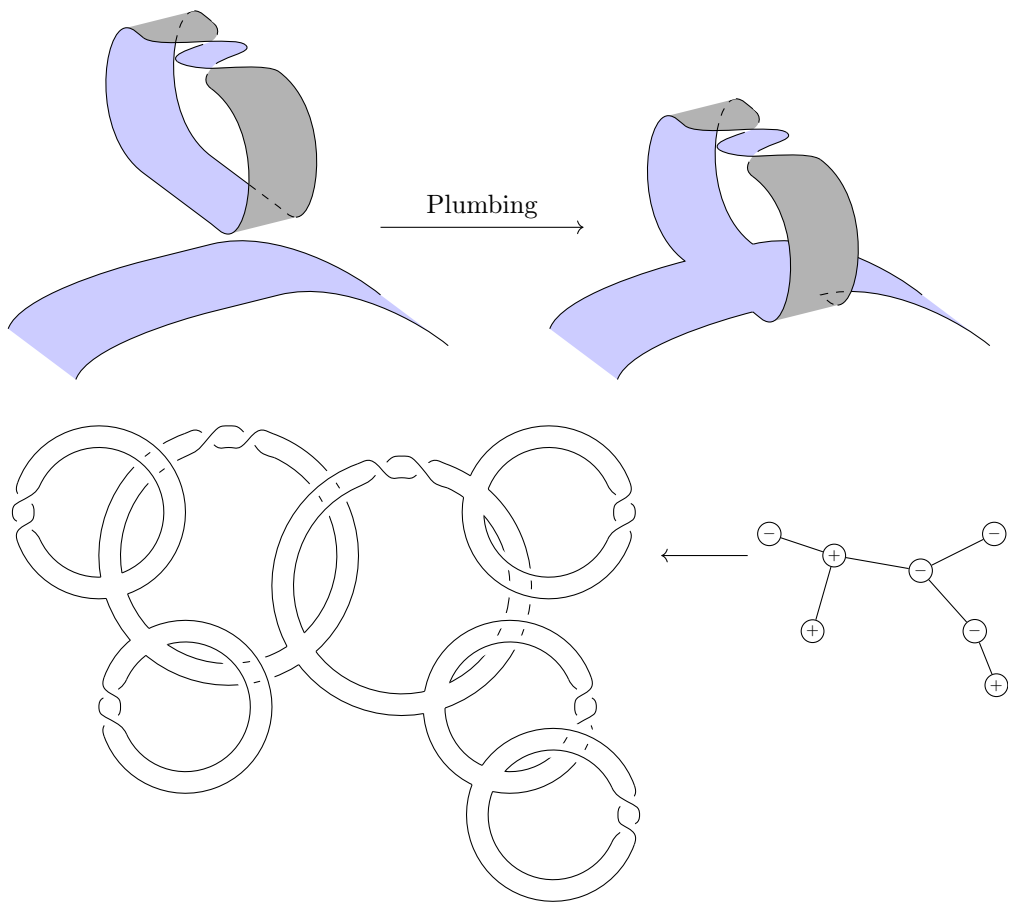
\begin{figure}[ht]
\begin{center}
\begin{tikzpicture}[scale = 0.9]
\fill [blue!20!white] (-2,-1) .. controls +(70:0.5) and \control{(0,0)}{(-166:0.5)} -- (1,0.25) .. controls +(14:1) and \control{(3.5,-0.5)}{(140:0.5)} -- (4.5,-1.25) .. controls +(140:0.5) and \control{(2,-0.5)}{(14:1)} -- (1,-0.75) .. controls +(-166:0.5) and \control{(-1,-1.75)}{(70:0.5)} -- cycle;
\draw (-2,-1) .. controls +(70:0.5) and \control{(0,0)}{(-166:0.5)} -- (1,0.25) .. controls +(14:1) and \control{(3.5,-0.5)}{(140:0.5)}; 
\draw [xshift = 1cm, yshift = -0.75cm] (-2,-1) .. controls +(70:0.5) and \control{(0,0)}{(-166:0.5)} -- (1,0.25) .. controls +(14:1) and \control{(3.5,-0.5)}{(140:0.5)};

\begin{scope}[yshift = 1.3cm]
\coordinate (h) at (1,1.95);
\coordinate (l) at (1, 1.55); 
\coordinate (i) at (0.5, 1.65);
\coordinate (e) at (1.5, 1.9);

\begin{scope}
\clip (h) .. controls +(180:0.2) and \control{(0,2)}{(-36.75:0.2)} .. controls +(143.25:0.1) and \control{(-0.25,2.15)}{(0:0.1)} .. controls +(180:0.35) and \control{(0,0)}{(143.25:1)} -- (1,-0.75) .. controls +(-36.75:0.1) and \control{(1.25,-0.9)}{(180:0.1)} .. controls +(0:0.35) and \control{(1,1.25)}{(-36.75:1)} -- cycle;
\fill [blue!20!white] (1,2.25) .. controls +(143.25:0.1) and \control{(0.75,2.4)}{(0:0.1)} .. controls +(180:0.35) and \control{(1,0.25)}{(143.25:1)} -- (2,-0.5) .. controls +(-36.75:0.1) and \control{(2.25,-0.65)}{(180:0.1)} -- ++(0, -1) -- ++(-3.5,0) -- ++(0,4.5) -- ++(2.5,0) -- cycle;
\end{scope}
\fill [blue!20!white] (h) .. controls +(-150:0.1) and \control{(i)}{(143.25:0.2)} .. controls +(-36.75:0.1) and \control{(l)}{(180:0.2)} .. controls +(60:0.2) and \control{(e)}{(-36.75:0.2)} .. controls +(143.25:0.1) and \control{(h)}{(0:0.2)};
\fill [white!40!gray] (-0.25,2.15) .. controls +(0:0.1) and \control{(0,2)}{(143.25:0.1)} .. controls +(-36.75:0.2) and \control{(h)}{(180:0.2)} .. controls +(30:0.1) and \control{(1,2.25)}{(-36.75:0.2)} .. controls +(143.25:0.1) and \control{(0.75,2.4)}{(0:0.1)}-- cycle;
\fill [white!40!gray] (1,1.25) .. controls +(143.25:0.1) and \control{(l)}{(-130:0.2)} .. controls +(180:0.2) and \control{(2,1.5)}{(143.25:0.2)} .. controls +(-36.75:1) and \control{(2.25,-0.65)}{(0:0.35)} -- (1.25,-0.9) .. controls +(0:0.35) and \control{(1,1.25)}{(-36.75:1)};

\begin{scope}
\clip (-0.25,2.15) .. controls +(0:0.1) and \control{(0,2)}{(143.25:0.1)} .. controls +(-36.75:0.2) and \control{(h)}{(180:0.2)} .. controls +(30:0.1) and \control{(1,2.25)}{(-36.75:0.2)} .. controls +(143.25:0.1) and \control{(0.75,2.4)}{(0:0.1)}-- cycle;
\draw [dashed] (0.75,2.4) .. controls +(180:0.35) and \control{(1,0.25)}{(143.25:1)};
\end{scope}
\begin{scope}
\clip (1,1.25) .. controls +(143.25:0.1) and \control{(l)}{(-130:0.2)} .. controls +(180:0.2) and \control{(2,1.5)}{(143.25:0.2)} .. controls +(-36.75:1) and \control{(2.25,-0.65)}{(0:0.35)} -- (1.25,-0.9) .. controls +(0:0.35) and \control{(1,1.25)}{(-36.75:1)};
\draw [dashed] (1,0.25) -- (2,-0.5) .. controls +(-36.75:0.1) and \control{(2.25,-0.65)}{(180:0.1)};
\end{scope}
\begin{scope}
\clip (h) .. controls +(180:0.2) and \control{(0,2)}{(-36.75:0.2)} .. controls +(143.25:0.1) and \control{(-0.25,2.15)}{(0:0.1)} .. controls +(180:0.35) and \control{(0,0)}{(143.25:1)} -- (1,-0.75) .. controls +(-36.75:0.1) and \control{(1.25,-0.9)}{(180:0.1)} .. controls +(0:0.35) and \control{(1,1.25)}{(-36.75:1)} -- cycle;
\draw (0.75,2.4) .. controls +(180:0.35) and \control{(1,0.25)}{(143.25:1)} -- (2,-0.5);
\end{scope}
\draw (1,2.25) .. controls +(143.25:0.1) and \control{(0.75,2.4)}{(0:0.1)};
\draw (2.25,-0.65) .. controls +(0:0.35) and \control{(2,1.5)}{(-36.75:1)};
\draw (0,2) .. controls +(143.25:0.1) and \control{(-0.25,2.15)}{(0:0.1)} .. controls +(180:0.35) and \control{(0,0)}{(143.25:1)} -- (1,-0.75) .. controls +(-36.75:0.1) and \control{(1.25,-0.9)}{(180:0.1)} .. controls +(0:0.35) and \control{(1,1.25)}{(-36.75:1)} ;
\begin{scope}[even odd rule] 
\clip (0, 2.25) rectangle (3, 1) (l) circle (0.1cm);
\draw (1,1.25) .. controls +(143.25:0.1) and \control{(l)}{(-130:0.2)} .. controls +(60:0.2) and \control{(e)}{(-36.75:0.2)};
\end{scope}
\draw (e) .. controls +(143.25:0.1) and \control{(h)}{(0:0.2)} .. controls +(180:0.2) and \control{(0,2)}{(-36.75:0.2)};
\begin{scope}[even odd rule] 
\clip (0,2) -- (1,1.25) -- (2,1.5) -- (1,2.25) -- (0,2) (h) circle (0.13cm);
\draw (1,2.25) .. controls +(-36.75:0.2) and \control{(h)}{(30:0.1)} .. controls +(-150:0.1) and \control{(i)}{(143.25:0.2)};
\end{scope}
\draw (i) .. controls +(-36.75:0.1) and \control{(l)}{(180:0.2)} .. controls +(0:0.2) and \control{(2,1.5)}{(143.25:0.2)};
\end{scope}

\draw [->] (3.5, 0.5) -- (6.5, 0.5) node [midway, above] {Plumbing};

\begin{scope}[xshift = 8cm]
\fill [blue!20!white] (-2,-1) .. controls +(70:0.5) and \control{(0,0)}{(-166:0.5)} -- (1,0.25) .. controls +(14:1) and \control{(3.5,-0.5)}{(140:0.5)} -- (4.5,-1.25) .. controls +(140:0.5) and \control{(2,-0.5)}{(14:1)} -- (1,-0.75) .. controls +(-166:0.5) and \control{(-1,-1.75)}{(70:0.5)} -- cycle;
\draw (-2,-1) .. controls +(70:0.5) and \control{(0,0)}{(-166:0.5)} -- (1,0.25) .. controls +(14:1) and \control{(3.5,-0.5)}{(140:0.5)}; 
\draw [xshift = 1cm, yshift = -0.75cm] (-2,-1) .. controls +(70:0.5) and \control{(0,0)}{(-166:0.5)} -- (1,0.25) .. controls +(14:1) and \control{(3.5,-0.5)}{(140:0.5)};

\coordinate (h) at (1,1.95);
\coordinate (l) at (1, 1.55); 
\coordinate (i) at (0.5, 1.65);
\coordinate (e) at (1.5, 1.9);

\begin{scope}
\clip (h) .. controls +(180:0.2) and \control{(0,2)}{(-36.75:0.2)} .. controls +(143.25:0.1) and \control{(-0.25,2.15)}{(0:0.1)} .. controls +(180:0.35) and \control{(0,0)}{(143.25:1)} -- (1,-0.75) .. controls +(-36.75:0.1) and \control{(1.25,-0.9)}{(180:0.1)} .. controls +(0:0.35) and \control{(1,1.25)}{(-36.75:1)} -- cycle;
\fill [blue!20!white] (1,2.25) .. controls +(143.25:0.1) and \control{(0.75,2.4)}{(0:0.1)} .. controls +(180:0.35) and \control{(1,0.25)}{(143.25:1)} -- (2,-0.5) .. controls +(-36.75:0.1) and \control{(2.25,-0.65)}{(180:0.1)} -- ++(0, -1) -- ++(-3.5,0) -- ++(0,4.5) -- ++(2.5,0) -- cycle;
\end{scope}
\fill [blue!20!white] (h) .. controls +(-150:0.1) and \control{(i)}{(143.25:0.2)} .. controls +(-36.75:0.1) and \control{(l)}{(180:0.2)} .. controls +(60:0.2) and \control{(e)}{(-36.75:0.2)} .. controls +(143.25:0.1) and \control{(h)}{(0:0.2)};
\fill [white!40!gray] (-0.25,2.15) .. controls +(0:0.1) and \control{(0,2)}{(143.25:0.1)} .. controls +(-36.75:0.2) and \control{(h)}{(180:0.2)} .. controls +(30:0.1) and \control{(1,2.25)}{(-36.75:0.2)} .. controls +(143.25:0.1) and \control{(0.75,2.4)}{(0:0.1)}-- cycle;
\fill [white!40!gray] (1,1.25) .. controls +(143.25:0.1) and \control{(l)}{(-130:0.2)} .. controls +(180:0.2) and \control{(2,1.5)}{(143.25:0.2)} .. controls +(-36.75:1) and \control{(2.25,-0.65)}{(0:0.35)} -- (1.25,-0.9) .. controls +(0:0.35) and \control{(1,1.25)}{(-36.75:1)};

\begin{scope}
\clip (-0.25,2.15) .. controls +(0:0.1) and \control{(0,2)}{(143.25:0.1)} .. controls +(-36.75:0.2) and \control{(h)}{(180:0.2)} .. controls +(30:0.1) and \control{(1,2.25)}{(-36.75:0.2)} .. controls +(143.25:0.1) and \control{(0.75,2.4)}{(0:0.1)}-- cycle;
\draw [dashed] (0.75,2.4) .. controls +(180:0.35) and \control{(1,0.25)}{(143.25:1)};
\end{scope}
\begin{scope}
\clip (1,1.25) .. controls +(143.25:0.1) and \control{(l)}{(-130:0.2)} .. controls +(180:0.2) and \control{(2,1.5)}{(143.25:0.2)} .. controls +(-36.75:1) and \control{(2.25,-0.65)}{(0:0.35)} -- (1.25,-0.9) .. controls +(0:0.35) and \control{(1,1.25)}{(-36.75:1)};
\draw [dashed] (4.5, -1.25) .. controls +(140:0.5) and \control{(2,-0.5)}{(14:1)} .. controls +(-36.75:0.1) and \control{(2.25,-0.65)}{(180:0.1)};
\end{scope}
\begin{scope}
\clip (h) .. controls +(180:0.2) and \control{(0,2)}{(-36.75:0.2)} .. controls +(143.25:0.1) and \control{(-0.25,2.15)}{(0:0.1)} .. controls +(180:0.35) and \control{(0,0)}{(143.25:1)} -- (1,-0.75) .. controls +(-36.75:0.1) and \control{(1.25,-0.9)}{(180:0.1)} .. controls +(0:0.35) and \control{(1,1.25)}{(-36.75:1)} -- cycle;
\draw (0.75,2.4) .. controls +(180:0.35) and \control{(1,0.25)}{(143.25:1)};
\end{scope}
\draw (1,2.25) .. controls +(143.25:0.1) and \control{(0.75,2.4)}{(0:0.1)};
\draw (2.25,-0.65) .. controls +(0:0.35) and \control{(2,1.5)}{(-36.75:1)};
\draw (0,2) .. controls +(143.25:0.1) and \control{(-0.25,2.15)}{(0:0.1)} .. controls +(180:0.35) and \control{(0,0)}{(143.25:1)} (1,-0.75) .. controls +(-36.75:0.1) and \control{(1.25,-0.9)}{(180:0.1)} .. controls +(0:0.35) and \control{(1,1.25)}{(-36.75:1)} ;
\begin{scope}[even odd rule] 
\clip (0, 2.25) rectangle (3, 1) (l) circle (0.1cm);
\draw (1,1.25) .. controls +(143.25:0.1) and \control{(l)}{(-130:0.2)} .. controls +(60:0.2) and \control{(e)}{(-36.75:0.2)};
\end{scope}
\draw (e) .. controls +(143.25:0.1) and \control{(h)}{(0:0.2)} .. controls +(180:0.2) and \control{(0,2)}{(-36.75:0.2)};
\begin{scope}[even odd rule] 
\clip (0,2) -- (1,1.25) -- (2,1.5) -- (1,2.25) -- (0,2) (h) circle (0.13cm);
\draw (1,2.25) .. controls +(-36.75:0.2) and \control{(h)}{(30:0.1)} .. controls +(-150:0.1) and \control{(i)}{(143.25:0.2)};
\end{scope}
\draw (i) .. controls +(-36.75:0.1) and \control{(l)}{(180:0.2)} .. controls +(0:0.2) and \control{(2,1.5)}{(143.25:0.2)};
\end{scope}
\end{tikzpicture}
\begin{tikzpicture}[scale = 1.15]
\clip (-2.75,-4.25) rectangle (9,2);
\def\r{0} \def\xs{0} \def\ys{0}
\begin{scope}[even odd rule]
\clip (-2.75,-4.25) rectangle (9,2) [xshift =2cm, yshift = -0.35cm] (0:1.25) arc (0:-180:1.25) -- (-180:1.5) arc (-180:0:1.5) --cycle;
\clip (-2.75,-4.25) rectangle (9,2) [xshift =-1.5cm, yshift = 0.5cm] (0:0.75) arc (0:-180:0.75) -- (-180:1) arc (-180:0:1) --cycle;
\clip (-2.75,-4.25) rectangle (9,2) [xshift =-0.5cm, yshift = -1.75cm] (90:0.75) arc (90:180:0.75) -- (180:1) arc (180:90:1) --cycle;
\draw [xshift = \xs cm, yshift = \ys cm, rotate = \r] (115:1.25) arc (115:425:1.25);
\draw [xshift = \xs cm, yshift = \ys cm, rotate = \r] (110:1.5) arc (110:430:1.5);
\end{scope}
\begin{scope}[xshift = \xs cm, yshift = \ys cm, rotate = \r]
\coordinate (l) at (80:1.375); 
\coordinate (h) at (100:1.375); 
\coordinate (i) at (90:1.25); 
\coordinate (e) at (90:1.5);
\begin{scope}[even odd rule]
\clip (-1,0.5) rectangle (1,2) (l) circle (0.1cm);
\draw (115:1.25) .. controls +(25:0.2) and \control{(h)}{(-130:0.2)} .. controls +(50:0.2) and \control{(e)}{(180:0.1)} .. controls +(0:0.1) and \control{(l)}{(130:0.2)} .. controls +(-50:0.2) and \control{(65:1.25)}{(165:0.2)}; 
\end{scope}
\begin{scope}[even odd rule]
\clip (-1,0.5) rectangle (1,2) (h) circle (0.1cm);
\draw (110:1.5) .. controls +(20:0.1) and \control{(h)}{(130:0.2)} .. controls +(-50:0.2) and \control{(i)}{(180:0.1)} .. controls +(0:0.1) and \control{(l)}{(-130:0.2)} .. controls +(50:0.2) and \control{(70:1.5)}{(160:0.1)}; 
\end{scope}
\end{scope}

\def\r{0} \def\xs{2} \def\ys{-0.35}
\begin{scope}[even odd rule]
\clip (-2.75,-4.25) rectangle (9,2) [xshift =0cm, yshift = 0cm] (0:1.25) arc (0:-90:1.25) -- (-90:1.5) arc (-90:0:1.5) --cycle;
\clip (-2.75,-4.25) rectangle (9,2) [xshift =3.7cm, yshift = 0.5cm] (0:0.75) arc (0:190:0.75) -- (190:1) arc (190:0:1) --cycle;
\clip (-2.75,-4.25) rectangle (9,2) [xshift =3.25cm, yshift = -1.75cm] (150:0.75) arc (150:210:0.75) -- (210:1) arc (210:150:1) --cycle;
\draw [xshift = \xs cm, yshift = \ys cm, line width=0.2cm, white] (180:1.25) arc (180:90:1.25);
\draw [xshift = \xs cm, yshift = \ys cm, line width=0.2cm, white] (180:1.5) arc (180:90:1.5);
\draw [xshift = \xs cm, yshift = \ys cm, rotate = \r] (115:1.25) arc (115:425:1.25);
\draw [xshift = \xs cm, yshift = \ys cm, rotate = \r] (110:1.5) arc (110:430:1.5);
\end{scope}
\begin{scope}[xshift = \xs cm, yshift = \ys cm, rotate = \r]
\coordinate (l) at (80:1.375); 
\coordinate (h) at (100:1.375); 
\coordinate (i) at (90:1.25); 
\coordinate (e) at (90:1.5);
\begin{scope}[even odd rule]
\clip (-1,0.5) rectangle (1,2) (h) circle (0.1cm);
\draw (115:1.25) .. controls +(25:0.2) and \control{(h)}{(-130:0.2)} .. controls +(50:0.2) and \control{(e)}{(180:0.1)} .. controls +(0:0.1) and \control{(l)}{(130:0.2)} .. controls +(-50:0.2) and \control{(65:1.25)}{(165:0.2)}; 
\end{scope}
\begin{scope}[even odd rule]
\clip (-1,0.5) rectangle (1,2) (l) circle (0.1cm);
\draw (110:1.5) .. controls +(20:0.1) and \control{(h)}{(130:0.2)} .. controls +(-50:0.2) and \control{(i)}{(180:0.1)} .. controls +(0:0.1) and \control{(l)}{(-130:0.2)} .. controls +(50:0.2) and \control{(70:1.5)}{(160:0.1)}; 
\end{scope}
\end{scope}

\def\r{0} \def\xs{3.7} \def\ys{0.5}
\begin{scope}[even odd rule]
\clip (-2.75,-4.25) rectangle (9,2) [xshift =2cm, yshift = -0.35cm] (30:1.25) arc (30:180:1.25) -- (180:1.5) arc (180:30:1.5) --cycle;
\draw [xshift = \xs cm, yshift = \ys cm, line width=0.15cm, white] (-90:0.75) arc (-90:-140:0.75);
\draw [xshift = \xs cm, yshift = \ys cm, rotate = \r] (25:0.75) arc (25:335:0.75);
\draw [xshift = \xs cm, yshift = \ys cm, line width=0.2cm, white] (-90:1) arc (-90:-140:1);
\draw [xshift = \xs cm, yshift = \ys cm, rotate = \r] (20:1) arc (20:340:1);
\end{scope}
\begin{scope}[xshift = \xs cm, yshift = \ys cm, rotate = \r]
\coordinate (l) at (-10:0.875); 
\coordinate (h) at (10:0.875); 
\coordinate (i) at (0:0.75); 
\coordinate (e) at (0:1);
\begin{scope}[even odd rule]
\clip (-1.5,-1) rectangle (1.5,1) (h) circle (0.1cm);
\draw (25:0.75) .. controls +(-65:0.1) and \control{(h)}{(150:0.1)} .. controls +(-30:0.1) and \control{(e)}{(90:0.1)} .. controls +(-90:0.1) and \control{(l)}{(30:0.1)} .. controls +(-150:0.1) and \control{(-25:0.75)}{(65:0.1)}; 
\end{scope}
\begin{scope}[even odd rule]
\clip (-1.5,-1) rectangle (1.5,1)  (l) circle (0.1cm);
\draw (20:1) .. controls +(-70:0.1) and \control{(h)}{(30:0.1)} .. controls +(-150:0.1) and \control{(i)}{(90:0.1)} .. controls +(-90:0.1) and \control{(l)}{(150:0.1)} .. controls +(-30:0.1) and \control{(-20:1)}{(70:0.1)}; 
\end{scope}
\end{scope}

\def\r{0} \def\xs{3.25} \def\ys{-1.75}
\begin{scope}[even odd rule]
\clip (-2.75,-4.25) rectangle (9,2) [xshift =2cm, yshift = -0.35cm] (-60:1.25) arc (-60:-180:1.25) -- (-180:1.5) arc (-180:-60:1.5) --cycle;
\clip (-2.75,-4.25) rectangle (9,2) [xshift =3.75cm, yshift = -3cm] (130:0.75) arc (130:210:0.75) -- (210:1) arc (210:130:1) --cycle;
\draw [xshift = \xs cm, yshift = \ys cm, line width=0.2cm, white] (60:0.75) arc (60:120:0.75);
\draw [xshift = \xs cm, yshift = \ys cm, rotate = \r] (25:0.75) arc (25:335:0.75);
\draw [xshift = \xs cm, yshift = \ys cm, line width=0.2cm, white] (60:1) arc (60:120:1);
\draw [xshift = \xs cm, yshift = \ys cm, rotate = \r] (20:1) arc (20:340:1);
\end{scope}
\begin{scope}[xshift = \xs cm, yshift = \ys cm, rotate = \r]
\coordinate (l) at (-10:0.875); 
\coordinate (h) at (10:0.875); 
\coordinate (i) at (0:0.75); 
\coordinate (e) at (0:1);
\begin{scope}[even odd rule]
\clip (-1.5,-1) rectangle (1.5,1) (h) circle (0.1cm);
\draw (25:0.75) .. controls +(-65:0.1) and \control{(h)}{(150:0.1)} .. controls +(-30:0.1) and \control{(e)}{(90:0.1)} .. controls +(-90:0.1) and \control{(l)}{(30:0.1)} .. controls +(-150:0.1) and \control{(-25:0.75)}{(65:0.1)}; 
\end{scope}
\begin{scope}[even odd rule]
\clip (-1.5,-1) rectangle (1.5,1)  (l) circle (0.1cm);
\draw (20:1) .. controls +(-70:0.1) and \control{(h)}{(30:0.1)} .. controls +(-150:0.1) and \control{(i)}{(90:0.1)} .. controls +(-90:0.1) and \control{(l)}{(150:0.1)} .. controls +(-30:0.1) and \control{(-20:1)}{(70:0.1)}; 
\end{scope}
\end{scope}

\def\r{0} \def\xs{3.75} \def\ys{-3}
\begin{scope}[even odd rule]
\clip (-2.75,-4.25) rectangle (9,2) [xshift =3.25cm, yshift = -1.75cm] (150:0.75) arc (150:270:0.75) -- (270:1) arc (270:150:1) --cycle;
\draw [xshift = \xs cm, yshift = \ys cm, line width=0.2cm, white] (60:0.75) arc (60:120:0.75);
\draw [xshift = \xs cm, yshift = \ys cm, rotate = \r] (25:0.75) arc (25:335:0.75);
\draw [xshift = \xs cm, yshift = \ys cm, line width=0.2cm, white] (60:1) arc (60:120:1);
\draw [xshift = \xs cm, yshift = \ys cm, rotate = \r] (20:1) arc (20:340:1);
\end{scope}
\begin{scope}[xshift = \xs cm, yshift = \ys cm, rotate = \r]
\coordinate (l) at (-10:0.875); 
\coordinate (h) at (10:0.875); 
\coordinate (i) at (0:0.75); 
\coordinate (e) at (0:1);
\begin{scope}[even odd rule]
\clip (-1.5,-1) rectangle (1.5,1) (l) circle (0.1cm);
\draw (25:0.75) .. controls +(-65:0.1) and \control{(h)}{(150:0.1)} .. controls +(-30:0.1) and \control{(e)}{(90:0.1)} .. controls +(-90:0.1) and \control{(l)}{(30:0.1)} .. controls +(-150:0.1) and \control{(-25:0.75)}{(65:0.1)}; 
\end{scope}
\begin{scope}[even odd rule]
\clip (-1.5,-1) rectangle (1.5,1)  (h) circle (0.1cm);
\draw (20:1) .. controls +(-70:0.1) and \control{(h)}{(30:0.1)} .. controls +(-150:0.1) and \control{(i)}{(90:0.1)} .. controls +(-90:0.1) and \control{(l)}{(150:0.1)} .. controls +(-30:0.1) and \control{(-20:1)}{(70:0.1)}; 
\end{scope}
\end{scope}

\def\r{180} \def\xs{-1.5} \def\ys{0.5}
\begin{scope}[even odd rule]
\clip (-2.75,-4.25) rectangle (9,2) [xshift =0cm, yshift = 0cm] (150:1.25) arc (150:210:1.25) -- (210:1.5) arc (210:150:1.5) --cycle;
\draw [xshift = \xs cm, yshift = \ys cm, line width=0.2cm, white] (0:0.75) arc (0:90:0.75);
\draw [xshift = \xs cm, yshift = \ys cm, rotate = \r] (25:0.75) arc (25:335:0.75);
\draw [xshift = \xs cm, yshift = \ys cm, line width=0.2cm, white] (0:1) arc (0:90:1);
\draw [xshift = \xs cm, yshift = \ys cm, rotate = \r] (20:1) arc (20:340:1);
\end{scope}
\begin{scope}[xshift = \xs cm, yshift = \ys cm, rotate = \r]
\coordinate (l) at (-10:0.875); 
\coordinate (h) at (10:0.875); 
\coordinate (i) at (0:0.75); 
\coordinate (e) at (0:1);
\begin{scope}[even odd rule]
\clip (-1.5,-1) rectangle (1.5,1) (h) circle (0.1cm);
\draw (25:0.75) .. controls +(-65:0.1) and \control{(h)}{(150:0.1)} .. controls +(-30:0.1) and \control{(e)}{(90:0.1)} .. controls +(-90:0.1) and \control{(l)}{(30:0.1)} .. controls +(-150:0.1) and \control{(-25:0.75)}{(65:0.1)}; 
\end{scope}
\begin{scope}[even odd rule]
\clip (-1.5,-1) rectangle (1.5,1)  (l) circle (0.1cm);
\draw (20:1) .. controls +(-70:0.1) and \control{(h)}{(30:0.1)} .. controls +(-150:0.1) and \control{(i)}{(90:0.1)} .. controls +(-90:0.1) and \control{(l)}{(150:0.1)} .. controls +(-30:0.1) and \control{(-20:1)}{(70:0.1)}; 
\end{scope}
\end{scope}

\def\r{180} \def\xs{-0.5} \def\ys{-1.75}
\begin{scope}[even odd rule]
\clip (-2.75,-4.25) rectangle (9,2) [xshift =0cm, yshift = 0cm] (150:1.25) arc (150:250:1.25) -- (250:1.5) arc (250:150:1.5) --cycle;
\draw [xshift = \xs cm, yshift = \ys cm, line width=0.2cm, white] (0:0.75) arc (0:90:0.75);
\draw [xshift = \xs cm, yshift = \ys cm, rotate = \r] (25:0.75) arc (25:335:0.75);
\draw [xshift = \xs cm, yshift = \ys cm, line width=0.2cm, white] (0:1) arc (0:90:1);
\draw [xshift = \xs cm, yshift = \ys cm, rotate = \r] (20:1) arc (20:340:1);
\end{scope}
\begin{scope}[xshift = \xs cm, yshift = \ys cm, rotate = \r]
\coordinate (l) at (-10:0.875); 
\coordinate (h) at (10:0.875); 
\coordinate (i) at (0:0.75); 
\coordinate (e) at (0:1);
\begin{scope}[even odd rule]
\clip (-1.5,-1) rectangle (1.5,1) (l) circle (0.1cm);
\draw (25:0.75) .. controls +(-65:0.1) and \control{(h)}{(150:0.1)} .. controls +(-30:0.1) and \control{(e)}{(90:0.1)} .. controls +(-90:0.1) and \control{(l)}{(30:0.1)} .. controls +(-150:0.1) and \control{(-25:0.75)}{(65:0.1)}; 
\end{scope}
\begin{scope}[even odd rule]
\clip (-1.5,-1) rectangle (1.5,1)  (h) circle (0.1cm);
\draw (20:1) .. controls +(-70:0.1) and \control{(h)}{(30:0.1)} .. controls +(-150:0.1) and \control{(i)}{(90:0.1)} .. controls +(-90:0.1) and \control{(l)}{(150:0.1)} .. controls +(-30:0.1) and \control{(-20:1)}{(70:0.1)}; 
\end{scope}
\end{scope}

\draw [<-] (5,0) -- (6,0);

\begin{scope}[xshift = 7cm, yshift = 0, scale = 0.5]
\def\d{0.025}
\def\a{0.6}
\node (c1) [draw, inner sep = \d cm, circle] at (0,0) {{\tiny $+$}};
\node (c2) [draw, inner sep = \d cm, circle] at (2,-0.35) {{\tiny $-$}};
\node (c3) [draw, inner sep = \d cm, circle] at (3.7,0.5) {{\tiny $-$}}; 
\node (c4) [draw, inner sep = \d cm, circle] at (3.25,-1.75) {{\tiny $-$}}; 
\node (c5) [draw, inner sep = \d cm, circle] at (3.75,-3) {{\tiny $+$}};
\node (c6) [draw, inner sep = \d cm, circle] at (-1.5,0.5) {{\tiny $-$}};
\node (c7) [draw, inner sep = \d cm, circle] at (-0.5,-1.75) {{\tiny $+$}};
\draw (c1) -- (c2) (c1) -- (c6) (c1) -- (c7) (c3) -- (c2) (c2) -- (c4) (c4) -- (c5);
\end{scope}
\end{tikzpicture}
\end{center}
  \caption{Top: A positive Hopf band and a plumbing. Bottom: An Hopf arborescent link and an associated planar tree.}
  \label{F:intro}
\end{figure}

\subparagraph*{Our results.} The goal of this paper is to investigate the structure of a particular class of links, which we call \emphdef{Hopf arborescent links}, in order to prove the decidability of some of their $4$-dimensional invariants. 
This family of links is informally defined as follows (we refer to Section~\ref{S:prelim} for more precise definitions). 
A \emphdef{Hopf band} is the surface pictured in Figure~\ref{F:intro} (top left), it can be either \emphdef{positive} or \emphdef{negative} depending on how it twists. 
If they are unlinked, i.e., there exists a sphere separating them, two Hopf bands can be \emphdef{plumbed} together by identifying a square in one to a square in the other, as pictured in Figure~\ref{F:intro} (top). 
The class of \emphdef{Hopf arborescent links} is the class of links arising as the boundary of some iterated tree-like sequence of plumbings of Hopf bands. 
Hopf arborescent links can naturally be described using labelled trees, see Figure~\ref{F:intro} (bottom), and are a subfamily of the more general arborescent links~\cite{bonahon1979new,Gabai_Genus_arborescent}. 
The (topological or smooth) \emphdef{genus defect} of a knot is defined as $\Delta g(K)=g(K)-g_4(K)$. 
It measures how much its $4$-genus differs from its classical genus, and this definition can be extended to oriented links by considering surfaces having that oriented link as their boundary. 
Our main result is the following:

\begin{theorem}\label{T:main}
For any fixed $k$, deciding whether an Hopf arborescent link $L$ has genus defect at most~$k$ is decidable. 
This holds both in the topological and smooth categories.
\end{theorem}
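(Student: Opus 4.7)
The proof plan follows a Robertson--Seymour style approach: establish that Hopf arborescent links form a well-quasi-order (WQO) under a suitable minor relation $\preceq$, observe that the genus defect is monotone in that order, and conclude decidability by comparing against the finite obstruction set produced by the WQO.

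I first define $\preceq$ by setting $L' \preceq L$ whenever the plumbed Seifert surface of $L'$ is contained (up to isotopy) in that of $L$, which combinatorially corresponds to $L'$ being obtained from $L$ by pruning bands from its decorated tree. The main structural result to prove is that $\preceq$ is a well-quasi-order on Hopf arborescent links. A first attempt would invoke Kruskal's tree theorem directly, using the fact that such links are encoded by planar trees labelled with signs; however, the correct notion of subtree embedding that matches surface containment is more rigid than the usual one, so the argument likely requires a refined tree-WQO argument (a labelled, planar, and possibly gap-conditioned variant of Kruskal's theorem). This step is the main technical obstacle.

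Next I establish monotonicity: $L' \preceq L$ implies $\Delta g(L') \leq \Delta g(L)$. Writing $S$ and $S'$ for the plumbed surfaces associated to $L$ and $L'$, Gabai's theorem gives $g(L) = g(S)$ and $g(L') = g(S')$, and inclusion of surfaces gives $g(L') \leq g(L)$. For the four-dimensional genus, the subsurface $S \setminus \mathrm{int}(S')$ is a cobordism in $\Sp^3$ between $L'$ and $L$ whose genus equals $g(L) - g(L')$ (up to routine bookkeeping of components). Pushing this cobordism into $\B^4$ and gluing it to any optimal four-dimensional surface for $L'$ produces a surface for $L$, so $g_4(L) \leq g_4(L') + g(L) - g(L')$; rearranging yields $\Delta g(L') \leq \Delta g(L)$. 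Since the construction is purely combinatorial and does not rely on smoothness, the same inequality holds in both the topological locally flat and the smooth categories.

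Combining the two steps, the class $\mathcal{F}_k = \{L : \Delta g(L) > k\}$ is upward-closed under $\preceq$, hence by the WQO property admits a finite (but not explicitly known) set of $\preceq$-minimal elements $L_1, \ldots, L_N$. Deciding whether $\Delta g(L) \leq k$ then reduces to checking that no $L_i$ satisfies $L_i \preceq L$. This last test is decidable because the minor relation on Hopf arborescent links reduces to a decorated-tree embedding query on finite labelled trees, which is a standard decidable combinatorial problem. The algorithm is inherently non-constructive since the obstruction list $\{L_i\}$ cannot in general be computed explicitly from the WQO argument.
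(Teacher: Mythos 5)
Your outline reproduces the paper's strategy (tree-encoded minor relation, Kruskal, monotonicity of the defect via a cut-and-paste cobordism, finite obstruction set), and your monotonicity argument is essentially the paper's Lemma~\ref{lem_defect} and Proposition~\ref{prop_stable_defect}. However, the two places where you are vague are exactly where the real content lies, and as stated they are gaps. First, the bridge between the tree order and the surfaces. You describe surface containment as ``combinatorially corresponding to pruning bands from the decorated tree'' and then defer the WQO-compatibility to a hypothetical ``refined, gap-conditioned Kruskal theorem.'' Neither half works as stated: a pruning-only (leaf-deletion) order on trees is \emph{not} a well-quasi-order (subdivided caterpillar-type families give infinite antichains), so path contraction must be allowed, and once it is allowed the combinatorial order is just the standard labelled plane Kruskal order --- no refined tree theorem is needed. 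What \emph{is} needed, and what you do not supply, is the geometric implication that a homeomorphic embedding of trees forces containment as an incompressible subsurface (Proposition~\ref{P:mainminor}). The delicate case is contracting a degree-two vertex: geometrically one cuts along a diagonal of the plumbing square, which merges two Hopf bands and introduces an extra twist, and one must check that this twist cancels so that the merged band is again a single Hopf band; this is precisely what the paper's orientation convention on the cores is designed to guarantee (Lemma~\ref{lem:minor3}). Note also that only this one implication is needed; no converse ``tree order $=$ surface containment'' is claimed or required.

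Second, the final decidability step. If your relation $\preceq$ is literally surface containment up to isotopy, then testing $L_i \preceq L$ is not known to be decidable (the paper stresses that even isotopy of embedded surfaces is out of reach in general), so you must work with the tree-level relation. But then ``a decorated-tree embedding query'' is not yet an algorithm on links: the input is a link, a given Hopf arborescent link may be represented by several distinct labelled plane trees, and the link-minor relation quantifies existentially over tree representations on both sides. Correctness therefore requires computing the \emph{full} finite set of trees representing the input link, and the paper obtains this by enumerating candidate trees and testing whether their boundaries are isotopic to the given link, using the decidability of link equivalence (Theorem~\ref{th_link_equiv}, Lemmas~\ref{lem_find_tree} and~\ref{lem_lminor_test}). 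This nontrivial ingredient is absent from your plan; without it the obstruction test could miss an embedding realized by a different tree encoding of the same link and wrongly accept.
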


The proof of Theorem~\ref{T:main} is not constructive. It is obtained as a corollary of another result, which establishes a well-behaved minor theory for Hopf arborescent links. 
A subsurface~$\Sigma' \subseteq \Sigma$ of a Seifert surface $\Sigma$ is \emphdef{incompressible} if the complement $\Sigma \setminus \Sigma'$ has no open disc component. 
Given two Seifert surfaces $\Sigma_1$ and $\Sigma_2$ in $\R^3$, we say that $\Sigma_1$ is a \emphdef{surface-minor} of~$\Sigma_2$ (or \emphdef{minor} for short), denoted by $\Sigma_1 \preccurlyeq \Sigma_2$, if $\Sigma_1$ is isotopic to an incompressible subsurface of~$\Sigma_2$. 
This minor relation was introduced by Baader~\cite{baader2014positive} (see also~\cite{baader2012minor,baader2021minor}) with the goal of characterising those links that are closure of positive braids and whose signature is equal to twice their genus.
The underlying question, still open today, is whether the canonical Seifert surfaces associated to positive braid closures form a well-quasi-order. 
A particularity of Hopf arborescent links is that they are \emphdef{fibred} (see the definition in Section~\ref{S:prelim}), which implies that to each link is associated a canonical Seifert surface (see Theorem~\ref{th_fibred_surface_unique}), that we call a \emphdef{Hopf arborescent surface}. 
The notion of surface-minor naturally implies a minor relation for Hopf arborescent links.
Our second result proves that Hopf arborescent surfaces are well-quasi-ordered under surface-minors. 

\begin{theorem}\label{T:wqo}
The minor relation $\preccurlyeq$ is a well-quasi-order for the set of Hopf arborescent surfaces, that is, for any infinite sequence $(\Sigma_n)_{n \in \mathbb{N}}$ of Hopf arborescent surfaces, there exists $i<j$ in $\mathbb{N}$ such that $\Sigma_i \preccurlyeq \Sigma_j$.
\end{theorem}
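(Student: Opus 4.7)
The plan is to encode each Hopf arborescent surface $\Sigma$ by a finite tree $T(\Sigma)$ with vertex labels in the two-element set $\{+,-\}$, recording the Hopf bands and plumbings used to build $\Sigma$, and then deduce the well-quasi-ordering from Kruskal's tree theorem applied to these signed trees.

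First I would make the combinatorial encoding precise: to each Hopf arborescent surface $\Sigma$, associate the signed tree obtained from its plumbing construction, with one vertex per Hopf band labeled by its sign and one edge per plumbing. A priori a single surface admits many plumbing decompositions, so some care is needed to show that $T(\Sigma)$ is well-defined up to isomorphism of signed trees. Here the fact that Hopf arborescent links are fibred, together with the uniqueness of the fibre surface (Theorem~\ref{th_fibred_surface_unique} in the paper), should give canonicity of $T(\Sigma)$, so that the assignment $\Sigma \mapsto T(\Sigma)$ is a well-defined bijection between Hopf arborescent surfaces and signed trees up to isomorphism.

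The key geometric step is to translate the surface-minor relation into a combinatorial embedding on signed trees. Concretely I would prove the implication: if there exists a label-preserving, meet-preserving injective map $T(\Sigma_1) \hookrightarrow T(\Sigma_2)$ in the Kruskal sense, then $\Sigma_1 \preccurlyeq \Sigma_2$. The natural approach is to realise such an embedding by first restricting $\Sigma_2$ to the subsurface obtained by keeping only the Hopf bands in the image of the embedding, together with the chain of intermediate bands along each branch, and then showing that this subsurface is incompressible in $\Sigma_2$ and isotopic to $\Sigma_1$. The isotopy step reduces to the observation that in an Hopf band plumbing, removing a sequence of consecutive bands joining two branch points and re-plumbing the endpoints yields an incompressible subsurface whose signed tree is exactly the contracted one; this is essentially a local move on plumbings, and variants of it appear in Baader's work on surface minors.

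Finally I would invoke Kruskal's tree theorem: finite trees with vertex labels in a well-quasi-order are themselves well-quasi-ordered by label- and meet-preserving embedding. Since $\{+,-\}$ is trivially a well-quasi-order (indeed finite), any infinite sequence $(T(\Sigma_n))$ admits $i<j$ such that $T(\Sigma_i)$ embeds into $T(\Sigma_j)$, whence $\Sigma_i \preccurlyeq \Sigma_j$ by the previous step. The main obstacle I expect is the middle step, and more precisely the treatment of Kruskal-style homeomorphic embeddings that contract paths in the tree: one must verify that the surface obtained by deleting an interior chain of Hopf bands in a plumbing is still an incompressible subsurface of the original, regardless of the signs encountered along that chain. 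This local analysis of plumbings is the technical heart of the argument and is where the geometry really enters.
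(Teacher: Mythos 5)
Your overall strategy is the one the paper follows (encode surfaces by signed trees, show tree-embedding implies surface-minor, conclude by Kruskal), but as written the proposal has two genuine gaps. First, the encoding. You claim that $\Sigma\mapsto T(\Sigma)$ is canonical, justified by fibredness and the uniqueness of fibre surfaces (Theorem~\ref{th_fibred_surface_unique}); that theorem says the \emph{link} determines the surface, not that the surface determines the tree, and in fact several trees can yield the same surface, so there is no such bijection. This by itself would be harmless (for the argument one only needs to \emph{choose} one tree per surface), but you also drop the plane structure and invoke the unordered Kruskal theorem. The construction of $\Sigma(T)$ depends on the planar embedding of $T$ (the cyclic order of plumbings along each core), and in general different plane embeddings of the same abstract signed tree need not give isotopic surfaces. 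Consequently, an abstract-tree embedding $T(\Sigma_1)\hookrightarrow T(\Sigma_2)$ only produces an incompressible subsurface of $\Sigma_2$ built on \emph{some} plane structure of the contracted tree, and without your (false) bijection you cannot conclude that this subsurface is isotopic to $\Sigma_1$. The fix is to work with labelled \emph{plane} trees and the ordered version of Kruskal's theorem throughout, which is what the paper does.

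Second, and more seriously, the step you yourself identify as the heart of the matter --- that deleting an interior chain of Hopf bands (i.e.\ contracting a path of the tree) yields an incompressible subsurface realising the contracted tree, ``regardless of the signs encountered along that chain'' --- is asserted but not proved, and the naive picture of ``removing the bands and re-plumbing'' does not work as stated. Merging two Hopf bands by cutting through a plumbing square produces a band with four crossings, not a Hopf band, unless the cut itself contributes a compensating twist; one must cut along a well-chosen diagonal of the plumbing rectangle, track the sign of the extra twist this introduces, and handle separately the same-sign and opposite-sign cases (in the latter the cut along the other diagonal creates an orientation mismatch that must be cancelled by a convention on how cores are oriented at plumbings). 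This is exactly the content of Lemma~\ref{lem:minor3} and the orientation rule in the construction of $\Sigma(T)$, and Remark~\ref{R:minors} stresses how fragile it is (it fails for bands with more twists even though the tree combinatorics is unchanged). A citation to ``a local move on plumbings'' in Baader's work does not supply this analysis, so your proposal leaves the essential geometric lemma unestablished.
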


The idea behind the proof of Theorem~\ref{T:wqo} is to study a specific subset of the possible surface-minors that interacts nicely with an encoding of Hopf arborescent surfaces via labelled plane trees. 
We can then leverage the celebrated Kruskal tree theorem~\cite{Kruskal_tree_minor} to prove that the minor relation is a well-quasi-order. The connection from Theorem~\ref{T:wqo} to Theorem~\ref{T:main} follows from the fact the genus defect is minor-monotone, i.e., if $\Sigma_1 \preccurlyeq \Sigma_2$ are Seifert surfaces of minimal genus for links $K_1$ and $K_2$, then $\Delta g(K_1) \leq \Delta g(K_2)$. This is not a new observation (see~\cite[Lemma~6]{baader2018topological}), we provide a proof in Proposition~\ref{prop_stable_defect} for completeness. Therefore, for Hopf arborescent links, having genus defect at most $k$ is characterized by a finite number of forbidden minors, and the algorithm of Theorem~\ref{T:main} proceeds by checking those. However, testing whether a surface is a minor of another one seems to be a very hard problem: even testing whether two tori are isotopic is already as hard a knot equivalence and an algorithm for genus two surfaces was only very recently found~\cite{baroni}. This problem is circumvented thanks to our restriction of the minor relation to one that is well-tailored to the arborescent structure of our links, which allows us to work entirely at the level of trees. In particular, Theorem~\ref{T:main} does not strictly follow from Theorem~\ref{T:wqo} but rather from its proof (see Proposition~\ref{prop_link_wqo}).

\smallskip

While the algorithms behind Theorem~\ref{T:main} are not explicit, we would like to offer three reasons to motivate our results. 
First, Theorem~\ref{T:main} proves that the corresponding problems are \emph{not} undecidable, which is significant in the landscape of 4-dimensional topology. 
Second, this kind of existential algorithmic result has been a strong guiding light in algorithm design in the past decades: for a vast family of graph problems, the fact that an algorithm merely exists follows from Robertson-Seymour theory, and this has provided a strong impetus to actually look for explicit algorithms and optimise their complexity. This has been particularly influential in parameterized algorithms, we refer for example to the discussion in Chapter 6.3 in the book on parameterized algorithms~\cite{cygan2015parameterized}, where it is conjectured that a result like our Theorem~\ref{T:main} precludes W[1]-hardness. Similarly, we are hopeful that our results can inspire future work aiming at developing explicit algorithms in 4-dimensional topology. Additionally, our framework directly proves that any property that is stable with respect to our link-minor relation (see Section~\ref{ss:minors})  is decidable on the class of Hopf arborescent links.
Finally, while it is certainly not the case that minor-based approaches can encompass the entirety of knot theory, it is fruitful to delineate exactly the classes which they can illuminate. In that respect, we find it interesting that our proof of Theorem~\ref{T:wqo} strongly relies on the structure of Hopf bands and does not seem to generalize to the wider family of arborescent knots, even when one bounds the number of twists in each band (see Remark~\ref{R:minors}).

\subparagraph*{Related work.} It was observed by Baader and Dehornoy~\cite{baader2012minor} that the natural Seifert surfaces for another class of knots, the positive braid knots with bounded braid index (we refer to the papers for the relevant definitions) also form a well-quasi-order. 
Furthermore, Liechti~\cite{liechti2020genus} proved that even without bounding the braid index, the set of positive braid knots of bounded genus defect is characterized by a finite number of forbidden minors. 
Since the minor relation in that setting simply amounts to removing letters in the braid presentation, this readily yields decidability as in our Theorem~\ref{T:main}. 
While the two results are incomparable, we emphasize that our result also applies to links, and also features negative crossings (coming from negative Hopf bands): this extends the impact of our result to the smooth category, while for strongly quasipositive knots (and thus positive braid knots), the smooth defect is zero since the smooth $4$-genus and the classical genus coincide~\cite{rudolph1993}.

All the knots and links we consider, as well as those considered by various authors in the context of surface-minor theory, are fibred (see again the definition in Section~\ref{S:prelim}). 
This property is important as it brings control on the classical genus of the links. 
Also it is easy to construct infinite families of incomparable surfaces when dropping this assumption: the set $(A_n)_{n\in\Z}$ of those unknotted annuli in~$\Sp^3$ with $n$ twists forms an infinite antichain. 
In this direction, an optimistic conjecture would be that the collection of all fibred surfaces in~$\Sp^3$ is a well-quasi-ordered set. 
If true, that would provide a strong generalisation of Theorem~\ref{T:wqo}. 
However no strategy of proof is known to the authors for such a statement.

Also, it follows from a result of Giroux and Goodman~\cite{giroux2006stable} that \emph{any} fibred link can be obtained from the unknot from a sequence of plumbings and deplumbings (a natural reverse operation to plumbing) of Hopf bands. 
While these (de)plumbings might not have the arborescent structure that characterizes ours, this shows that Hopf bands can be considered as basic building blocks for a wide class of three-dimensional objects.

\subparagraph*{Organization of this paper.} After providing background and going through basic concepts of knot theory, fibred surfaces and well-quasi-orders in Section~\ref{S:prelim}, we focus on Hopf arborescent links in Section~\ref{sec_Hopflink}. There, we define a precise construction of these links from plane trees, investigate this class of links, and prove Theorem~\ref{T:wqo}. In Section~\ref{sec_decidability}, we prove Theorem~\ref{T:main}. Finally, in Section~\ref{sec_defect}, we first provide examples of Hopf arborescent links with non zero defect, and then explain how to combine them to obtain examples with arbitrarily large defect.

\section{Preliminaries}\label{S:prelim}

\subparagraph*{Knot theory.} We only recall the definitions that are critical to this paper, and refer to the textbooks of Burde and Zieschang~\cite{burde2002knots} or Rolfsen~\cite{Rolfsen_Knots} for a more general introduction to knot theory, and to Teichner~\cite{teichner} for a beginner-friendly introduction to its 4-dimensional aspects. 
A \emphdef{knot}, respectively a \emphdef{link}, is a polygonal embedding of the circle $\Sp^1$, respectively of a disjoint union of circles, into $\Sp^3$. 
Every knot and every component of a link inherits an orientation from the orientation of~$\Sp^1$. For algorithmic purposes, we assume that an input link is given as a link diagram, i.e., a directed plane $4$-valent graph with decorations at vertices indicating which strands are going over and under. Since our article focuses on decidability problems, switching to a different input (e.g., polygonal curves in $\R^3$) makes no difference. 
A \emphdef{Seifert surface} for a link $L$ is a compact connected oriented surface~$\Sigma$ embedded in $\Sp^3$ such that the oriented boundary of $\Sigma$ is $L$. 
Throughout this article, we consider knots, links, and surfaces embedded in $\Sp^3$ up to isotopies (continuous deformations without self-crossings). 
The \emphdef{(classical) genus} of a link~$L$, denoted by $g(L)$, is the smallest possible genus of a Seifert surface for $L$.

The 3-dimensional sphere can be seen as the boundary of the 4-dimensional ball~$\B^4$. Being embedded in $\Sp^3$, a knot or a link can also be obtained as the boundary of surfaces embedded in~$\B^4$. However, any knot $K$ in $\Sp^3$ can be used as a base that tapers to a point, the apex, inside~$B^4$ to define a cone that bounds $K$. Hence, any knot in $\Sp^3$ bounds a topological disc in $\B^4$. This motivates the following definition: a surface $\Sigma$ embedded in $\B^4$ is \emphdef{locally flat} if for each point $x \in \Sigma$, there is a neighbourhood $U$ in $\Sigma$ and a neighbourhood $V$ in $\B^4$ such that the pair $(U,V)$ is homeomorphic to the standard $(\mathring{\B^2},\mathring{\B^4})$. In the coning construction above, the latter condition is not satisfied at the apex, where the boundary of a disc is the knot instead of a standard $\mathbb{S}^1$.
The \emphdef{topological} (respectively \emphdef{smooth}) \emphdef{$4$-dimensional genus}, or simply \emphdef{$4$-genus} of a link $L$, denoted by\footnote{Our notation is intentionally ambiguous with respect to smooth or topological genus, because all our arguments will apply equally well in both categories.} $g_4(L)$ is the smallest possible genus of a compact connected oriented surface that is locally flat (respectively smoothly) embedded in $\B^4$, and that has $L$ as its boundary.

A knot is \emphdef{topologically} (resp. \emphdef{smoothly}) \emphdef{slice} if it bounds a locally flat (resp. a smooth) disc in $\B^4$. 
The \emphdef{(topological or smooth) defect} of a link $L$ is the quantity $\Delta g(L)=g(L)-g_4(L)$, where $g_4(L)$ denotes the topological or smooth $4$-genus of $L$.

A \emphdef{positive}, resp. \emphdef{negative, Hopf band} is an unknotted annulus with a positive, resp. negative, full twist, as pictured in Figure~\ref{pic_def_hopf_band}. 
A \emphdef{Hopf link} is the boundary of a Hopf band. 
Note that the two components of a positive Hopf link have linking number $+1$, while the two components of a negative Hopf link have linking number~$-1$. 
A Hopf band naturally retracts to a trivial knot, which we call its \emphdef{core}.

\begin{figure}[H]
\begin{center}
\begin{tikzpicture}
\coordinate (cl) at (0,0);
\coordinate (ch) at (0,1);
\coordinate (clw) at (-120:0.5);
\coordinate (cle) at (-60:0.5);
\coordinate (chw) at ($(120:0.5)+(0,1)$);
\coordinate (che) at ($(60:0.5)+(0,1)$);
\coordinate (bll) at (-0.75,-1.5);
\coordinate (blh) at (-0.75,-1);
\coordinate (bhh) at (-0.75,2.5);
\coordinate (bhl) at (-0.75,2);
\node (l) [circle, inner sep = 2pt, fill, white] at (cl) {};
\node (h) [circle, inner sep = 2pt, fill, white] at (ch) {};

\fill [blue, opacity = 0.2] (cl) -- (l.north east) .. controls +(45:0.35) and \control{(h.south east)}{(-45:0.35)} -- (ch) -- (h.north east) .. controls +(45:0.2) and \control{(che)}{(-90:0.1)} .. controls +(90:0.4) and \control{(bhh)}{(0.8,0)} .. controls +(-1.5,0) and \control{(bll)}{(-1.5,0)} .. controls +(0.8,0) and \control{(cle)}{(-90:0.4)} .. controls +(90:0.1) and \control{(l.south east)}{(-45:0.35)} -- cycle;
\fill [white!40!gray] (cl) -- (l.north east) .. controls +(45:0.35) and \control{(h.south east)}{(-45:0.35)} -- (ch) -- (h.north west) .. controls +(135:0.2) and \control{(chw)}{(-90:0.1)} .. controls +(90:0.2) and \control{(bhl)}{(0.4,0)} .. controls +(-0.9,0) and \control{(blh)}{(-0.9,0)} .. controls +(0.4,0) and \control{(clw)}{(-90:0.2)} .. controls +(90:0.1) and \control{(l.south west)}{(-135:0.35)} -- cycle;
\fill [white] (cl) -- (l.north west) .. controls +(135:0.35) and \control{(h.south west)}{(-135:0.35)} -- (ch) -- (h.north west) .. controls +(135:0.2) and \control{(chw)}{(-90:0.1)} .. controls +(90:0.2) and \control{(bhl)}{(0.4,0)} .. controls +(-0.9,0) and \control{(blh)}{(-0.9,0)} .. controls +(0.4,0) and \control{(clw)}{(-90:0.2)} .. controls +(90:0.1) and \control{(l.south west)}{(-135:0.35)} -- cycle;

\begin{scope}[thick]
\draw (l.north east) .. controls +(45:0.35) and \control{(h.south east)}{(-45:0.35)} node [pos = 0.8, sloped] {{\tiny $>$}};
\draw (l.north west) .. controls +(135:0.35) and \control{(h.south west)}{(-135:0.35)};
\draw (l.south east) .. controls +(-45:0.2) and \control{(cle)}{(90:0.1)};
\draw (l.south west) .. controls +(-135:0.2) and \control{(clw)}{(90:0.1)};
\draw (blh) .. controls +(0.4,0) and \control{(clw)}{(-90:0.2)};
\draw (bll) .. controls +(0.8,0) and \control{(cle)}{(-90:0.4)}; 
\draw (h.north east) .. controls +(45:0.2) and \control{(che)}{(-90:0.1)} node [pos = 0.3, sloped] {{\tiny $>$}};
\draw (h.north west) .. controls +(135:0.2) and \control{(chw)}{(-90:0.1)};
\draw (bhl) .. controls +(0.4,0) and \control{(chw)}{(90:0.2)};
\draw (bhh) .. controls +(0.8,0) and \control{(che)}{(90:0.4)};
\draw (blh) .. controls +(-0.9,0) and \control{(bhl)}{(-0.9,0)};  
\draw (bll) .. controls +(-1.5,0) and \control{(bhh)}{(-1.5,0)};
\draw (l.south east) -- (l.north west);
\draw (h.south east) -- (h.north west);

\begin{scope}[xshift = -0.025cm,yshift = 1.65cm, orange]
\draw (-0.15,0) arc (-180:135:0.15) -- ++(90:0.1);
\draw ($(0,0)+(135:0.15)$) -- ++(0:0.1);
\end{scope}

\end{scope}

\begin{scope}[xshift = -4.5cm]
\coordinate (cl) at (0,0);
\coordinate (ch) at (0,1);
\coordinate (clw) at (-120:0.5);
\coordinate (cle) at (-60:0.5);
\coordinate (chw) at ($(120:0.5)+(0,1)$);
\coordinate (che) at ($(60:0.5)+(0,1)$);
\coordinate (bll) at (-0.75,-1.5);
\coordinate (blh) at (-0.75,-1);
\coordinate (bhh) at (-0.75,2.5);
\coordinate (bhl) at (-0.75,2);
\node (l) [circle, inner sep = 2pt, fill, white] at (cl) {};
\node (h) [circle, inner sep = 2pt, fill, white] at (ch) {};

\fill [blue, opacity = 0.2] (cl) -- (l.north east) .. controls +(45:0.35) and \control{(h.south east)}{(-45:0.35)} -- (ch) -- (h.north east) .. controls +(45:0.2) and \control{(che)}{(-90:0.1)} .. controls +(90:0.4) and \control{(bhh)}{(0.8,0)} .. controls +(-1.5,0) and \control{(bll)}{(-1.5,0)} .. controls +(0.8,0) and \control{(cle)}{(-90:0.4)} .. controls +(90:0.1) and \control{(l.south east)}{(-45:0.35)} -- cycle;
\fill [white!40!gray] (cl) -- (l.north east) .. controls +(45:0.35) and \control{(h.south east)}{(-45:0.35)} -- (ch) -- (h.north west) .. controls +(135:0.2) and \control{(chw)}{(-90:0.1)} .. controls +(90:0.2) and \control{(bhl)}{(0.4,0)} .. controls +(-0.9,0) and \control{(blh)}{(-0.9,0)} .. controls +(0.4,0) and \control{(clw)}{(-90:0.2)} .. controls +(90:0.1) and \control{(l.south west)}{(-135:0.35)} -- cycle;
\fill [white] (cl) -- (l.north west) .. controls +(135:0.35) and \control{(h.south west)}{(-135:0.35)} -- (ch) -- (h.north west) .. controls +(135:0.2) and \control{(chw)}{(-90:0.1)} .. controls +(90:0.2) and \control{(bhl)}{(0.4,0)} .. controls +(-0.9,0) and \control{(blh)}{(-0.9,0)} .. controls +(0.4,0) and \control{(clw)}{(-90:0.2)} .. controls +(90:0.1) and \control{(l.south west)}{(-135:0.35)} -- cycle;

\begin{scope}[thick]
\draw (l.north east) .. controls +(45:0.35) and \control{(h.south east)}{(-45:0.35)} node [pos = 0.8, sloped] {{\tiny $>$}};
\draw (l.north west) .. controls +(135:0.35) and \control{(h.south west)}{(-135:0.35)};
\draw (l.south east) .. controls +(-45:0.2) and \control{(cle)}{(90:0.1)};
\draw (l.south west) .. controls +(-135:0.2) and \control{(clw)}{(90:0.1)};
\draw (blh) .. controls +(0.4,0) and \control{(clw)}{(-90:0.2)};
\draw (bll) .. controls +(0.8,0) and \control{(cle)}{(-90:0.4)}; 
\draw (h.north east) .. controls +(45:0.2) and \control{(che)}{(-90:0.1)}node [pos = 0.3, sloped] {{\tiny $>$}};
\draw (h.north west) .. controls +(135:0.2) and \control{(chw)}{(-90:0.1)};
\draw (bhl) .. controls +(0.4,0) and \control{(chw)}{(90:0.2)};
\draw (bhh) .. controls +(0.8,0) and \control{(che)}{(90:0.4)};
\draw (blh) .. controls +(-0.9,0) and \control{(bhl)}{(-0.9,0)};  
\draw (bll) .. controls +(-1.5,0) and \control{(bhh)}{(-1.5,0)};
\draw (l.south west) -- (l.north east);
\draw (h.south west) -- (h.north east);

\begin{scope}[xshift = -0.025cm,yshift = 1.65cm, orange]
\draw (-0.15,0) arc (-180:135:0.15) -- ++(90:0.1);
\draw ($(0,0)+(135:0.15)$) -- ++(0:0.1);
\end{scope}
\end{scope}
\end{scope}

\begin{scope}[xshift = 3cm]
\coordinate (cl) at (0,0);
\coordinate (ch) at (0,1);
\coordinate (clw) at (-120:0.5);
\coordinate (cle) at (-60:0.5);
\coordinate (chw) at ($(120:0.5)+(0,1)$);
\coordinate (che) at ($(60:0.5)+(0,1)$);
\coordinate (bll) at (-0.75,-1.5);
\coordinate (blh) at (-0.75,-1);
\coordinate (bhh) at (-0.75,2.5);
\coordinate (bhl) at (-0.75,2);
\node (l) [circle, inner sep = 2pt, fill, white] at (cl) {};
\node (h) [circle, inner sep = 2pt, fill, white] at (ch) {};

\fill [blue, opacity = 0.2] (cl) -- (l.north east) .. controls +(45:0.35) and \control{(h.south east)}{(-45:0.35)} -- (ch) -- (h.north east) .. controls +(45:0.2) and \control{(che)}{(-90:0.1)} .. controls +(90:0.4) and \control{(bhh)}{(0.8,0)} .. controls +(-1.5,0) and \control{(bll)}{(-1.5,0)} .. controls +(0.8,0) and \control{(cle)}{(-90:0.4)} .. controls +(90:0.1) and \control{(l.south east)}{(-45:0.35)} -- cycle;
\fill [white!40!gray] (cl) -- (l.north east) .. controls +(45:0.35) and \control{(h.south east)}{(-45:0.35)} -- (ch) -- (h.north west) .. controls +(135:0.2) and \control{(chw)}{(-90:0.1)} .. controls +(90:0.2) and \control{(bhl)}{(0.4,0)} .. controls +(-0.9,0) and \control{(blh)}{(-0.9,0)} .. controls +(0.4,0) and \control{(clw)}{(-90:0.2)} .. controls +(90:0.1) and \control{(l.south west)}{(-135:0.35)} -- cycle;
\fill [white] (cl) -- (l.north west) .. controls +(135:0.35) and \control{(h.south west)}{(-135:0.35)} -- (ch) -- (h.north west) .. controls +(135:0.2) and \control{(chw)}{(-90:0.1)} .. controls +(90:0.2) and \control{(bhl)}{(0.4,0)} .. controls +(-0.9,0) and \control{(blh)}{(-0.9,0)} .. controls +(0.4,0) and \control{(clw)}{(-90:0.2)} .. controls +(90:0.1) and \control{(l.south west)}{(-135:0.35)} -- cycle;

\begin{scope}[purple]
\draw (0,1) .. controls +(90:1) and \control{($(bhh)!0.5!(bhl)$)}{(0:0.4)} .. controls + (180:1.2) and \control{($(blh)!0.5!(bll)$)}{(-180:1.2)} .. controls +(0:0.4) and \control{(0,0)}{(-90:1)} -- cycle;
\end{scope}

\begin{scope}[thick]
\draw (l.north east) .. controls +(45:0.35) and \control{(h.south east)}{(-45:0.35)};
\draw (l.north west) .. controls +(135:0.35) and \control{(h.south west)}{(-135:0.35)};
\draw (l.south east) .. controls +(-45:0.2) and \control{(cle)}{(90:0.1)};
\draw (l.south west) .. controls +(-135:0.2) and \control{(clw)}{(90:0.1)};
\draw (blh) .. controls +(0.4,0) and \control{(clw)}{(-90:0.2)};
\draw (bll) .. controls +(0.8,0) and \control{(cle)}{(-90:0.4)}; 
\draw (h.north east) .. controls +(45:0.2) and \control{(che)}{(-90:0.1)};
\draw (h.north west) .. controls +(135:0.2) and \control{(chw)}{(-90:0.1)};
\draw (bhl) .. controls +(0.4,0) and \control{(chw)}{(90:0.2)};
\draw (bhh) .. controls +(0.8,0) and \control{(che)}{(90:0.4)};
\draw (blh) .. controls +(-0.9,0) and \control{(bhl)}{(-0.9,0)};  
\draw (bll) .. controls +(-1.5,0) and \control{(bhh)}{(-1.5,0)};
\draw (l.south east) -- (l.north west);
\draw (h.south east) -- (h.north west);
\end{scope}
\end{scope}
\end{tikzpicture}
\caption{\label{pic_def_hopf_band} A negative Hopf band on the left and a positive one on the right with its core in red.}
\end{center}
\end{figure}
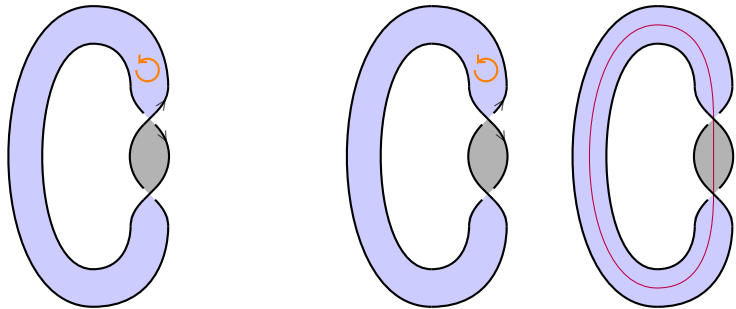

A link is \emphdef{fibred} if the complement $\Sp^3 \setminus L$ \emphdef{fibres} over $\Sp^1$, that is, if there exists a one-parameter continuous family of Seifert surfaces $(\Sigma_t)_{t\in \Sp^1}$ for $L$ which are disjoint except for their boundaries, and whose interiors together foliate  $\Sp^3 \setminus L$. 
These are called \emphdef{fibres} or \emphdef{fibre surfaces} (note that they are all isotopic by definition).

A positive, resp. negative, Hopf link is fibred, with fibre the positive, resp. negative, Hopf band. 
Indeed, seeing $\Sp^3$ as the unit sphere~$\{(z_1, z_2)\,|\,|z_1|^2+|z_2|^2=1\}$ in~$\C^2$, a positive Hopf link is given by the equation $z_1z_2=0$. 
For every argument~$\theta\in\Sp^1$, the equation $\arg(z_1z_2)=\theta$ describes a Seifert surface bounded by the Hopf link, and the collection of these surfaces describes the desired fibration.  

It is a folklore result that goes back at least to Stallings~\cite{Stallings_fibred} that a fibre surface of a fibred link is a Seifert surface of minimal (classical) genus, and moreover this surface is unique up to isotopy (see next paragraph).
So, for fibred links (and all links in this paper will be fibred), it makes sense to speak of the \emphdef{canonical} Seifert surface, by which we mean the unique (up to isotopy) fibre surface for that link.

\subparagraph*{Fibre surface, minimal genus, and uniqueness.}
Here, we comment on the following statements, which are considered as folklore in the knot theoretical community. 
The first one is fundamental for this paper since it brings control over the 3-genus of Hopf arborescent links. 
The second one justifies the notion of \emph{canonical} Seifert surface for fibred links. 

\begin{theorem}\label{th_fibred_surface_minimise}
Fibred surfaces minimise the 3-genus over surfaces with the same oriented boundary. 
\end{theorem}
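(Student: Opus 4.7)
The strategy is to exploit the fibration of $\Sp^3\setminus L$ to bound the complexity of any competing Seifert surface, combining an elementary compression argument with an appeal to Thurston's norm theory. Let $\phi\colon\Sp^3\setminus N(L)\to\Sp^1$ be the fibration of the link exterior with fibre $\Sigma$, and let $\Sigma'$ be any Seifert surface with $\partial\Sigma'=L$. Since $\Sigma$ and $\Sigma'$ have the same oriented boundary, they share the same number of boundary components, so $\chi(\Sigma')\leq\chi(\Sigma)$ is equivalent to $g(\Sigma)\leq g(\Sigma')$, which is what I would prove.

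After an isotopy, I would arrange $\Sigma'$ so that near $L$ it agrees with a collar of $\Sigma$, and outside $N(L)$ it is transverse to every fibre of $\phi$. Pick a regular fibre $\Sigma_t$ and look at the intersection $C:=\Sigma_t\cap\Sigma'$, a finite disjoint union of simple closed curves. An innermost curve of $C$ that bounds a disc in $\Sigma_t$ provides a compression disc for $\Sigma'$ disjoint from $\partial\Sigma'$; surgering $\Sigma'$ along this disc yields an oriented surface $\Sigma''$ with the same boundary, with $\chi(\Sigma'')\geq\chi(\Sigma')$, and with strictly fewer intersection curves with $\Sigma_t$. Discarding any sphere components that arise (this can only increase $\chi$), I iterate until every intersection curve with a regular fibre is essential in that fibre.

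At that point the elementary argument runs out, and I would invoke Thurston's theorem that the fibre of a fibration is a minimizer of the Thurston semi-norm in its relative homology class in $H_2(\Sp^3\setminus N(L),\partial N(L);\mathbb{Z})$. Since $[\Sigma]=[\Sigma']$ in this group (both are rational Seifert classes of the oriented link $L$), and since after the reductions above neither surface contains disc or sphere components, the norm inequality $\chi_-(\Sigma)\leq\chi_-(\Sigma')$ is exactly $\chi(\Sigma)\geq\chi(\Sigma')$, which finishes the proof.

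The main obstacle is the essential-intersection case, which is what the Thurston-norm step handles for us. A self-contained treatment without invoking the norm theorem would have to exploit the mapping-torus structure of $\Sp^3\setminus N(L)$: pair the closed $1$-form $d\phi$ with $\Sigma'$ to recover its degree and argue that essential intersection circles can be pushed along the monodromy flow to further reduce $|C|$ without decreasing $\chi$. This is essentially Thurston's original degree-plus-Euler-characteristic computation, and I view bundling that argument into a single invocation of the norm-minimization theorem as the cleanest route.
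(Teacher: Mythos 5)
Your proposal is correct in substance, but it takes a genuinely different route from the paper. You outsource the key step to Thurston's theorem that the fibre of a fibration over $\Sp^1$ minimises the Thurston norm in its class in $\H_2(\Sp^3\setminus N(L),\partial N(L);\Z)$; since every Seifert surface of the oriented link represents the same class, the inequality $\chi_-(\Sigma)\le\chi_-(\Sigma')$ translates into $g(\Sigma)\le g(\Sigma')$ because both surfaces are connected with the same number of boundary components (one should add a word for the degenerate case where $\Sigma'$ is a disc, since $\chi_-$ ignores discs, but then $L$ is the unknot and the conclusion is immediate). The paper instead gives a semi-elementary, self-contained argument: it constructs the infinite cyclic cover $\hat N_L$ by cutting along a fibre, uses the fibration to identify this cover with $S\times\R$, lifts an arbitrary Seifert surface $S'$, and shows that the projection of a lift onto $S$ is a degree-one map, hence surjective on $\H_1$ (via cup products and the fundamental class), which forces $2g(S')+k-1\ge 2g(S)+k-1$. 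So your route buys brevity at the price of a much heavier input -- Thurston's norm-minimisation theorem is essentially the sharpest known form of the statement being proven -- whereas the paper's proof needs only covering spaces, duality and cup products. Two smaller points: your innermost-disc preamble is essentially redundant once you invoke the norm theorem (a Seifert surface is connected by definition, so sphere components never occur, and the only disc case is the unknot), and the parenthetical claim that discarding sphere components ``can only increase $\chi$'' is backwards -- discarding a sphere lowers $\chi$ by $2$; it is the combined operation of surgering and then discarding the resulting sphere that never decreases $\chi$.
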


\begin{theorem}\label{th_fibred_surface_unique}
Two fibred surfaces with the same oriented boundary are isotopic, relatively to the boundary. 
\end{theorem}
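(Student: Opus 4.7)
The plan is to leverage the fibration $p_1 : \Sp^3 \setminus L \to \Sp^1$ induced by the first fibred surface $\Sigma_1$, and to show that the second fibred surface $\Sigma_2$ can be isotoped, relative to $L$, to coincide with a fibre of $p_1$. Since every fibre of $p_1$ is isotopic to $\Sigma_1$ by definition of a fibration, this will yield the desired isotopy between $\Sigma_1$ and $\Sigma_2$ rel $L$.

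First, I would put $\Sigma_2$ in general position with respect to $p_1$ so that the restriction $p_1|_{\Sigma_2 \setminus L} : \Sigma_2 \setminus L \to \Sp^1$ is an $\Sp^1$-valued Morse function. Near $L$, both $\Sigma_1$ and $\Sigma_2$ meet each meridional disc transversely in one oriented arc per component (since they share the same oriented boundary), so a preliminary isotopy of $\Sigma_2$ in a tubular neighbourhood of $L$ can be arranged to make $p_1$ agree with the angular coordinate there, guaranteeing no critical points near $L$. In the interior, the critical points of $p_1|_{\Sigma_2 \setminus L}$ are then a finite set of saddles.

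The core of the argument is to eliminate these saddles via isotopies of $\Sigma_2$ rel $L$. For a regular value $t \in \Sp^1$, the intersection $\Sigma_2 \cap \Sigma_1^t$ with the fibre $\Sigma_1^t := \overline{p_1^{-1}(t)}$ is a compact $1$-manifold properly embedded in $\Sigma_1^t$, which undergoes a band surgery at each critical level of $p_1|_{\Sigma_2}$. By Theorem~\ref{th_fibred_surface_minimise}, both $\Sigma_1^t$ and $\Sigma_2$ are genus-minimizing in $\Sp^3 \setminus L$, hence incompressible. An innermost disc / outermost arc argument then allows one to remove, via an isotopy of $\Sigma_2$, any intersection circle bounding a disc in $\Sigma_1^t$ and any trivial arc. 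Iterating this simplification, one reaches a configuration where $p_1|_{\Sigma_2 \setminus L}$ is a submersion onto $\Sp^1$; at that point, $\Sigma_2$ is a disjoint union of fibres of $p_1$, and by connectedness it is a single fibre, hence isotopic to $\Sigma_1$ rel $L$.

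The main obstacle is the careful Morse-theoretic elimination of the saddle critical points, where one must control how the band surgeries interact with the curve structure on the fibres $\Sigma_1^t$; this is exactly the step that uses the incompressibility of both surfaces, which is why the minimality-of-genus input from Theorem~\ref{th_fibred_surface_minimise} is indispensable. An equivalent and more algebraic phrasing would instead observe that $\Sigma_1$ and $\Sigma_2$ represent the same Poincaré--Lefschetz dual class in $H^1(\Sp^3 \setminus L; \Z)$ (since they share the same oriented boundary) and then invoke the classical fact that two fibrations of a compact $3$-manifold over $\Sp^1$ in the same integral cohomology class are isotopic, but the underlying geometric content is the same.
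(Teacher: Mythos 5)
Your overall strategy (use the fibration coming from $\Sigma_1$ and the incompressibility of $\Sigma_2$ supplied by Theorem~\ref{th_fibred_surface_minimise}) is in the right spirit, but the endgame of your main argument is genuinely wrong. First, the stated terminal configuration is self-contradictory: if $p_1|_{\Sigma_2\setminus L}$ is a submersion onto $\Sp^1$, then $\Sigma_2$ is transverse to every fibre, which is the opposite of being a union of fibres --- a fibre is a level set, on which $p_1$ is \emph{constant}, not regular. Second, the simplification you propose cannot terminate in a critical-point-free position anyway: after truncating to the link exterior, $\partial\Sigma_2$ is a union of longitudes, on which $p_1$ has degree $0$, so (after putting the boundary into fibres) the circle-valued Morse function $p_1|_{\Sigma_2}$ must have critical points whose signed count is $\chi(\Sigma_2)=2-2g-k$; whenever $\chi(\Sigma_2)<0$ at least $|\chi(\Sigma_2)|$ saddles persist no matter how many inessential intersection circles and trivial arcs you cancel. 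In other words, the innermost-disc/outermost-arc cleanup only gets you to a position where all intersections with a fibre are essential; to conclude that $\Sigma_2$ can be pushed into a single fibre you still need a substantive product-structure argument. This is exactly the missing step: the standard way to supply it (and the route the paper takes, following Whitten) is to pass to the infinite cyclic cover $\hat N_L\cong \Sigma_1\times\R$, note that a lift of $\Sigma_2$ is incompressible because $\Sigma_2$ has minimal genus, and invoke Waldhausen's Proposition~3.1 on incompressible surfaces in products to isotope the lift onto $\Sigma_1\times\{0\}$, then project back. Equivalently one can cut the exterior along one fibre and run Waldhausen's argument in $\Sigma_1\times[0,1]$; either way, that is the content your sketch omits.

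Your alternative ``algebraic'' phrasing --- that two fibrations of the exterior over $\Sp^1$ in the same cohomology class have isotopic fibres --- is a true classical fact, but it is essentially the statement being proved: its standard proofs go through the same Stallings--Waldhausen machinery, so invoking it without a precise reference is circular in this context, and one must also check that it yields the isotopy \emph{relative to the boundary} claimed in the theorem. If you want to salvage your approach, either cite that uniqueness theorem properly, or replace the ``remove all saddles'' endgame by the cut-along-a-fibre (or infinite cyclic cover) argument sketched above.
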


Theorem~\ref{th_fibred_surface_minimise} is often presented as a consequence of Stallings' Theorem~\cite{Stallings_fibred}. 
Here we present a semi-elementary proof (for which we claim no novelty), in the sense that it only uses basic notions and results from homology theory.

\begin{proof}[Proof of Theorem~\ref{th_fibred_surface_minimise}]
Consider $L=(L_1\sqcup\dots\sqcup L_k)$ an oriented link in~$\Sp^3$. 
We consider the 3-manifold~$N_L$ which is the complement of an open tubular neighbourhood of~$L$ in~$\Sp^3$. 
It has $k$ boundary components, which are all tori. 

Note the following isomorphisms, from elementary algebraic topology:
by excision~\cite[p119]{Hatcher_Algebraic_Topology}, one has~$\H_2(\Sp^3, L; \Z)\simeq\H_2(N_L, \partial N_L; \Z)$; 
in the long exact sequence $\dots\to\H_2(\Sp^3; \Z)\to\H_2(\Sp^3, L; \Z)\to \H_1(L; \Z)\to  \H_1(\Sp^3; \Z)\to\dots$ the first and last term are trivial so the boundary map gives an isomorphism $\partial: \H_2(\Sp^3, L; \Z)\to \H_1(L; \Z)$; 
and by Alexander duality~\cite[p254]{Hatcher_Algebraic_Topology}, $\H^1(\Sp^3\setminus L; \Z)$ is isomorphic to~$\H_1(L; \Z)\simeq \Z^k$. 

The second isomorphism states that a class in $\H_2(\Sp^3, L; \Z)$ is described by its boundary. 
That means that, when we restrict our attention to oriented Seifert surfaces for~$L$, i.e., surfaces~$S$ so that $\partial S=L$, all such surfaces lie in the class $\partial^{-1}(1, \dots, 1)$. 
In particular they are homologous.  
Given two such surfaces, one can consider their restrictions~$S_1, S_2$ to~$N_L$, where they are also homologous. 
Their boundaries are curves on the tori~$\partial N_L$, which therefore have the same homology class. 
With an isotopy it is thus possible to make $S_1, S_2$ parallel and disjoint in a neighbourhood of~$\partial N_L$.

The class of the Seifert surfaces bounded by~$L$ is dual to the class $\lk(\cdot, L_1)+ \dots+ \lk(\cdot, L_k)\in\H^1(\Sp^3\setminus L; \Z)$.
We denote the latter by~$\ell_L$.
We then consider the infinite cyclic covering~$\hat N_L\to N_L$ associated to $\ell_L$. 
This is the covering associated to the morphism~$\ell_L:\pi_1(\Sp^3\setminus L)\to \Z$, so that $\pi_1(\hat N_L)=\ker(\ell_L)$. 

One way to construct $\hat N_L$ is to consider an arbitrary Seifert surface~$S$ for~$L$, cut~$\hat N_L$ along~$S$, thus obtaining a 3-manifold with boundary~$N_{L,S}$. 
This boundary consists of three parts: two copies of~$S$ that we denote by $S^+$ and $S^-$ and call the ``horizontal part'' of the boundary, and a ``vertical part'' coming from~$\partial N_L$ which consists of $k$ annuli (or sutures) connecting $S^+$ to~$S^-$. 
Now consider $\Z$ copies $(N_{L,S}^n)_{n\in\Z}$ of~$N_{L,S}$. 
Call $S_n^+$ and $S_n^-$ the horizontal part of the boundaries, and for every $n\in\N$ glue $S_n^+$ to~$S_{n+1}^-$. 
The resulting 3-manifold~$\cup_{n\in\Z} N_{L, S}^n$ has a natural projection to~$N_L$ and the loops in~$N_L$ that lift to closed loops in~$\hat N_L$ are exactly those in the kernel of~$\ell_L$. 
Therefore we indeed constructed~$\hat N_L$.

Now suppose that $\Sp^3\setminus L$ fibres over the circle with fibres having $L$ as oriented boundary. 
Then the previous construction can be made by choosing for the surface~$S$ a fibre. 
In this case~$N_{L,S}$ is the product manifold~$S\times[0,1]$, and $\hat N_L$ is homeomorphic to~$S\times \R$. 

Consider now an arbitrary Seifert surface~$S'$ for~$L$ with oriented boundary~$L$. 
Then $S'$ is homologous to~$S$, and in particular it is also dual to~$\ell_L$. 
So~$S'$ lifts in~$\hat N_L$ as $\Z$ parallel copies $(S'_n)_{n\in\Z}$. 
Take any such component~$S'_0$. 
Since it lies in $\hat N_L\simeq S\times\R$, the projection on the first coordinate induces a map $f:S'_0\to S$. 
Looking at the neighbourhood of the common boundary~$\partial S=L=\partial S'_0$ we see that $f$ has degree $1$. 
Degree 1 maps induce surjections in homology, which implies the desired inequality.  
More concretely one can argue as follows: if there was a class $a\in\H^1(S; \Z)$ in the kernel of $f^*$, then there would be a class $b$ such that the cup-product $a\smile b$ is the fundamental class~$[S]\in\H^2(S; \Z)$, and therefore one would have $f^*([S])=f^*(a)\smile f^*(b)=0\neq [S'_0]$, a contradiction. 
Therefore there is linear injection from $\H^1(S; \Z)\simeq \Z^{2g(S)+k-1}$ into $\H_1(S'_0; \Z)\simeq \Z^{2g(S'_0)+k-1}$. 
This implies $g(S)\le g(S'_0)$, and proves the statement. 
\end{proof}

Theorem~\ref{th_fibred_surface_unique} is sometimes also attributed to Stallings, but we did not find a corresponding statement in Stallings' articles. 
However a proof is proposed by Whitten~\cite{Whitten}, 
which relies on the fundamental work of Waldhausen~\cite{Waldhausen}. 
Indeed if $S_1$ is a fibre surface for a link~$L$, then the infinite cyclic covering $\hat N_L$ constructed in the proof of Theorem~\ref{th_fibred_surface_minimise} is homeomorphic to~$S_1\times\R$. 
If $S_2$ is a Seifert surface for~$L$ that is also a fibred surface, then Theorem~\ref{th_fibred_surface_minimise} implies that it is of minimal genus. 
Therefore any of its lifts in~$\hat N_L$ is incompressible (for otherwise it would not minimize the genus).
One can pick a lift $S_2'$ that is disjoint of~$S_1\times\{0\}$ and assume it lives in $S_1\times(0,N)$ for some large enough~$N$. 
Proposition 3.1 of Waldhausen~\cite{Waldhausen} then implies that $S_2'$ is isotopic to~$S_1\times\{0\}$. 
Projecting back in~$N_L$ yields an isotopy from~$S_2$ to~$S_1$.
The proof of~\cite[Proposition 3.1]{Waldhausen} seems too long to be detailed here. 
We only mention that it works by an induction on the complexity of the surface~$S_1$ and by cutting it along curves until it is a disc.

\subparagraph*{Well-quasi-orders.} We refer to Diestel~\cite[Chapter~12]{Diestel_Graph_Theory} for an introduction to well-quasi-orders and graph minor theory. An order $\preccurlyeq$ on a set $X$ is said to be a \emphdef{well-quasi-order} if for every infinite sequence $(x_n)_{n \in \N}$ there exist $i,j \in \N$ such that $i < j$ and $x_i \preccurlyeq x_j$. Equivalently, $\preccurlyeq$ is a well-quasi-order if it is well-founded and has no infinite \emphdef{antichain} that is, no infinite sequence $(x_n)_{n\in \N}$ such that no two elements of $(x_n)$ are comparable for $\preccurlyeq$.
A property $P$ is said to be stable for an order $\preccurlyeq$ if for any $x$ satisfying $P$ and $y \preccurlyeq x$, then $y$ satisfies $P$. It is well-known that if $\preccurlyeq$ is a well-quasi-order and $P$ is a property that is stable for $\preccurlyeq$, then there exists a finite family~$\mathcal{F}$ of elements of $X$ such that $x \in X$ satisfies $P$ if and only if there is no $f$ in $\mathcal{F}$ such that $f \preccurlyeq x$.  The family $\mathcal{F}$ is called a family of \emphdef{excluded minors} for the property $P$.
Therefore, if a parameter $p:X \rightarrow \N$ is monotone with respect to $\preccurlyeq$, i.e. $x \preccurlyeq y$ implies $p(x) \leq p(y)$, then for each $k \in \N$ the property $p(\cdot) < k$ is stable for $\preccurlyeq$ and hence is characterized by such finite family of excluded minors for $X$.

In this paper, a \emphdef{plane tree} is a rooted tree where each vertex $v$ has a label $\ell(v)$ from an alphabet $A$, and the tree is provided with the combinatorial data of an embedding in the plane: each vertex is additionally given a permutation recording the ordering of the edges to its children. 
The root induces an orientation on the tree: every edge $\ens{u,v}$ is directed from $u$ to $v$, written $u \to v$, when $u$ is closer to the root of the tree than $v$ (i.e., edges go toward the leaves), we refer to the trees of Figure~\ref{pic_3D_arborescent_link} for examples. 
In this paper, the alphabet $A$ is~$\{+, -\}$ endowed with the empty ordering $\leq$. 

A plane tree $T_1$ has a \emphdef{homeomorphic embedding into $T_2$}, written $T_1 \hookrightarrow T_2$, if $T_1$ can be obtained from $T_2$ by iteratively (i) removing a leaf and its adjacent edge, (ii) removing a root with a single child, its adjacent edge and rerooting at that child, (iii) reducing labels (with respect to $\leq$) and (iv) contracting paths into edges while preserving the labels of the endpoints, where all these operations must be consistent with the plane embedding. Throughout this paper, the relation $\leq$ will be trivial, so that (iii) will never apply. Notice that this definition extends the notion of \emph{topological minor} on graphs to a setting where vertices are labelled. In particular, it is more restrictive than the notion of \emph{minor}, which allows to contract any edge to a point: in our case we can only contract paths to at least one edge. This property turns out to be critical in order to make our proofs work. The famous theorem of Kruskal~\cite{Kruskal_tree_minor} (see Nash-Williams~\cite{nash-williams_kruskal} for a simpler proof) shows that this order is a well-quasi-order on the set of labelled plane trees.

\begin{theorem}[\cite{Kruskal_tree_minor,nash-williams_kruskal}]\label{T:kruskal}
The homeomorphic embedding on the set of labelled plane trees labelled by a well-quasi-order forms a well-quasi-order.
\end{theorem}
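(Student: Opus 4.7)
The plan is to follow Nash-Williams' classical minimal bad sequence argument. Suppose for contradiction that the set of plane trees labelled by a well-quasi-order $(A, \leq)$ is not well-quasi-ordered by $\hookrightarrow$: there exists an infinite \emph{bad sequence} $(T_n)_{n \in \N}$, meaning that $T_i \not\hookrightarrow T_j$ for all $i < j$. I would construct a minimal such sequence by recursion: having chosen $T_1, \ldots, T_{n-1}$, pick $T_n$ to be a plane tree with the minimum number of vertices such that $(T_1, \ldots, T_n)$ can be extended to an infinite bad sequence. This succeeds by the standing assumption and the well-ordering of $\N$ applied to vertex counts.

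For each $n$, let $r_n \in A$ denote the label of the root of $T_n$, and let $C_n = (S_{n,1}, \ldots, S_{n,k_n})$ be the finite sequence of maximal proper subtrees of $T_n$ rooted at the children of the root, in their planar order. Set $\mathcal{S} := \bigcup_n \{S_{n,i} : 1 \leq i \leq k_n\}$. The key lemma is that $(\mathcal{S}, \hookrightarrow)$ is itself a well-quasi-order. If $(U_m)_{m\in\N}$ were a bad sequence in $\mathcal{S}$, choose it so that $U_1 = S_{n_1, i_1}$ minimizes $n_1$. Then $T_1, \ldots, T_{n_1 - 1}, U_1, U_2, \ldots$ would be an infinite bad sequence, because an embedding $T_a \hookrightarrow U_b$ with $a < n_1$ would extend to an embedding $T_a \hookrightarrow T_{n_b}$ (since $U_b$ is a subtree of some $T_{n_b}$ with $n_b \geq n_1 > a$), contradicting badness of $(T_n)$. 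But $U_1$ has strictly fewer vertices than $T_{n_1}$, contradicting the minimality of~$T_{n_1}$.

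Once $\mathcal{S}$ is known to be well-quasi-ordered, I would invoke Higman's lemma, which states that finite sequences over a well-quasi-order are themselves well-quasi-ordered under the subsequence embedding respecting order. Combined with the well-quasi-order on $A$, the product order on $A \times \mathcal{S}^{*}$ is a well-quasi-order. Applied to the sequence $(r_n, C_n)$, this yields indices $i < j$ with $r_i \leq r_j$ and indices $p_1 < p_2 < \cdots < p_{k_i}$ such that $S_{i,m} \hookrightarrow S_{j, p_m}$ for every $m$. Translating this into $T_i \hookrightarrow T_j$ is routine: starting from $T_j$, delete via operations (i) and (iv) all children subtrees of the root not indexed by some $p_m$, recursively apply the embeddings $S_{i,m} \hookrightarrow S_{j, p_m}$ within the remaining branches, and finally reduce the root label from $r_j$ to $r_i$ via operation (iii). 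This contradicts the badness of $(T_n)$.

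The main obstacle is the setup of the minimal bad sequence itself: it relies on dependent choice, and one must carefully verify that the splicing step preserves badness (any embedding into the spliced tail must come from embeddings into the original tail, even accounting for the root-deletion operation (ii) and path-contraction operation (iv)). A secondary subtlety, specific to the plane-tree setting, is that the linear order on children must be respected throughout, which is exactly why Higman's lemma on sequences — rather than a weaker statement about multisets — is the correct tool here, and why the embedding definition restricts contractions to preserve endpoint labels without permuting siblings.
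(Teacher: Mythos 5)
The paper offers no proof of this statement: it is invoked as Kruskal's Tree Theorem, citing Nash-Williams for the simpler argument, so there is no internal proof to compare your proposal against. What you wrote is exactly that Nash-Williams minimal-bad-sequence proof, adapted to the paper's operational definition of $\hookrightarrow$ for rooted labelled plane trees, and it is essentially correct: the minimal bad sequence, the lemma that the collection of child subtrees of its members is well-quasi-ordered, Higman's lemma in its ordered (sequence, not multiset) form combined with the well-quasi-order on labels, and the reassembly of an embedding $T_i\hookrightarrow T_j$ are the right ingredients in the right order. Two asserted steps deserve explicit justification in a full write-up. First, in the splicing argument, the inequality $n_b\ge n_1$ for all $b$ does not come for free from minimizing the source index of $U_1$ alone; it holds because otherwise some tail of $(U_m)$ would be a bad sequence whose first term has a smaller source index, contradicting that minimization. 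Second, in the reassembly, an embedding $S_{i,m}\hookrightarrow S_{j,p_m}$ of rooted subtrees must be transported into the ambient tree: the subtree-level operation (ii) (deleting the subtree's root when it has a single child) becomes an instance of the path-contraction operation (iv) inside $T_j$, which preserves the labels of the endpoints as required; the same observation shows that each child subtree embeds homeomorphically into its parent tree, a fact your splicing step uses implicitly via transitivity of $\hookrightarrow$.
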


\section{Hopf arborescent links}\label{sec_Hopflink}

Arborescent links are a class of knots and links originally defined and studied by Conway\footnote{Conway called them algebraic links, but this denomination is now more used for the links that come from algebraic curves in~$\C^2$.}. 
This class has received much attention from knot theorists~\cite{bonahon1979new, Gabai_Genus_arborescent, sakuma}. 
In this paper we study a subclass that we call Hopf arborescent links.

\subsection{Hopf plumbing} 
The links that we investigate in this paper are boundaries of surfaces which are defined iteratively from Hopf bands using an operation called {plumbing}.

Let $H$ be a Hopf band and $\Sigma$ be an oriented surface with boundary, and let us assume that they are unlinked, that is, that there exists a sphere $S$ in $\mathbb{S}^3$  separating them. To plumb $H$ on~$\Sigma$, pick an arc~$\alpha$ on~$\Sigma$ whose endpoints lie on~$\partial\Sigma$ and which is not boundary parallel (i.e., $\alpha$ is not isotopic relatively to its endpoints, to an arc in~$\partial\Sigma$). Let $D$ be a small neighbourhood of~$\alpha$ in~$\Sigma$ that we see as a rectangle with two sides on~$\partial\Sigma$ and two sides in the interior of~$\Sigma$. Isotope $\Sigma$ within $\Sp^3 \setminus S$ so that it intersects $S$ exactly on $D$, see Figure~\ref{pic_def_plumbing_sphere}, left. 
Then, define similarly $D'$ a neighbourhood of the unique (up to isotopy) non boundary parallel arc in~$H$ with endpoints in $\partial H$. 
The orientations of $\Sigma$ and~$H$ induce an orientation of the normal direction to the surface (so that concatenating the orientation of the surface with the positive normal direction gives a positive basis in~$\Sp^3$). 
Finally, isotope $H$ within its component of $\Sp^3 \setminus S$, so that $D$ and $D'$ are identified on $S$ in a way that the sides of~$D$ that are on~$\partial S$ are matched with the sides of~$D'$ that are not on~$\partial H$ and the orientations of both rectangles match. 
The resulting surface is said to be obtained from $\Sigma$ by \emphdef{Hopf plumbing $H$ on top of~$\Sigma$ along~$\alpha$}, see Figure~\ref{pic_def_plumbing_sphere}.

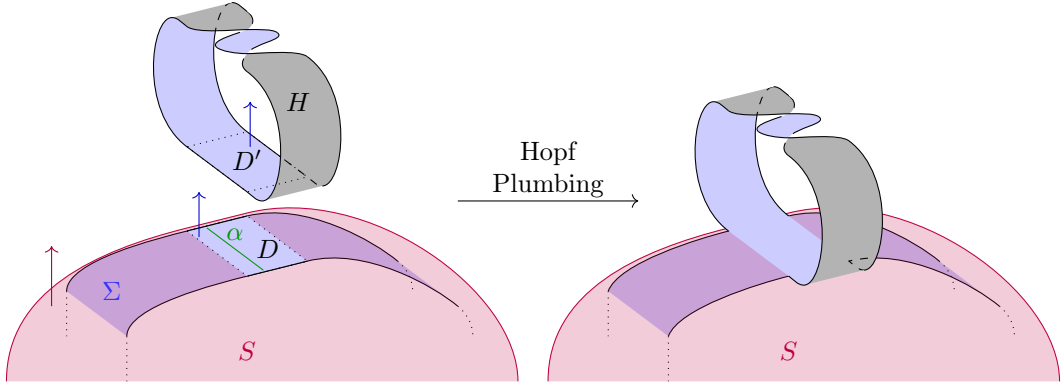
\begin{figure}[H]
\begin{center}
\begin{tikzpicture}[scale = 0.8]
\fill [blue!20!white] (-2,-1) .. controls +(90:0.5) and \control{(0,0)}{(-166:0.5)} -- (1,0.25) .. controls +(14:1) and \control{(3.5,-0.5)}{(140:0.5)} -- (4.5,-1.25) .. controls +(140:0.5) and \control{(2,-0.5)}{(14:1)} -- (1,-0.75) .. controls +(-166:0.5) and \control{(-1,-1.75)}{(90:0.5)} -- cycle;
\draw (-2,-1) .. controls +(90:0.5) and \control{(0,0)}{(-166:0.5)} -- (1,0.25) .. controls +(14:1) and \control{(3.5,-0.5)}{(140:0.5)}; 
\draw [xshift = 1cm, yshift = -0.75cm] (-2,-1) .. controls +(90:0.5) and \control{(0,0)}{(-166:0.5)} -- (1,0.25) .. controls +(14:1) and \control{(3.5,-0.5)}{(140:0.5)};
\draw [dotted] (-2,-1) -- ++(0,-0.75);
\draw [dotted, xshift = 1cm, yshift = -0.75cm] (-2,-1) -- ++(0,-0.75);
\draw [dotted] (3.5,-0.5) .. controls +(-40:0.2) and \control{(3.75,-1)}{(90:0.2)};
\draw [dotted, xshift = 1cm, yshift = -0.75cm] (3.5,-0.5) .. controls +(-40:0.2) and \control{(3.75,-1)}{(90:0.2)};

\fill [opacity = 0.2, purple] (-3,-2.5) .. controls +(90:1.5) and (-1.5,-0.325) .. (0,0.05) -- (1,0.3) .. controls (3,0.8) and \control{(5.5,-2.5)}{(90:1.5)};
\draw [purple] (-3,-2.5) .. controls +(90:1.5) and (-1.5,-0.325) .. (0,0.05) -- (1,0.3) .. controls (3,0.8) and \control{(5.5,-2.5)}{(90:1.5)};
\fill [blue!20!white] (0,0) -- (1,-0.75) -- (2, -0.5) -- (1,0.25) -- cycle;
\draw (0,0) -- (1,0.25) node (n1) [pos = 0.3, inner sep = 0] {} (1,-0.75) -- (2, -0.5) node (n2) [pos = 0.3, inner sep = 0] {};

\node at (1, -2) [purple] {$S$};
\node at (1, 0) [below right] {$D$};
\draw [green!60!black] (n1) -- (n2);
\node at (0.8, -0.05) [green!60!black] {$\alpha$};
\draw [blue!80!black, ->] (0.2,-0.1) -- ++(0,0.75);
\draw [purple!80!black, ->] (-2.25,-1.25) -- ++(0,1);
\node [blue!80!white] at (-1.25,-1) {$\Sigma$};

\begin{scope}[yshift = 1.4cm]
\coordinate (h) at (1,1.95);
\coordinate (l) at (1, 1.55); 
\coordinate (i) at (0.5, 1.65);
\coordinate (e) at (1.5, 1.9);

\begin{scope}
\clip (h) .. controls +(180:0.2) and \control{(0,2)}{(-36.75:0.2)} .. controls +(143.25:0.1) and \control{(-0.25,2.15)}{(0:0.1)} .. controls +(180:0.35) and \control{(0,0)}{(143.25:1)} -- (1,-0.75) .. controls +(-36.75:0.1) and \control{(1.25,-0.9)}{(180:0.1)} .. controls +(0:0.35) and \control{(1,1.25)}{(-36.75:1)} -- cycle;
\fill [blue!20!white] (1,2.25) .. controls +(143.25:0.1) and \control{(0.75,2.4)}{(0:0.1)} .. controls +(180:0.35) and \control{(1,0.25)}{(143.25:1)} -- (2,-0.5) .. controls +(-36.75:0.1) and \control{(2.25,-0.65)}{(180:0.1)} -- ++(0, -1) -- ++(-3.5,0) -- ++(0,4.5) -- ++(2.5,0) -- cycle;
\end{scope}
\fill [blue!20!white] (h) .. controls +(-150:0.1) and \control{(i)}{(143.25:0.2)} .. controls +(-36.75:0.1) and \control{(l)}{(180:0.2)} .. controls +(60:0.2) and \control{(e)}{(-36.75:0.2)} .. controls +(143.25:0.1) and \control{(h)}{(0:0.2)};
\fill [white!40!gray] (-0.25,2.15) .. controls +(0:0.1) and \control{(0,2)}{(143.25:0.1)} .. controls +(-36.75:0.2) and \control{(h)}{(180:0.2)} .. controls +(30:0.1) and \control{(1,2.25)}{(-36.75:0.2)} .. controls +(143.25:0.1) and \control{(0.75,2.4)}{(0:0.1)}-- cycle;
\fill [white!40!gray] (1,1.25) .. controls +(143.25:0.1) and \control{(l)}{(-130:0.2)} .. controls +(180:0.2) and \control{(2,1.5)}{(143.25:0.2)} .. controls +(-36.75:1) and \control{(2.25,-0.65)}{(0:0.35)} -- (1.25,-0.9) .. controls +(0:0.35) and \control{(1,1.25)}{(-36.75:1)};
\node at (1.85,0.75) {$H$};

\begin{scope}
\clip (-0.25,2.15) .. controls +(0:0.1) and \control{(0,2)}{(143.25:0.1)} .. controls +(-36.75:0.2) and \control{(h)}{(180:0.2)} .. controls +(30:0.1) and \control{(1,2.25)}{(-36.75:0.2)} .. controls +(143.25:0.1) and \control{(0.75,2.4)}{(0:0.1)}-- cycle;
\draw [dashed] (0.75,2.4) .. controls +(180:0.35) and \control{(1,0.25)}{(143.25:1)};
\end{scope}
\begin{scope}
\clip (1,1.25) .. controls +(143.25:0.1) and \control{(l)}{(-130:0.2)} .. controls +(180:0.2) and \control{(2,1.5)}{(143.25:0.2)} .. controls +(-36.75:1) and \control{(2.25,-0.65)}{(0:0.35)} -- (1.25,-0.9) .. controls +(0:0.35) and \control{(1,1.25)}{(-36.75:1)};
\draw [dashed] (1,0.25) -- (2,-0.5) .. controls +(-36.75:0.1) and \control{(2.25,-0.65)}{(180:0.1)};
\end{scope}
\begin{scope}
\clip (h) .. controls +(180:0.2) and \control{(0,2)}{(-36.75:0.2)} .. controls +(143.25:0.1) and \control{(-0.25,2.15)}{(0:0.1)} .. controls +(180:0.35) and \control{(0,0)}{(143.25:1)} -- (1,-0.75) .. controls +(-36.75:0.1) and \control{(1.25,-0.9)}{(180:0.1)} .. controls +(0:0.35) and \control{(1,1.25)}{(-36.75:1)} -- cycle;
\draw (0.75,2.4) .. controls +(180:0.35) and \control{(1,0.25)}{(143.25:1)} -- (2,-0.5);
\end{scope}
\draw (1,2.25) .. controls +(143.25:0.1) and \control{(0.75,2.4)}{(0:0.1)};
\draw (2.25,-0.65) .. controls +(0:0.35) and \control{(2,1.5)}{(-36.75:1)};
\draw (0,2) .. controls +(143.25:0.1) and \control{(-0.25,2.15)}{(0:0.1)} .. controls +(180:0.35) and \control{(0,0)}{(143.25:1)} -- (1,-0.75) .. controls +(-36.75:0.1) and \control{(1.25,-0.9)}{(180:0.1)} .. controls +(0:0.35) and \control{(1,1.25)}{(-36.75:1)} ;
\begin{scope}[even odd rule] 
\clip (0, 2.25) rectangle (3, 1) (l) circle (0.1cm);
\draw (1,1.25) .. controls +(143.25:0.1) and \control{(l)}{(-130:0.2)} .. controls +(60:0.2) and \control{(e)}{(-36.75:0.2)};
\end{scope}
\draw (e) .. controls +(143.25:0.1) and \control{(h)}{(0:0.2)} .. controls +(180:0.2) and \control{(0,2)}{(-36.75:0.2)};
\begin{scope}[even odd rule] 
\clip (0,2) -- (1,1.25) -- (2,1.5) -- (1,2.25) -- (0,2) (h) circle (0.13cm);
\draw (1,2.25) .. controls +(-36.75:0.2) and \control{(h)}{(30:0.1)} .. controls +(-150:0.1) and \control{(i)}{(143.25:0.2)};
\end{scope}
\draw (i) .. controls +(-36.75:0.1) and \control{(l)}{(180:0.2)} .. controls +(0:0.2) and \control{(2,1.5)}{(143.25:0.2)};

\draw [dotted] (0,0) -- (1,-0.75) -- (2, -0.5) -- (1,0.25) -- cycle;
\draw [dotted, yshift = -1.4cm] (0,0) -- (1,-0.75) -- (2, -0.5) -- (1,0.25) -- cycle;
\node at (1, 0.1) [below] {$D'$};
\draw [blue!80!black, ->] (1.05,0) -- ++(0,0.75);
\end{scope}

\begin{scope}[xshift = 1cm]
\draw [->] (3.5, 0.5) -- (6.5, 0.5);
\node at (5, 0.25) [label={[align=center]Hopf \\ Plumbing}] {};
\end{scope}

\begin{scope}[xshift = 9cm]
\fill [blue!20!white] (-2,-1) .. controls +(70:0.5) and \control{(0,0)}{(-166:0.5)} -- (1,0.25) .. controls +(14:1) and \control{(3.5,-0.5)}{(140:0.5)} -- (4.5,-1.25) .. controls +(140:0.5) and \control{(2,-0.5)}{(14:1)} -- (1,-0.75) .. controls +(-166:0.5) and \control{(-1,-1.75)}{(70:0.5)} -- cycle;
\draw (-2,-1) .. controls +(70:0.5) and \control{(0,0)}{(-166:0.5)} -- (1,0.25) .. controls +(14:1) and \control{(3.5,-0.5)}{(140:0.5)}; 
\draw [xshift = 1cm, yshift = -0.75cm] (-2,-1) .. controls +(70:0.5) and \control{(0,0)}{(-166:0.5)} -- (1,0.25) .. controls +(14:1) and \control{(3.5,-0.5)}{(140:0.5)}; 
\draw [dotted] (-2,-1) -- ++(0,-0.75);
\draw [dotted, xshift = 1cm, yshift = -0.75cm] (-2,-1) -- ++(0,-0.75);
\draw [dotted] (3.5,-0.5) .. controls +(-40:0.2) and \control{(3.75,-1)}{(90:0.2)};
\draw [dotted, xshift = 1cm, yshift = -0.75cm] (3.5,-0.5) .. controls +(-40:0.2) and \control{(3.75,-1)}{(90:0.2)}; 

\fill [opacity = 0.2, purple] (-3,-2.5) .. controls +(90:1.5) and (-1.5,-0.325) .. (0,0.05) -- (1,0.3) .. controls (3,0.8) and \control{(5.5,-2.5)}{(90:1.5)};
\draw [purple] (-3,-2.5) .. controls +(90:1.5) and (-1.5,-0.325) .. (0,0.05) -- (1,0.3) .. controls (3,0.8) and \control{(5.5,-2.5)}{(90:1.5)};
\draw (0,0) -- (1,0.25) node (n1) [pos = 0.3, inner sep = 0] {} (1,-0.75) -- (2, -0.5) node (n2) [pos = 0.3, inner sep = 0] {};
\node at (1, -2) [purple] {$S$};

\coordinate (h) at (1,1.95);
\coordinate (l) at (1, 1.55); 
\coordinate (i) at (0.5, 1.65);
\coordinate (e) at (1.5, 1.9);

\begin{scope}
\clip (h) .. controls +(180:0.2) and \control{(0,2)}{(-36.75:0.2)} .. controls +(143.25:0.1) and \control{(-0.25,2.15)}{(0:0.1)} .. controls +(180:0.35) and \control{(0,0)}{(143.25:1)} -- (1,-0.75) .. controls +(-36.75:0.1) and \control{(1.25,-0.9)}{(180:0.1)} .. controls +(0:0.35) and \control{(1,1.25)}{(-36.75:1)} -- cycle;
\fill [blue!20!white] (1,2.25) .. controls +(143.25:0.1) and \control{(0.75,2.4)}{(0:0.1)} .. controls +(180:0.35) and \control{(1,0.25)}{(143.25:1)} -- (2,-0.5) .. controls +(-36.75:0.1) and \control{(2.25,-0.65)}{(180:0.1)} -- ++(0, -1) -- ++(-3.5,0) -- ++(0,4.5) -- ++(2.5,0) -- cycle;
\end{scope}
\fill [blue!20!white] (h) .. controls +(-150:0.1) and \control{(i)}{(143.25:0.2)} .. controls +(-36.75:0.1) and \control{(l)}{(180:0.2)} .. controls +(60:0.2) and \control{(e)}{(-36.75:0.2)} .. controls +(143.25:0.1) and \control{(h)}{(0:0.2)};
\fill [white!40!gray] (-0.25,2.15) .. controls +(0:0.1) and \control{(0,2)}{(143.25:0.1)} .. controls +(-36.75:0.2) and \control{(h)}{(180:0.2)} .. controls +(30:0.1) and \control{(1,2.25)}{(-36.75:0.2)} .. controls +(143.25:0.1) and \control{(0.75,2.4)}{(0:0.1)}-- cycle;
\fill [white!40!gray] (1,1.25) .. controls +(143.25:0.1) and \control{(l)}{(-130:0.2)} .. controls +(180:0.2) and \control{(2,1.5)}{(143.25:0.2)} .. controls +(-36.75:1) and \control{(2.25,-0.65)}{(0:0.35)} -- (1.25,-0.9) .. controls +(0:0.35) and \control{(1,1.25)}{(-36.75:1)};

\begin{scope}
\clip (-0.25,2.15) .. controls +(0:0.1) and \control{(0,2)}{(143.25:0.1)} .. controls +(-36.75:0.2) and \control{(h)}{(180:0.2)} .. controls +(30:0.1) and \control{(1,2.25)}{(-36.75:0.2)} .. controls +(143.25:0.1) and \control{(0.75,2.4)}{(0:0.1)}-- cycle;
\draw [dashed] (0.75,2.4) .. controls +(180:0.35) and \control{(1,0.25)}{(143.25:1)};
\end{scope}
\begin{scope}
\clip (1,1.25) .. controls +(143.25:0.1) and \control{(l)}{(-130:0.2)} .. controls +(180:0.2) and \control{(2,1.5)}{(143.25:0.2)} .. controls +(-36.75:1) and \control{(2.25,-0.65)}{(0:0.35)} -- (1.25,-0.9) .. controls +(0:0.35) and \control{(1,1.25)}{(-36.75:1)};
\draw [dashed] (4.5, -1.25) .. controls +(140:0.5) and \control{(2,-0.5)}{(14:1)} .. controls +(-36.75:0.1) and \control{(2.25,-0.65)}{(180:0.1)};
\end{scope}
\begin{scope}
\clip (h) .. controls +(180:0.2) and \control{(0,2)}{(-36.75:0.2)} .. controls +(143.25:0.1) and \control{(-0.25,2.15)}{(0:0.1)} .. controls +(180:0.35) and \control{(0,0)}{(143.25:1)} -- (1,-0.75) .. controls +(-36.75:0.1) and \control{(1.25,-0.9)}{(180:0.1)} .. controls +(0:0.35) and \control{(1,1.25)}{(-36.75:1)} -- cycle;
\draw (0.75,2.4) .. controls +(180:0.35) and \control{(1,0.25)}{(143.25:1)};
\end{scope}
\draw (1,2.25) .. controls +(143.25:0.1) and \control{(0.75,2.4)}{(0:0.1)};
\draw (2.25,-0.65) .. controls +(0:0.35) and \control{(2,1.5)}{(-36.75:1)};
\draw (0,2) .. controls +(143.25:0.1) and \control{(-0.25,2.15)}{(0:0.1)} .. controls +(180:0.35) and \control{(0,0)}{(143.25:1)} (1,-0.75) .. controls +(-36.75:0.1) and \control{(1.25,-0.9)}{(180:0.1)} .. controls +(0:0.35) and \control{(1,1.25)}{(-36.75:1)} ;
\begin{scope}[even odd rule] 
\clip (0, 2.25) rectangle (3, 1) (l) circle (0.1cm);
\draw (1,1.25) .. controls +(143.25:0.1) and \control{(l)}{(-130:0.2)} .. controls +(60:0.2) and \control{(e)}{(-36.75:0.2)};
\end{scope}
\draw (e) .. controls +(143.25:0.1) and \control{(h)}{(0:0.2)} .. controls +(180:0.2) and \control{(0,2)}{(-36.75:0.2)};
\begin{scope}[even odd rule] 
\clip (0,2) -- (1,1.25) -- (2,1.5) -- (1,2.25) -- (0,2) (h) circle (0.13cm);
\draw (1,2.25) .. controls +(-36.75:0.2) and \control{(h)}{(30:0.1)} .. controls +(-150:0.1) and \control{(i)}{(143.25:0.2)};
\end{scope}
\draw (i) .. controls +(-36.75:0.1) and \control{(l)}{(180:0.2)} .. controls +(0:0.2) and \control{(2,1.5)}{(143.25:0.2)};
\end{scope}
\end{tikzpicture}
\caption{A Hopf plumbing of a Hopf band $H$ on top of a Seifert surface $\Sigma$ along $\alpha$.}
\label{pic_def_plumbing_sphere}
\end{center}
\end{figure}

Hopf plumbing is a special case of a more general operation called a Murasugi sum, see~\cite{Murasagi_def_sum,Ozbagci_plumbing_history}. 
A key property of Murasugi sums, proved by Gabai~\cite{Gabai_Murasugi_sum}, is that it preserves fibredness. 
In the above setting, since Hopf bands are fibred, if $\Sigma$ is fibred, then the surface obtained from $\Sigma$ by Hopf plumbing $H$ on~$\Sigma$ along any arc is also fibred.

\subsection{From plane trees to Hopf arborescent surfaces and links}\label{S:HopfArborescent}

Recall that in this article, a plane tree is a rooted tree that is embedded in the plane and such that every vertex has a label~$+$ or~$-$. 
If $v$ is a vertex, we denote by~$\ell(v)$ its label. 
Let~$T$ be a plane tree. 
The associated surface $\Sigma(T)$ we construct is an oriented surface with boundary that retracts on the union of a finite set of oriented simple curves~$\mathcal{C}_T$ parametrized by the vertices of~$T$, such that every $\alpha \in \mathcal{C}_T$ is the core of a Hopf band embedded on $\Sigma$ and whose sign is the label of the corresponding vertex in~$T$. 
For a vertex~$v$ in~$T$, the curve~$\alpha(v)$ intersects another curve~$\alpha(v')$ if and only $vv'$ is an edge of~$T$, and the two curves intersect exactly once. 
Moreover, following~$\alpha(v)$ with its given orientation, the cyclic ordering of the intersection points with the curves~$\alpha(v')$ coincides with the cyclic orderings of the neighbours of~$v$ in the plane tree~$T$. 

\begin{figure}[ht]
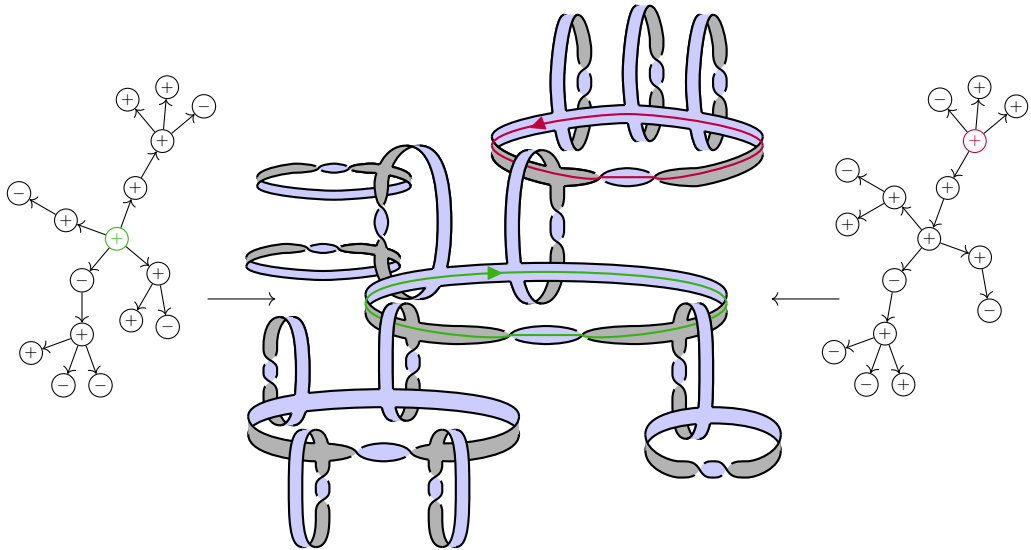

\begin{center}

\caption{A 3D-view of a Hopf arborescent link and its construction from two different plane trees. The chosen orientation of the root of each tree is indicated on the coloured core of the matching Hopf band. The orientation of the plane is counter-clockwise.}
\label{pic_3D_arborescent_link}
\end{center}
\end{figure}

We now describe the construction inductively, see Figure~\ref{pic_3D_arborescent_link} for an illustration.

\begin{enumerate}
\item Start from a Hopf band $H(v_r)$ where $v_r$ is the root of~$T$, and whose sign is the label~$\ell(v_r)$. 
\item For the induction step, assume that the tree $T'$ is obtained from $T$ by adding at a leaf~$v$ a finite number of leaves $v_1, \dots, v_k$ appearing in the plane in this order around $v$, and that the surface~$\Sigma(T)$ is already constructed with its set of core curves~$\mathcal{C}_T$.
  \begin{enumerate}
\item Since $v$ is a leaf in~$T$, the curve~$\alpha(v)$ intersects only one curve~$\alpha(v')$: the curve associated to $v'$, the parent of~$v$ in~$T$.
\item Starting from this intersection point we place $k$ points $p_1, \dots, p_k$ on~$\alpha(v)$ in this order. Then we draw on~$\Sigma(T)$ a family of $k$ arcs $\beta_1, \dots, \beta_k$ from~$\partial S$ to itself that correspond to those arcs that retracts on~$p_1, \dots, p_k$. 
Each such arc~$\beta_i$ intersects the collection~$\mathcal{C}_T$ exactly at the point~$p_i$. 
\item For $i=1, \dots, k$, perform the Hopf plumbing of a Hopf band~$H(v_i)$ of sign~$\ell(v_i)$ on top of~$\Sigma(T)$ along the arc~$\beta_i$. 
The resulting surface is~$\Sigma (T')$. 
\item\label{step:orient} Finally for every~$i$ orient the core of~$H(v_i)$ so that when going from $\alpha(v)$ to~$\alpha(v_i)$, we follow this rule: if $\ell(v)$ is positive, one turns to the left (with respect to the orientation of $\Sigma (T)$), and if $\ell(v)$ is negative, one turns to the right (once again with respect to the orientation of $\Sigma (T)$), see Figure~\ref{pic_orient_plumbing}.
  The set $\mathcal{C}_{T'}$ is the union of $\mathcal{C}_T$ with $\alpha(v_1), \dots, \alpha(v_k)$.
  \end{enumerate}
\end{enumerate}

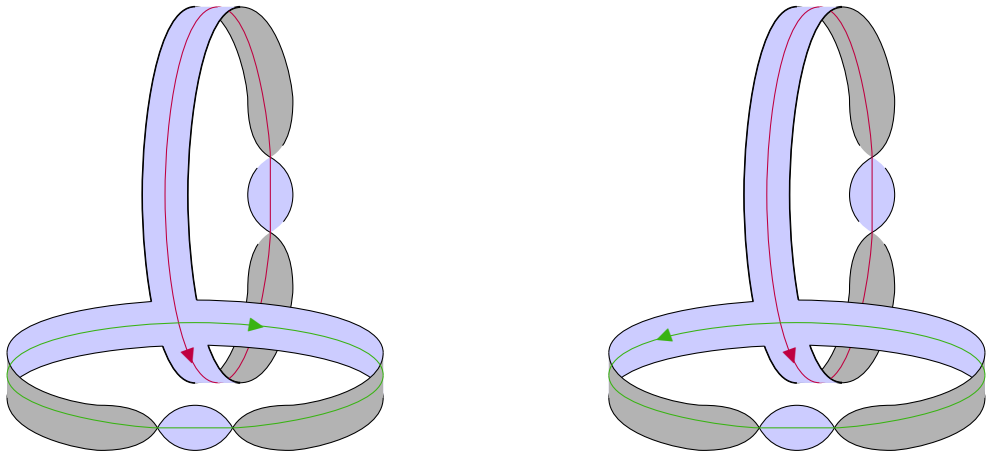
\begin{figure}[H]
\begin{center}
\begin{tikzpicture}[scale = 2]
\clip (-5.3,-0.5) rectangle (1.3,2.6);
\def\i{1}
\def\l{2.5}
\def\L{0.3}
\def\esp{0.35}
\def\e{0.25}
\def\c{0.15}
\begin{scope}
\coordinate (bl\i) at (0,0);
\coordinate (tl\i) at ($(bl\i)+(0,\l)$);
\coordinate (mnl\i) at ($(bl\i)+0.5*(0,\l)+(-\esp,0)$);
\coordinate (br\i) at (\L,0);
\coordinate (tr\i) at ($(br\i)+(0,\l)$);
\coordinate (mnr\i) at ($(mnl\i)+(\L,0)$);
\coordinate (ml\i) at ($(bl\i)+0.5*(0,\l)+(\esp,0)$);
\coordinate (qbl\i) at ($(bl\i)+0.25*(0,\l)+(\esp,0)$);
\coordinate (qtl\i) at ($(bl\i)+0.75*(0,\l)+(\esp,0)$);
\coordinate (qbr\i) at ($(qbl\i)+(\L,0)$);
\coordinate (mr\i) at ($(ml\i)+(\L,0)$);
\coordinate (qtr\i) at ($(qtl\i)+(\L,0)$);
\coordinate (ql\i) at ($(bl\i)+0.4*(0,\l)+0.5*(\L,0)+(\esp,0)$);
\coordinate (qh\i) at ($(bl\i)+0.6*(0,\l)+0.5*(\L,0)+(\esp,0)$);
\end{scope}

\fill [white!40!gray] (bl\i) .. controls +(0:\e) and \control{(qbl\i)}{(-90:0.5*\e)} .. controls +(90:0.5*\e) and \control{(ql\i)}{(-150:\c)} .. controls +(-30:\c) and \control{(qbr\i)}{(90:0.5*\e)}  .. controls +(-90:0.5*\e) and \control{(br\i)}{(0:\e)} -- cycle;
\fill [white!40!gray] (tl\i) .. controls +(0:\e) and \control{(qtl\i)}{(90:0.5*\e)} .. controls +(-90:0.5*\e) and \control{(qh\i)}{(150:\c)} .. controls +(30:\c) and \control{(qtr\i)}{(-90:0.5*\e)}  .. controls +(90:0.5*\e) and \control{(tr\i)}{(0:\e)} -- cycle;
\fill [blue!20!white] (ql\i) .. controls +(150:\c) and \control{(ml\i)}{(-90:0.25*\e)} .. controls +(90:0.25*\e) and \control{(qh\i)}{(-150:\c)} .. controls +(-30:\c) and \control{(mr\i)}{(90:0.25*\e)} .. controls +(-90:0.25*\e) and \control{(ql\i)}{(30:\c)};
\begin{scope}[even odd rule]
\clip (-3,-3) rectangle (3,3) (qh\i) circle (0.75*\c cm); 
\draw (bl\i) .. controls +(0:\e) and \control{(qbl\i)}{(-90:0.5*\e)} .. controls +(90:0.5*\e) and \control{(ql\i)}{(-150:\c)} .. controls +(30:\c) and \control{(mr\i)}{(-90:0.25*\e)} .. controls +(90:0.25*\e) and \control{(qh\i)}{(-30:\c)} .. controls +(150:\c) and \control{(qtl\i)}{(-90:0.5*\e)} .. controls +(90:0.5*\e) and \control{(tl\i)}{(0:\e)};
\end{scope}
\begin{scope}[even odd rule]
\clip (-3,-3) rectangle (3,3) (ql\i) circle (0.75*\c cm); 
\draw (br\i) .. controls +(0:\e) and \control{(qbr\i)}{(-90:0.5*\e)} .. controls +(90:0.5*\e) and \control{(ql\i)}{(-30:\c)} .. controls +(150:\c) and \control{(ml\i)}{(-90:0.25*\e)} .. controls +(90:0.25*\e) and \control{(qh\i)}{(-150:\c)} .. controls +(30:\c) and \control{(qtr\i)}{(-90:0.5*\e)} .. controls +(90:0.5*\e) and \control{(tr\i)}{(0:\e)};
\end{scope}

\fill [blue!20!white] (bl\i) .. controls +(180:\e) and \control{(mnl\i)}{(-90:2*\e)} .. controls +(90:2*\e) and \control{(tl\i)}{(180:\e)} -- (tr\i) .. controls +(180:\e) and \control{(mnr\i)}{(90:2*\e)} .. controls +(-90:2*\e) and \control{(br\i)}{(180:\e)};
\draw [thick] (bl\i) .. controls +(180:\e) and \control{(mnl\i)}{(-90:2*\e)} .. controls +(90:2*\e) and \control{(tl\i)}{(180:\e)}; 
\draw [thick] (br\i) .. controls +(180:\e) and \control{(mnr\i)}{(-90:2*\e)} .. controls +(90:2*\e) and \control{(tr\i)}{(180:\e)}; 

\draw [purple] ($(bl\i)!0.5!(br\i)$) .. controls +(0:\e) and \control{(ql\i)}{(-90:\c)} -- (qh\i) .. controls +(90:\c) and \control{($(tl\i)!0.5!(tr\i)$)}{(0:\e)};

\def\i{2}
\begin{scope}[xshift = -1.25cm, yshift = 0.2cm, rotate = -90]
\coordinate (bl\i) at (0,0);
\coordinate (tl\i) at ($(bl\i)+(0,\l)$);
\coordinate (mnl\i) at ($(bl\i)+0.5*(0,\l)+(-\esp,0)$);
\coordinate (br\i) at (\L,0);
\coordinate (tr\i) at ($(br\i)+(0,\l)$);
\coordinate (mnr\i) at ($(mnl\i)+(\L,0)$);
\coordinate (ml\i) at ($(bl\i)+0.5*(0,\l)+(\esp,0)$);
\coordinate (qbl\i) at ($(bl\i)+0.25*(0,\l)+(\esp,0)$);
\coordinate (qtl\i) at ($(bl\i)+0.75*(0,\l)+(\esp,0)$);
\coordinate (qbr\i) at ($(qbl\i)+(\L,0)$);
\coordinate (mr\i) at ($(ml\i)+(\L,0)$);
\coordinate (qtr\i) at ($(qtl\i)+(\L,0)$);
\coordinate (ql\i) at ($(bl\i)+0.4*(0,\l)+0.5*(\L,0)+(\esp,0)$);
\coordinate (qh\i) at ($(bl\i)+0.6*(0,\l)+0.5*(\L,0)+(\esp,0)$);

\fill [blue!20!white] (bl\i) .. controls +(180:\e) and \control{(mnl\i)}{(-90:2*\e)} .. controls +(90:2*\e) and \control{(tl\i)}{(180:\e)} -- (tr\i) .. controls +(180:\e) and \control{(mnr\i)}{(90:2*\e)} .. controls +(-90:2*\e) and \control{(br\i)}{(180:\e)};
\draw (bl\i) .. controls +(180:\e) and \control{(mnl\i)}{(-90:2*\e)} .. controls +(90:2*\e) and \control{(tl\i)}{(180:\e)}; 
\draw (br\i) .. controls +(180:\e) and \control{(mnr\i)}{(-90:2*\e)} .. controls +(90:2*\e) and \control{(tr\i)}{(180:\e)}; 

\fill [white!40!gray] (bl\i) .. controls +(0:\e) and \control{(qbl\i)}{(-90:0.5*\e)} .. controls +(90:0.5*\e) and \control{(ql\i)}{(-150:\c)} .. controls +(-30:\c) and \control{(qbr\i)}{(90:0.5*\e)}  .. controls +(-90:0.5*\e) and \control{(br\i)}{(0:\e)} -- cycle;
\fill [white!40!gray] (tl\i) .. controls +(0:\e) and \control{(qtl\i)}{(90:0.5*\e)} .. controls +(-90:0.5*\e) and \control{(qh\i)}{(150:\c)} .. controls +(30:\c) and \control{(qtr\i)}{(-90:0.5*\e)}  .. controls +(90:0.5*\e) and \control{(tr\i)}{(0:\e)} -- cycle;
\fill [blue!20!white] (ql\i) .. controls +(150:\c) and \control{(ml\i)}{(-90:0.25*\e)} .. controls +(90:0.25*\e) and \control{(qh\i)}{(-150:\c)} .. controls +(-30:\c) and \control{(mr\i)}{(90:0.25*\e)} .. controls +(-90:0.25*\e) and \control{(ql\i)}{(30:\c)};
\begin{scope}[even odd rule]
\draw (bl\i) .. controls +(0:\e) and \control{(qbl\i)}{(-90:0.5*\e)} .. controls +(90:0.5*\e) and \control{(ql\i)}{(-150:\c)} .. controls +(30:\c) and \control{(mr\i)}{(-90:0.25*\e)} .. controls +(90:0.25*\e) and \control{(qh\i)}{(-30:\c)} .. controls +(150:\c) and \control{(qtl\i)}{(-90:0.5*\e)} .. controls +(90:0.5*\e) and \control{(tl\i)}{(0:\e)};
\end{scope}
\begin{scope}[even odd rule]
\draw (br\i) .. controls +(0:\e) and \control{(qbr\i)}{(-90:0.5*\e)} .. controls +(90:0.5*\e) and \control{(ql\i)}{(-30:\c)} .. controls +(150:\c) and \control{(ml\i)}{(-90:0.25*\e)} .. controls +(90:0.25*\e) and \control{(qh\i)}{(-150:\c)} .. controls +(30:\c) and \control{(qtr\i)}{(-90:0.5*\e)} .. controls +(90:0.5*\e) and \control{(tr\i)}{(0:\e)};
\end{scope}
\end{scope}

\def\i{1}
\fill [blue!20!white] (bl\i) .. controls +(180:\e) and \control{(mnl\i)}{(-90:2*\e)} .. controls +(90:2*\e) and \control{(tl\i)}{(180:\e)} -- (tr\i) .. controls +(180:\e) and \control{(mnr\i)}{(90:2*\e)} .. controls +(-90:2*\e) and \control{(br\i)}{(180:\e)};
\begin{scope}[even odd rule]
\def\i{2}
\clip (-1,3) rectangle (1,-1) [rotate=-90] (bl\i) .. controls +(180:\e) and \control{(mnl\i)}{(-90:2*\e)} .. controls +(90:2*\e) and \control{(tl\i)}{(180:\e)} -- (tr\i) .. controls +(180:\e) and \control{(mnr\i)}{(90:2*\e)} .. controls +(-90:2*\e) and \control{(br\i)}{(180:\e)};
\def\i{1}
\draw (bl\i) .. controls +(180:\e) and \control{(mnl\i)}{(-90:2*\e)} .. controls +(90:2*\e) and \control{(tl\i)}{(180:\e)}; 
\draw (br\i) .. controls +(180:\e) and \control{(mnr\i)}{(-90:2*\e)} .. controls +(90:2*\e) and \control{(tr\i)}{(180:\e)}; 
\end{scope}
\draw [purple] ($(bl\i)!0.5!(br\i)$) .. controls +(180:\e) and \control{($(mnl\i)!0.5!(mnr\i)$)}{(-90:2*\e)}  node (nc1) [pos = 0.3] {} .. controls +(90:2*\e) and \control{($(tl\i)!0.5!(tr\i)$)}{(180:\e)};
\node at (nc1) [rotate = 110, purple] {{\small $\blacktriangleleft$}};

\def\i{2}
\draw [green!70!purple, xshift = -1.6 cm, yshift = 0.05 cm, rotate = -90] ($(bl\i)!0.5!(br\i)$) .. controls +(180:\e) and \control{($(mnl\i)!0.5!(mnr\i)$)}{(-90:2*\e)}  node (nc2) [pos = 0.45] {} .. controls +(90:2*\e) and \control{($(tl\i)!0.5!(tr\i)$)}{(180:\e)};
\node at (nc2) [rotate = 20, green!70!purple] {{\small $\blacktriangleleft$}};
\draw [green!70!purple, xshift = -1.6 cm, yshift = 0.05 cm, rotate = -90] ($(bl\i)!0.5!(br\i)$) .. controls +(0:\e) and \control{(ql\i)}{(-90:\c)} -- (qh\i) .. controls +(90:\c) and \control{($(tl\i)!0.5!(tr\i)$)}{(0:\e)};

\begin{scope}[xshift = -4cm]
\def\i{1}
\def\l{2.5}
\def\L{0.3}
\def\esp{0.35}
\def\e{0.25}
\def\c{0.15}
\begin{scope}
\coordinate (bl\i) at (0,0);
\coordinate (tl\i) at ($(bl\i)+(0,\l)$);
\coordinate (mnl\i) at ($(bl\i)+0.5*(0,\l)+(-\esp,0)$);
\coordinate (br\i) at (\L,0);
\coordinate (tr\i) at ($(br\i)+(0,\l)$);
\coordinate (mnr\i) at ($(mnl\i)+(\L,0)$);
\coordinate (ml\i) at ($(bl\i)+0.5*(0,\l)+(\esp,0)$);
\coordinate (qbl\i) at ($(bl\i)+0.25*(0,\l)+(\esp,0)$);
\coordinate (qtl\i) at ($(bl\i)+0.75*(0,\l)+(\esp,0)$);
\coordinate (qbr\i) at ($(qbl\i)+(\L,0)$);
\coordinate (mr\i) at ($(ml\i)+(\L,0)$);
\coordinate (qtr\i) at ($(qtl\i)+(\L,0)$);
\coordinate (ql\i) at ($(bl\i)+0.4*(0,\l)+0.5*(\L,0)+(\esp,0)$);
\coordinate (qh\i) at ($(bl\i)+0.6*(0,\l)+0.5*(\L,0)+(\esp,0)$);
\end{scope}

\fill [white!40!gray] (bl\i) .. controls +(0:\e) and \control{(qbl\i)}{(-90:0.5*\e)} .. controls +(90:0.5*\e) and \control{(ql\i)}{(-150:\c)} .. controls +(-30:\c) and \control{(qbr\i)}{(90:0.5*\e)}  .. controls +(-90:0.5*\e) and \control{(br\i)}{(0:\e)} -- cycle;
\fill [white!40!gray] (tl\i) .. controls +(0:\e) and \control{(qtl\i)}{(90:0.5*\e)} .. controls +(-90:0.5*\e) and \control{(qh\i)}{(150:\c)} .. controls +(30:\c) and \control{(qtr\i)}{(-90:0.5*\e)}  .. controls +(90:0.5*\e) and \control{(tr\i)}{(0:\e)} -- cycle;
\fill [blue!20!white] (ql\i) .. controls +(150:\c) and \control{(ml\i)}{(-90:0.25*\e)} .. controls +(90:0.25*\e) and \control{(qh\i)}{(-150:\c)} .. controls +(-30:\c) and \control{(mr\i)}{(90:0.25*\e)} .. controls +(-90:0.25*\e) and \control{(ql\i)}{(30:\c)};
\begin{scope}[even odd rule]
\clip (-3,-3) rectangle (3,3) (ql\i) circle (0.75*\c cm); 
\draw (bl\i) .. controls +(0:\e) and \control{(qbl\i)}{(-90:0.5*\e)} .. controls +(90:0.5*\e) and \control{(ql\i)}{(-150:\c)} .. controls +(30:\c) and \control{(mr\i)}{(-90:0.25*\e)} .. controls +(90:0.25*\e) and \control{(qh\i)}{(-30:\c)} .. controls +(150:\c) and \control{(qtl\i)}{(-90:0.5*\e)} .. controls +(90:0.5*\e) and \control{(tl\i)}{(0:\e)};
\end{scope}
\begin{scope}[even odd rule]
\clip (-3,-3) rectangle (3,3) (qh\i) circle (0.75*\c cm); 
\draw (br\i) .. controls +(0:\e) and \control{(qbr\i)}{(-90:0.5*\e)} .. controls +(90:0.5*\e) and \control{(ql\i)}{(-30:\c)} .. controls +(150:\c) and \control{(ml\i)}{(-90:0.25*\e)} .. controls +(90:0.25*\e) and \control{(qh\i)}{(-150:\c)} .. controls +(30:\c) and \control{(qtr\i)}{(-90:0.5*\e)} .. controls +(90:0.5*\e) and \control{(tr\i)}{(0:\e)};
\end{scope}

\fill [blue!20!white] (bl\i) .. controls +(180:\e) and \control{(mnl\i)}{(-90:2*\e)} .. controls +(90:2*\e) and \control{(tl\i)}{(180:\e)} -- (tr\i) .. controls +(180:\e) and \control{(mnr\i)}{(90:2*\e)} .. controls +(-90:2*\e) and \control{(br\i)}{(180:\e)};
\draw [thick] (bl\i) .. controls +(180:\e) and \control{(mnl\i)}{(-90:2*\e)} .. controls +(90:2*\e) and \control{(tl\i)}{(180:\e)}; 
\draw [thick] (br\i) .. controls +(180:\e) and \control{(mnr\i)}{(-90:2*\e)} .. controls +(90:2*\e) and \control{(tr\i)}{(180:\e)}; 

\draw [purple] ($(bl\i)!0.5!(br\i)$) .. controls +(0:\e) and \control{(ql\i)}{(-90:\c)} -- (qh\i) .. controls +(90:\c) and \control{($(tl\i)!0.5!(tr\i)$)}{(0:\e)};

\def\i{2}
\begin{scope}[xshift = -1.25cm, yshift = 0.2cm, rotate = -90]
\coordinate (bl\i) at (0,0);
\coordinate (tl\i) at ($(bl\i)+(0,\l)$);
\coordinate (mnl\i) at ($(bl\i)+0.5*(0,\l)+(-\esp,0)$);
\coordinate (br\i) at (\L,0);
\coordinate (tr\i) at ($(br\i)+(0,\l)$);
\coordinate (mnr\i) at ($(mnl\i)+(\L,0)$);
\coordinate (ml\i) at ($(bl\i)+0.5*(0,\l)+(\esp,0)$);
\coordinate (qbl\i) at ($(bl\i)+0.25*(0,\l)+(\esp,0)$);
\coordinate (qtl\i) at ($(bl\i)+0.75*(0,\l)+(\esp,0)$);
\coordinate (qbr\i) at ($(qbl\i)+(\L,0)$);
\coordinate (mr\i) at ($(ml\i)+(\L,0)$);
\coordinate (qtr\i) at ($(qtl\i)+(\L,0)$);
\coordinate (ql\i) at ($(bl\i)+0.4*(0,\l)+0.5*(\L,0)+(\esp,0)$);
\coordinate (qh\i) at ($(bl\i)+0.6*(0,\l)+0.5*(\L,0)+(\esp,0)$);

\fill [blue!20!white] (bl\i) .. controls +(180:\e) and \control{(mnl\i)}{(-90:2*\e)} .. controls +(90:2*\e) and \control{(tl\i)}{(180:\e)} -- (tr\i) .. controls +(180:\e) and \control{(mnr\i)}{(90:2*\e)} .. controls +(-90:2*\e) and \control{(br\i)}{(180:\e)};
\draw (bl\i) .. controls +(180:\e) and \control{(mnl\i)}{(-90:2*\e)} .. controls +(90:2*\e) and \control{(tl\i)}{(180:\e)}; 
\draw (br\i) .. controls +(180:\e) and \control{(mnr\i)}{(-90:2*\e)} .. controls +(90:2*\e) and \control{(tr\i)}{(180:\e)}; 

\fill [white!40!gray] (bl\i) .. controls +(0:\e) and \control{(qbl\i)}{(-90:0.5*\e)} .. controls +(90:0.5*\e) and \control{(ql\i)}{(-150:\c)} .. controls +(-30:\c) and \control{(qbr\i)}{(90:0.5*\e)}  .. controls +(-90:0.5*\e) and \control{(br\i)}{(0:\e)} -- cycle;
\fill [white!40!gray] (tl\i) .. controls +(0:\e) and \control{(qtl\i)}{(90:0.5*\e)} .. controls +(-90:0.5*\e) and \control{(qh\i)}{(150:\c)} .. controls +(30:\c) and \control{(qtr\i)}{(-90:0.5*\e)}  .. controls +(90:0.5*\e) and \control{(tr\i)}{(0:\e)} -- cycle;
\fill [blue!20!white] (ql\i) .. controls +(150:\c) and \control{(ml\i)}{(-90:0.25*\e)} .. controls +(90:0.25*\e) and \control{(qh\i)}{(-150:\c)} .. controls +(-30:\c) and \control{(mr\i)}{(90:0.25*\e)} .. controls +(-90:0.25*\e) and \control{(ql\i)}{(30:\c)};
\begin{scope}[even odd rule]
\draw (bl\i) .. controls +(0:\e) and \control{(qbl\i)}{(-90:0.5*\e)} .. controls +(90:0.5*\e) and \control{(ql\i)}{(-150:\c)} .. controls +(30:\c) and \control{(mr\i)}{(-90:0.25*\e)} .. controls +(90:0.25*\e) and \control{(qh\i)}{(-30:\c)} .. controls +(150:\c) and \control{(qtl\i)}{(-90:0.5*\e)} .. controls +(90:0.5*\e) and \control{(tl\i)}{(0:\e)};
\end{scope}
\begin{scope}[even odd rule]
\draw (br\i) .. controls +(0:\e) and \control{(qbr\i)}{(-90:0.5*\e)} .. controls +(90:0.5*\e) and \control{(ql\i)}{(-30:\c)} .. controls +(150:\c) and \control{(ml\i)}{(-90:0.25*\e)} .. controls +(90:0.25*\e) and \control{(qh\i)}{(-150:\c)} .. controls +(30:\c) and \control{(qtr\i)}{(-90:0.5*\e)} .. controls +(90:0.5*\e) and \control{(tr\i)}{(0:\e)};
\end{scope}
\end{scope}

\def\i{1}
\fill [blue!20!white] (bl\i) .. controls +(180:\e) and \control{(mnl\i)}{(-90:2*\e)} .. controls +(90:2*\e) and \control{(tl\i)}{(180:\e)} -- (tr\i) .. controls +(180:\e) and \control{(mnr\i)}{(90:2*\e)} .. controls +(-90:2*\e) and \control{(br\i)}{(180:\e)};
\begin{scope}[even odd rule]
\def\i{2}
\clip (-1,3) rectangle (1,-1) [rotate=-90] (bl\i) .. controls +(180:\e) and \control{(mnl\i)}{(-90:2*\e)} .. controls +(90:2*\e) and \control{(tl\i)}{(180:\e)} -- (tr\i) .. controls +(180:\e) and \control{(mnr\i)}{(90:2*\e)} .. controls +(-90:2*\e) and \control{(br\i)}{(180:\e)};
\def\i{1}
\draw (bl\i) .. controls +(180:\e) and \control{(mnl\i)}{(-90:2*\e)} .. controls +(90:2*\e) and \control{(tl\i)}{(180:\e)}; 
\draw (br\i) .. controls +(180:\e) and \control{(mnr\i)}{(-90:2*\e)} .. controls +(90:2*\e) and \control{(tr\i)}{(180:\e)}; 
\end{scope}
\draw [purple] ($(bl\i)!0.5!(br\i)$) .. controls +(180:\e) and \control{($(mnl\i)!0.5!(mnr\i)$)}{(-90:2*\e)}  node (nc1) [pos = 0.3] {} .. controls +(90:2*\e) and \control{($(tl\i)!0.5!(tr\i)$)}{(180:\e)};
\node at (nc1) [rotate = 110, purple] {{\small $\blacktriangleleft$}};

\def\i{2}
\draw [green!70!purple, xshift = -1.6 cm, yshift = 0.05 cm, rotate = -90] ($(bl\i)!0.5!(br\i)$) .. controls +(180:\e) and \control{($(mnl\i)!0.5!(mnr\i)$)}{(-90:2*\e)}   .. controls +(90:2*\e) and \control{($(tl\i)!0.5!(tr\i)$)}{(180:\e)}node (nc2) [pos = 0.25] {};
\node at (nc2) [rotate = -10, green!70!purple] {{\small $\blacktriangleright$}};
\draw [green!70!purple, xshift = -1.6 cm, yshift = 0.05 cm, rotate = -90] ($(bl\i)!0.5!(br\i)$) .. controls +(0:\e) and \control{(ql\i)}{(-90:\c)} -- (qh\i) .. controls +(90:\c) and \control{($(tl\i)!0.5!(tr\i)$)}{(0:\e)};
\end{scope}
\end{tikzpicture}
\caption{Orientation of the green core when its associated Hopf band (unsigned as it does not matter for the rule) is plumbed on top of the Hopf band with the red core.}
\label{pic_orient_plumbing}
\end{center}
\end{figure}

\begin{definition}
A \emphdef{Hopf arborescent surface} is a surface $\Sigma (T)$ obtained from a plane tree $T$ by this construction, see Figure~\ref{pic_3D_arborescent_link} for an example.
A \emphdef{Hopf arborescent link} is the boundary of a Hopf arborescent surface.
\end{definition}

Since Hopf bands are fibred and this property is preserved under plumbing, Hopf arborescent surfaces are fibres for their boundaries, and are thus of minimal (classical) genus. 
The arbitrary-looking rule that we use to orient the cores of the Hopf band in Step~\ref{step:orient} is new and will turn out to be key for our proofs of Theorem~\ref{T:wqo} and Proposition~\ref{P:mainminor}. 

\subsection{Connections to other classes of knots}\label{sec_connections}

Here, we provide some additional background on Hopf arborescent links, their plumbing structure and their relations to other classes (arborescent links and fibred links). 
In particular one may think that considering rooted trees and always plumbing the new Hopf bands on top of the surface is a strong restriction. 
We will show that this is not the case, i.e. that the family of surfaces and links obtained with less restriction on the sides on which one plumbs is the same as the family considered here.

\subparagraph*{Arborescent knots.} Our definition differs from the classical definition(s) of arborescent links (e.g., in~\cite{bonahon1979new,Gabai_Genus_arborescent}) and their encoding via plane trees in three ways. 
Firstly, we restrict our attention to Hopf bands and not general unknotted annuli, hence the name Hopf arborescent links. 
This restriction is essential to our approach and the minor theory we develop does not extend when one drops this assumption, as Remark~\ref{R:minors} shows. 

Secondly, an arborescent surface is, in general, defined as a surface that retracts on a collection of core curves whose intersection pattern is a tree. 
One can show that in this case, the surface can be reconstructed using an algorithm similar to the one we propose in Section~\ref{S:HopfArborescent}. 
The main difference is that when gluing a new band to the surface, it may be glued on top or below the surface. 
This difference could let one think that our construction is more peculiar than the standard one. 
However, it was remarked by Misev~\cite{Misev_phd} that gluing a Hopf band on top or below an arc yields isotopic surfaces, see Figure~\ref{F:Misev}. 
The only difference is that the orientation of the core curves of the glued band is then reversed. 
This change of orientation then implies that all vertices of the subtree subsequently glued to this band should be reversed to have the same surface. 
This argument shows that the set of surfaces we construct actually coincides with the set of surfaces obtained by the classical construction of arborescent links using Hopf bands only. 
The reason why we chose our presentation (i.e., always gluing on top when going away from the root of the tree) is that it behaves more nicely with respect to our minor theory, and in particular it simplifies the proof of Lemma~\ref{lem:minor3}.

Thirdly, the usual encoding of arborescent knots with plane trees~\cite[Chapter~12]{bonahon1979new} is more detailed in that it puts the number of twists between each pair of adjacent edges around each vertex. Here, since bands have either $2$ positive crossings or $2$ negative, Figure~\ref{pic_twist_equiv} shows that we do not need to specify where the twists are on the band, and thus indicating the sign of the band on each vertex is sufficient. 

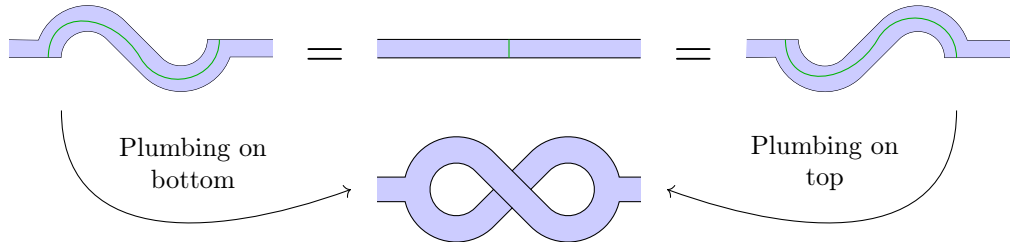
\begin{figure}[ht]
\begin{center}
\begin{tikzpicture}[scale = 0.7]
\begin{scope}[xshift = -7cm]
\clip (0,-1) rectangle (5,2);
\clip (0,0) -- (1,0) arc (180:45:0.5) -- ++(-45:1) arc (-135:-20:1) -- ++(1,0) -- ++ (0,1.5) -- ++(-5.2,0) -- cycle;
\clip (0,0.35) -- ($(1.5,0)+(160:1)$) arc (160:45:1) -- ++(-45:1) arc (-135:0:0.5) -- ++(1.5,0) -- ++ (0,-1.5) -- ++(-5.4,0) -- cycle;
\fill [blue!20!white] (0,-1) rectangle (5.2,1.5);
\draw (0,0) -- (1,0) arc (180:45:0.5) -- ++(-45:1) arc (-135:-20:1) -- ++(1,0);
\draw (0,0.35) -- ($(1.5,0)+(160:1)$) arc (160:45:1) -- ++(-45:1) arc (-135:0:0.5) -- ++(1.5,0); 
\draw [green!70!black] (0.75,0) .. controls +(90:1) and \control{(2.5,0)}{(120:1)} .. controls +(-60:0.8) and \control{(4,0.35)}{(-90:0.8)};
\end{scope}

\begin{scope}
\clip (0,-1) rectangle (5,2);
\fill [blue!20!white] (0,0) rectangle (5,0.35);
\draw (0,0) -- (5,0);
\draw (0,0.35) -- (5,0.35);
\draw [green!70!black] (2.5,0) -- (2.5,0.35);
\end{scope}

\begin{scope}[xshift=7cm, yshift=0.35cm]
\clip (0,-1) rectangle (5,2);
\clip (0,0) -- (1,0) arc (-180:-45:0.5) -- ++(45:1) arc (135:20:1) -- ++(1,0) -- ++ (0,-1.5) -- ++(-5.4,0) -- cycle;
\clip(0,-0.35) -- ($(1.5,0)+(-160:1)$) arc (-160:-45:1) -- ++(45:1) arc (135:0:0.5) -- ++(1.5,0)-- ++ (0,1.5) -- ++(-5.2,0) -- cycle;
\fill [blue!20!white] (0,-1) rectangle (5.2,1.5);
\draw (0,0) -- (1,0) arc (-180:-45:0.5) -- ++(45:1) arc (135:20:1) -- ++(1,0);
\draw (0,-0.35) -- ($(1.5,0)+(-160:1)$) arc (-160:-45:1) -- ++(45:1) arc (135:0:0.5) -- ++(1.5,0); 
\draw [green!70!black] (0.75,0) .. controls +(-90:1) and \control{(2.5,0)}{(-120:1)} .. controls +(60:0.8) and \control{(4,-0.35)}{(90:0.8)};
\end{scope}

\node at (-1,0) {{\huge $=$}};
\node at (6,0) {{\huge $=$}};
\node at (-3.5,-2) [align=center] {Plumbing on\\ bottom};
\node at (8.5,-2) [align=center] {Plumbing on\\ top};
\draw [->] (-6,-1) .. controls +(-90:2.5) and \control{(-0.5,-2.5)}{(-160:3)};
\draw [->] (11,-1) .. controls +(-90:2.5) and \control{(5.5,-2.5)}{(-20:3)};

\begin{scope}[yshift=-2.5cm]
\clip (0,-1) rectangle (5,2);
\coordinate (lle) at ($(1.5,0)+(-45:1)$);
\coordinate (lli) at ($(1.5,0)+(-45:0.5)$);
\coordinate (lhe) at ($(1.5,0)+(45:1)$);
\coordinate (lhi) at ($(1.5,0)+(45:0.5)$);
\coordinate (rhe) at ($(lli)+(45:1.5)$);
\coordinate (rhi) at ($(lle)+(45:1.5)$);
\coordinate (rle) at ($(lhi)+(-45:1.5)$);
\coordinate (rli) at ($(lhe)+(-45:1.5)$);
\path (lhe) arc (45:315:1) node [inner sep =0, pos = 0.45] (lu) {} node [inner sep =0, pos = 0.55] (ld) {};
\path (rhe) arc (135:-135:1) node [inner sep =0, pos = 0.45] (ru) {} node [inner sep =0, pos = 0.55] (rd) {};
\filldraw [fill=blue!20!white] ($(1.5,0)+(45:1)$) arc (45:315:1) -- ++(45:0.5) -- ++(-45:0.5) arc (-135:135:1) -- ++ (-135:0.5) -- cycle;
\fill [blue!20!white] ($(lu)+(-1,0)$) rectangle ($(ld)+(0.1,0)$);
\fill [blue!20!white] ($(ru)+(1,0)$) rectangle ($(rd)+(-0.1,0)$);
\filldraw [fill =white] ($(1.5,0)+(45:0.5)$) arc (45:315:0.5) -- ++(45:0.5) -- cycle;
\filldraw [fill =white] (rhi) arc (135:-135:0.5) -- ++(135:0.5) -- cycle;
\draw ($(ld)+(0.01,0)$) -- ++(-1,0) ($(lu)+(0.01,0)$) -- ++(-1,0) ($(rd)+(-0.01,0)$) -- ++(1,0) ($(ru)+(-0.01,0)$) -- ++(1,0);
\draw ($(lhi)+(-45:0.5)$) -- ++(-45:0.5) ($(lhe)+(-45:0.5)$) -- ++(-45:0.5);
\end{scope}
\end{tikzpicture}
\end{center}
  \caption{Surfaces obtained by plumbing a Hopf band inside or outside a Seifert surface are isotopic.}
  \label{F:Misev}
\end{figure}

\begin{figure}[H]
\begin{center}
\begin{tikzpicture}[scale = 0.8]
\clip (-0.1,-0.1) rectangle (17.1,1.1);
\fill [blue!20!white] (0,0) -- (2.5,0) .. controls +(0:0.25) and \control{(3,0.5)}{(-120:0.25)} .. controls +(120:0.25) and \control{(2.5,1)}{(0:0.25)} -- (0,1) -- cycle;
\fill [white!40!gray] (3,0.5) .. controls +(60:0.5) and \control{(3.5,1)}{(180:0.25)} .. controls +(0:0.25) and \control{(4,0.5)}{(120:0.25)} .. controls +(-120:0.25) and \control{(3.5,0)}{(0:0.25)} .. controls +(180:0.25) and \control{(3,0.5)}{(-60:0.25)};
\fill [blue!20!white] (4,0.5) .. controls +(-60:0.25) and \control{(4.5,0)}{(180:0.25)} -- (5,0) -- (5,1) -- (4.5,1) .. controls +(180:0.25) and \control{(4,0.5)}{(60:0.25)};
\begin{scope}[even odd rule]
\clip (-0.5,-0.5) rectangle (5,1.5) (3,0.5) circle (0.1cm);
\draw (0,0) -- (2.5,0) .. controls +(0:0.25) and \control{(3,0.5)}{(-120:0.25)} .. controls +(60:0.25) and \control{(3.5,1)}{(180:0.25)} .. controls +(0:0.25) and \control{(4,0.5)}{(120:0.25)} .. controls +(-60:0.25) and \control{(4.5,0)}{(180:0.25)} -- (5,0);
\end{scope}
\begin{scope}[even odd rule]
\clip (-0.5,-0.5) rectangle (5,1.5) (4,0.5) circle (0.1cm);
\draw (0,1) -- (2.5,1) .. controls +(0:0.25) and \control{(3,0.5)}{(120:0.25)} .. controls +(-60:0.25) and \control{(3.5,0)}{(180:0.25)} .. controls +(0:0.25) and \control{(4,0.5)}{(-120:0.25)} .. controls +(60:0.25) and \control{(4.5,1)}{(180:0.25)} -- (5,1);
\end{scope}
\draw [dotted] (1,0) -- (1,1) (2,0) -- (2,1);
\node at (1.5,0.5) {{\huge $\top$}};

\begin{scope}[xshift = 6cm]
\fill [blue!20!white] (0,0) -- (0.5,0) .. controls +(0:0.25) and \control{(1,0.5)}{(-120:0.25)} .. controls +(120:0.25) and \control{(0.5,1)}{(0:0.25)} -- (0,1) -- cycle;
\fill [white!40!gray] (1,0.5) .. controls +(60:0.5) and \control{(1.5,1)}{(180:0.25)} -- (3.5,1) .. controls +(0:0.25) and \control{(4,0.5)}{(120:0.25)} .. controls +(-120:0.25) and \control{(3.5,0)}{(0:0.25)} -- (1.5,0) .. controls +(180:0.25) and \control{(1,0.5)}{(-60:0.25)};
\fill [blue!20!white] (4,0.5) .. controls +(-60:0.25) and \control{(4.5,0)}{(180:0.25)} -- (5,0) -- (5,1) -- (4.5,1) .. controls +(180:0.25) and \control{(4,0.5)}{(60:0.25)};
\begin{scope}[even odd rule]
\clip (-0.5,-0.5) rectangle (5,1.5) (1,0.5) circle (0.1cm);
\draw (0,0) -- (0.5,0) .. controls +(0:0.25) and \control{(1,0.5)}{(-120:0.25)} .. controls +(60:0.25) and \control{(1.5,1)}{(180:0.25)} -- (3.5,1) .. controls +(0:0.25) and \control{(4,0.5)}{(120:0.25)} .. controls +(-60:0.25) and \control{(4.5,0)}{(180:0.25)} -- (5,0);
\end{scope}
\begin{scope}[even odd rule]
\clip (-0.5,-0.5) rectangle (5,1.5) (4,0.5) circle (0.1cm);
\draw (0,1) -- (0.5,1) .. controls +(0:0.25) and \control{(1,0.5)}{(120:0.25)} .. controls +(-60:0.25) and \control{(1.5,0)}{(180:0.25)} -- (3.5,0) .. controls +(0:0.25) and \control{(4,0.5)}{(-120:0.25)} .. controls +(60:0.25) and \control{(4.5,1)}{(180:0.25)} -- (5,1);
\end{scope}
\draw [dotted] (2,0) -- (2,1) (3,0) -- (3,1);
\node at (2.5,0.5) {{\huge $\bot$}};
\end{scope}

\begin{scope}[xshift = 12cm]
\fill [blue!20!white] (0,0) -- (0.5,0) .. controls +(0:0.25) and \control{(1,0.5)}{(-120:0.25)} .. controls +(120:0.25) and \control{(0.5,1)}{(0:0.25)} -- (0,1) -- cycle;
\fill [white!40!gray] (1,0.5) .. controls +(60:0.5) and \control{(1.5,1)}{(180:0.25)} .. controls +(0:0.25) and \control{(2,0.5)}{(120:0.25)} .. controls +(-120:0.25) and \control{(1.5,0)}{(0:0.25)} .. controls +(180:0.25) and \control{(1,0.5)}{(-60:0.25)};
\fill [blue!20!white] (2,0.5) .. controls +(-60:0.25) and \control{(2.5,0)}{(180:0.25)} -- (5,0) -- (5,1) -- (2.5,1) .. controls +(180:0.25) and \control{(2,0.5)}{(60:0.25)};
\begin{scope}[even odd rule]
\clip (-0.5,-0.5) rectangle (5,1.5) (1,0.5) circle (0.1cm);
\draw (0,0) -- (0.5,0) .. controls +(0:0.25) and \control{(1,0.5)}{(-120:0.25)} .. controls +(60:0.25) and \control{(1.5,1)}{(180:0.25)} .. controls +(0:0.25) and \control{(2,0.5)}{(120:0.25)} .. controls +(-60:0.25) and \control{(2.5,0)}{(180:0.25)} -- (5,0);
\end{scope}
\begin{scope}[even odd rule]
\clip (-0.5,-0.5) rectangle (5,1.5) (2,0.5) circle (0.1cm);
\draw (0,1) -- (0.5,1) .. controls +(0:0.25) and \control{(1,0.5)}{(120:0.25)} .. controls +(-60:0.25) and \control{(1.5,0)}{(180:0.25)} .. controls +(0:0.25) and \control{(2,0.5)}{(-120:0.25)} .. controls +(60:0.25) and \control{(2.5,1)}{(180:0.25)} -- (5,1);
\end{scope}
\draw [dotted] (3,0) -- (3,1) (4,0) -- (4,1);
\node at (3.5,0.5) {{\huge $\top$}};
\end{scope}
\node at (5.5,0.5) {{\huge $=$}};
\node at (11.5,0.5) {{\huge $=$}};
\end{tikzpicture}
\caption{The surface does not depend on where the twists are.}
\label{pic_twist_equiv}
\end{center}
\end{figure}
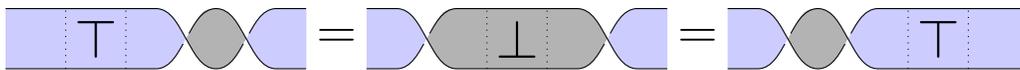

\smallskip

As noticed in Section~\ref{S:MainProof} with the set~$\mathcal{T}(L)$, it may happen that several trees yield the same (isotopy class of) oriented link and the same surface. 
In other words, our construction induces a well-defined map from plane trees to isotopy classes of surfaces in~$\Sp^3$, but this map may not be injective. 
This lack of injectivity is not an issue for the problems we address in this paper, for the preimage of a given link or surface is finite. 
However, it is carefully studied by Bonahon and Siebenmann~\cite{bonahon1979new} who give a recipe to detect (in the more general case of arborescent links) when two $\Z$-labelled plane trees yield isotopic links.

\subparagraph*{More general plumbing structures.}
By construction, the tree~$T$ involved in defining a Hopf arborescent surface $\Sigma(T)$ is the intersection graph of the set of core curves~$\mathcal{C}_T$. 
It may well happen that a surface retracts on different collections of Hopf curves~\cite{misev_arborescent,bonahon1979new} whose intersection pattern is very different from~$T$, and in particular is not a tree. 
In that case, the surface can still be obtained by a sequence of Hopf plumbings, but some plumbing are made on arcs that intersect the core of more than one Hopf band. 
On Figure~\ref{pic_tree_cycle_plumbing} for example, the green curve on the right is indeed the core of a Hopf band, see top of Figure~\ref{pic_cut_diago} to picture how a tubular neighbourhood of the core is a band with two positive crossings. 

\begin{figure}[H]
\begin{center}
\begin{tikzpicture}[scale = 0.85]
\begin{scope}
\def\i{1}
\coordinate (e\i) at (15:1.5);
\coordinate (i\i) at (15:1);
\coordinate (l\i) at (2.5:1.25);
\coordinate (h\i) at (27.5:1.25);

\fill [blue!20!white] (h\i) .. controls +(70:0.1) and \control{(40:1.5)}{(-50:0.2)} arc (40:350:1.5) .. controls +(80:0.1) and \control{(l\i)}{(-40:0.2)} .. controls +(-140:0.2) and \control{(-15:1)}{(80:0.2)} arc (-15:-310:1) .. controls +(-50:0.2) and \control{(h\i)}{(170:0.2)};
\fill [white!40!gray] (l\i) .. controls +(40:0.2) and \control{(e\i)}{(-75:0.1)} .. controls +(105:0.1) and \control{(h\i)}{(-10:0.2)} .. controls +(-110:0.1) and \control{(i\i)}{(105:0.1)} .. controls +(-75:0.1) and \control{(l\i)}{(140:0.1)};

\begin{scope}[even odd rule]
\clip (-1.6,1.6) rectangle (1.6,-1.6) (l\i) circle (0.1);
\draw [thick] (50:1) arc (50:345:1) .. controls +(80:0.2) and \control{(l\i)}{(-140:0.2)} .. controls +(40:0.2) and \control{(e\i)}{(-75:0.1)} .. controls +(105:0.1) and \control{(h\i)}{(-10:0.2)} .. controls +(170:0.2) and \control{(50:1)}{(-50:0.2)};
\end{scope}
\begin{scope}[even odd rule]
\clip (-1.6,1.6) rectangle (1.6,-1.6) (h\i) circle (0.1);
\draw [thick] (40:1.5) arc (40:350:1.5) .. controls +(80:0.1) and \control{(l\i)}{(-40:0.2)} .. controls +(140:0.1) and \control{(i\i)}{(-75:0.1)} .. controls +(105:0.1) and \control{(h\i)}{(-110:0.1)} .. controls +(70:0.1) and \control{(40:1.5)}{(-50:0.2)};
\end{scope}

\draw [purple] (10:1.25) arc (10:260:1.25);

\begin{scope}[yshift = -1.8cm, xshift = 0.5 cm]
\def\i{2}
\coordinate (e\i) at (15:1.5);
\coordinate (i\i) at (15:1);
\coordinate (l\i) at (2.5:1.25);
\coordinate (h\i) at (27.5:1.25);

\fill [blue!20!white] (h\i) .. controls +(70:0.1) and \control{(40:1.5)}{(-50:0.2)} arc (40:350:1.5) .. controls +(80:0.1) and \control{(l\i)}{(-40:0.2)} .. controls +(-140:0.2) and \control{(-15:1)}{(80:0.2)} arc (-15:-310:1) .. controls +(-50:0.2) and \control{(h\i)}{(170:0.2)};
\fill [white!40!gray] (l\i) .. controls +(40:0.2) and \control{(e\i)}{(-75:0.1)} .. controls +(105:0.1) and \control{(h\i)}{(-10:0.2)} .. controls +(-110:0.1) and \control{(i\i)}{(105:0.1)} .. controls +(-75:0.1) and \control{(l\i)}{(140:0.1)};

\begin{scope}[even odd rule]
\clip (-1.6,1.6) rectangle (1.6,-1.6) (l\i) circle (0.1);
\clip (-1.6,1.6) rectangle (1.6,-1.6) [yshift = 1.8cm, xshift = -0.5 cm] (-90:1) arc (-90:-10:1) -- (-10:1.5) arc (-10:-90:1.5) --cycle;
\draw [thick] (50:1) arc (50:345:1) .. controls +(80:0.2) and \control{(l\i)}{(-140:0.2)} .. controls +(40:0.2) and \control{(e\i)}{(-75:0.1)} .. controls +(105:0.1) and \control{(h\i)}{(-10:0.2)} .. controls +(170:0.2) and \control{(50:1)}{(-50:0.2)};
\end{scope}
\begin{scope}[even odd rule]
\clip (-1.6,1.6) rectangle (1.6,-1.6) (h\i) circle (0.1);
\clip (-1.6,1.6) rectangle (1.6,-1.6) [yshift = 1.8cm, xshift = -0.5 cm] (-90:1) arc (-90:-10:1) -- (-10:1.5) arc (-10:-90:1.5) --cycle;
\draw [thick] (40:1.5) arc (40:350:1.5) .. controls +(80:0.1) and \control{(l\i)}{(-40:0.2)} .. controls +(140:0.1) and \control{(i\i)}{(-75:0.1)} .. controls +(105:0.1) and \control{(h\i)}{(-110:0.1)} .. controls +(70:0.1) and \control{(40:1.5)}{(-50:0.2)};
\end{scope}
\end{scope}

\draw [purple] (10:1.25) arc (10:-100:1.25);
\draw [green!80!purple] ($(10:1.25)+(0.5,-1.8)$) arc (10:260:1.25);

\begin{scope}[yshift = -3.6cm, xshift = 1 cm]
\def\i{3}
\coordinate (e\i) at (15:1.5);
\coordinate (i\i) at (15:1);
\coordinate (l\i) at (2.5:1.25);
\coordinate (h\i) at (27.5:1.25);

\fill [blue!20!white] (h\i) .. controls +(70:0.1) and \control{(40:1.5)}{(-50:0.2)} arc (40:350:1.5) .. controls +(80:0.1) and \control{(l\i)}{(-40:0.2)} .. controls +(-140:0.2) and \control{(-15:1)}{(80:0.2)} arc (-15:-310:1) .. controls +(-50:0.2) and \control{(h\i)}{(170:0.2)};
\fill [white!40!gray] (l\i) .. controls +(40:0.2) and \control{(e\i)}{(-75:0.1)} .. controls +(105:0.1) and \control{(h\i)}{(-10:0.2)} .. controls +(-110:0.1) and \control{(i\i)}{(105:0.1)} .. controls +(-75:0.1) and \control{(l\i)}{(140:0.1)};

\begin{scope}[even odd rule]
\clip (-1.6,1.6) rectangle (1.6,-1.6) (l\i) circle (0.1);
\clip (-1.6,1.6) rectangle (1.6,-1.6) [yshift = 1.8cm, xshift = -0.5 cm] (-90:1) arc (-90:-10:1) -- (-10:1.5) arc (-10:-90:1.5) --cycle;
\draw [thick] (50:1) arc (50:345:1) .. controls +(80:0.2) and \control{(l\i)}{(-140:0.2)} .. controls +(40:0.2) and \control{(e\i)}{(-75:0.1)} .. controls +(105:0.1) and \control{(h\i)}{(-10:0.2)} .. controls +(170:0.2) and \control{(50:1)}{(-50:0.2)};
\end{scope}
\begin{scope}[even odd rule]
\clip (-1.6,1.6) rectangle (1.6,-1.6) (h\i) circle (0.1);
\clip (-1.6,1.6) rectangle (1.6,-1.6) [yshift = 1.8cm, xshift = -0.5 cm] (-90:1) arc (-90:-10:1) -- (-10:1.5) arc (-10:-90:1.5) --cycle;
\draw [thick] (40:1.5) arc (40:350:1.5) .. controls +(80:0.1) and \control{(l\i)}{(-40:0.2)} .. controls +(140:0.1) and \control{(i\i)}{(-75:0.1)} .. controls +(105:0.1) and \control{(h\i)}{(-110:0.1)} .. controls +(70:0.1) and \control{(40:1.5)}{(-50:0.2)};
\end{scope}
\end{scope}

\draw [green!80!purple] ($(10:1.25)+(0.5,-1.8)$) arc (10:-100:1.25);
\begin{scope} [yshift = -3.6cm, xshift = 1 cm]
\clip (-1.6,1.6) rectangle (1.6,-1.6);
\draw [orange] (0,0) circle (1.25);
\end{scope}
\end{scope}

\draw [->] (2.5,-1) -- (3.4,-1);
\draw (4,0) -- (4,-2);
\fill [purple] (4,0) circle (0.1cm);
\fill [green!80!purple] (4,-1) circle (0.1cm);
\fill [orange] (4,-2) circle (0.1cm);

\begin{scope}[xshift = 8cm]
\def\i{1}
\coordinate (e\i) at (15:1.5);
\coordinate (i\i) at (15:1);
\coordinate (l\i) at (2.5:1.25);
\coordinate (h\i) at (27.5:1.25);

\fill [blue!20!white] (h\i) .. controls +(70:0.1) and \control{(40:1.5)}{(-50:0.2)} arc (40:350:1.5) .. controls +(80:0.1) and \control{(l\i)}{(-40:0.2)} .. controls +(-140:0.2) and \control{(-15:1)}{(80:0.2)} arc (-15:-310:1) .. controls +(-50:0.2) and \control{(h\i)}{(170:0.2)};
\fill [white!40!gray] (l\i) .. controls +(40:0.2) and \control{(e\i)}{(-75:0.1)} .. controls +(105:0.1) and \control{(h\i)}{(-10:0.2)} .. controls +(-110:0.1) and \control{(i\i)}{(105:0.1)} .. controls +(-75:0.1) and \control{(l\i)}{(140:0.1)};

\begin{scope}[even odd rule]
\clip (-1.6,1.6) rectangle (1.6,-1.6) (l\i) circle (0.1);
\draw [thick] (50:1) arc (50:345:1) .. controls +(80:0.2) and \control{(l\i)}{(-140:0.2)} .. controls +(40:0.2) and \control{(e\i)}{(-75:0.1)} .. controls +(105:0.1) and \control{(h\i)}{(-10:0.2)} .. controls +(170:0.2) and \control{(50:1)}{(-50:0.2)};
\end{scope}
\begin{scope}[even odd rule]
\clip (-1.6,1.6) rectangle (1.6,-1.6) (h\i) circle (0.1);
\draw [thick] (40:1.5) arc (40:350:1.5) .. controls +(80:0.1) and \control{(l\i)}{(-40:0.2)} .. controls +(140:0.1) and \control{(i\i)}{(-75:0.1)} .. controls +(105:0.1) and \control{(h\i)}{(-110:0.1)} .. controls +(70:0.1) and \control{(40:1.5)}{(-50:0.2)};
\end{scope}
\draw [purple] (10:1.25) arc (10:260:1.25);
\draw [green!80!purple] (-90:1.15) arc (-90:-270:1.15);

\begin{scope}[yshift = -1.8cm, xshift = 0.5 cm]
\def\i{2}
\coordinate (e\i) at (15:1.5);
\coordinate (i\i) at (15:1);
\coordinate (l\i) at (2.5:1.25);
\coordinate (h\i) at (27.5:1.25);

\fill [blue!20!white] (h\i) .. controls +(70:0.1) and \control{(40:1.5)}{(-50:0.2)} arc (40:350:1.5) .. controls +(80:0.1) and \control{(l\i)}{(-40:0.2)} .. controls +(-140:0.2) and \control{(-15:1)}{(80:0.2)} arc (-15:-310:1) .. controls +(-50:0.2) and \control{(h\i)}{(170:0.2)};
\fill [white!40!gray] (l\i) .. controls +(40:0.2) and \control{(e\i)}{(-75:0.1)} .. controls +(105:0.1) and \control{(h\i)}{(-10:0.2)} .. controls +(-110:0.1) and \control{(i\i)}{(105:0.1)} .. controls +(-75:0.1) and \control{(l\i)}{(140:0.1)};

\begin{scope}[even odd rule]
\clip (-1.6,1.6) rectangle (1.6,-1.6) (l\i) circle (0.1);
\clip (-1.6,1.6) rectangle (1.6,-1.6) [yshift = 1.8cm, xshift = -0.5 cm] (-90:1) arc (-90:-10:1) -- (-10:1.5) arc (-10:-90:1.5) --cycle;
\draw [thick] (50:1) arc (50:345:1) .. controls +(80:0.2) and \control{(l\i)}{(-140:0.2)} .. controls +(40:0.2) and \control{(e\i)}{(-75:0.1)} .. controls +(105:0.1) and \control{(h\i)}{(-10:0.2)} .. controls +(170:0.2) and \control{(50:1)}{(-50:0.2)};
\end{scope}
\begin{scope}[even odd rule]
\clip (-1.6,1.6) rectangle (1.6,-1.6) (h\i) circle (0.1);
\clip (-1.6,1.6) rectangle (1.6,-1.6) [yshift = 1.8cm, xshift = -0.5 cm] (-90:1) arc (-90:-10:1) -- (-10:1.5) arc (-10:-90:1.5) --cycle;
\draw [thick] (40:1.5) arc (40:350:1.5) .. controls +(80:0.1) and \control{(l\i)}{(-40:0.2)} .. controls +(140:0.1) and \control{(i\i)}{(-75:0.1)} .. controls +(105:0.1) and \control{(h\i)}{(-110:0.1)} .. controls +(70:0.1) and \control{(40:1.5)}{(-50:0.2)};
\end{scope}
\end{scope}

\draw [purple] (10:1.25) arc (10:-100:1.25);
\draw [green!80!purple] (-90:1.15) arc (-90:-60:1.15) .. controls +(30:0.3) and \control{($(90:1.25)+(0.5,-1.8)$)}{(0:0.3)} arc (90:260:1.25);

\begin{scope}[yshift = -3.6cm, xshift = 1 cm]
\def\i{3}
\coordinate (e\i) at (15:1.5);
\coordinate (i\i) at (15:1);
\coordinate (l\i) at (2.5:1.25);
\coordinate (h\i) at (27.5:1.25);

\fill [blue!20!white] (h\i) .. controls +(70:0.1) and \control{(40:1.5)}{(-50:0.2)} arc (40:350:1.5) .. controls +(80:0.1) and \control{(l\i)}{(-40:0.2)} .. controls +(-140:0.2) and \control{(-15:1)}{(80:0.2)} arc (-15:-310:1) .. controls +(-50:0.2) and \control{(h\i)}{(170:0.2)};
\fill [white!40!gray] (l\i) .. controls +(40:0.2) and \control{(e\i)}{(-75:0.1)} .. controls +(105:0.1) and \control{(h\i)}{(-10:0.2)} .. controls +(-110:0.1) and \control{(i\i)}{(105:0.1)} .. controls +(-75:0.1) and \control{(l\i)}{(140:0.1)};

\begin{scope}[even odd rule]
\clip (-1.6,1.6) rectangle (1.6,-1.6) (l\i) circle (0.1);
\clip (-1.6,1.6) rectangle (1.6,-1.6) [yshift = 1.8cm, xshift = -0.5 cm] (-90:1) arc (-90:-10:1) -- (-10:1.5) arc (-10:-90:1.5) --cycle;
\draw [thick] (50:1) arc (50:345:1) .. controls +(80:0.2) and \control{(l\i)}{(-140:0.2)} .. controls +(40:0.2) and \control{(e\i)}{(-75:0.1)} .. controls +(105:0.1) and \control{(h\i)}{(-10:0.2)} .. controls +(170:0.2) and \control{(50:1)}{(-50:0.2)};
\end{scope}
\begin{scope}[even odd rule]
\clip (-1.6,1.6) rectangle (1.6,-1.6) (h\i) circle (0.1);
\clip (-1.6,1.6) rectangle (1.6,-1.6) [yshift = 1.8cm, xshift = -0.5 cm] (-90:1) arc (-90:-10:1) -- (-10:1.5) arc (-10:-90:1.5) --cycle;
\draw [thick] (40:1.5) arc (40:350:1.5) .. controls +(80:0.1) and \control{(l\i)}{(-40:0.2)} .. controls +(140:0.1) and \control{(i\i)}{(-75:0.1)} .. controls +(105:0.1) and \control{(h\i)}{(-110:0.1)} .. controls +(70:0.1) and \control{(40:1.5)}{(-50:0.2)};
\end{scope}
\end{scope}

\draw [green!80!purple] ($(-100:1.25)+(0.5,-1.8)$) arc (-100:60:1.25) .. controls +(150:0.3) and \control{(-20:1.15)}{(-110:0.3)};
\begin{scope}
\clip (0,0) -- (-30:1) arc (-30:-15:1) .. controls +(80:0.2) and \control{(l1)}{(-140:0.2)} -- (h1) .. controls +(170:0.2) and \control{(50:1)}{(-50:0.2)} arc (50:90:1) -- ++(90:0.5) -- ++(0:1.5) -- ++(0,-3) -- ++(-1.5,0) -- cycle; 
\draw [green!80!purple] (-20:1.15) arc (-20:90:1.15);
\end{scope}
\begin{scope}
\clip (0,0) -- (l1) .. controls +(40:0.2) and \control{(e1)}{(-75:0.1)} .. controls +(105:0.1) and \control{(h1)}{(-10:0.2)} -- cycle; 
\draw [green!80!purple] (0:1.35) arc (0:25:1.35);
\end{scope}
\draw [orange] (1,-3.6) circle (1.25);

\draw [->] (2.5,-1) -- (3.4,-1);
\draw (4,-0.5) -- +(-120:1) -- +(-60:1) -- cycle;
\fill [purple] (4,-0.5) circle (0.1cm);
\fill [green!80!purple] ($(4,-0.5)+(-120:1)$) circle (0.1cm);
\fill [orange] ($(4,-0.5)+(-60:1)$) circle (0.1cm);
\end{scope}
\end{tikzpicture}
\caption{A Hopf arborescent plumbing with a cyclic plumbing graph.}
\label{pic_tree_cycle_plumbing}
\end{center}
\end{figure}
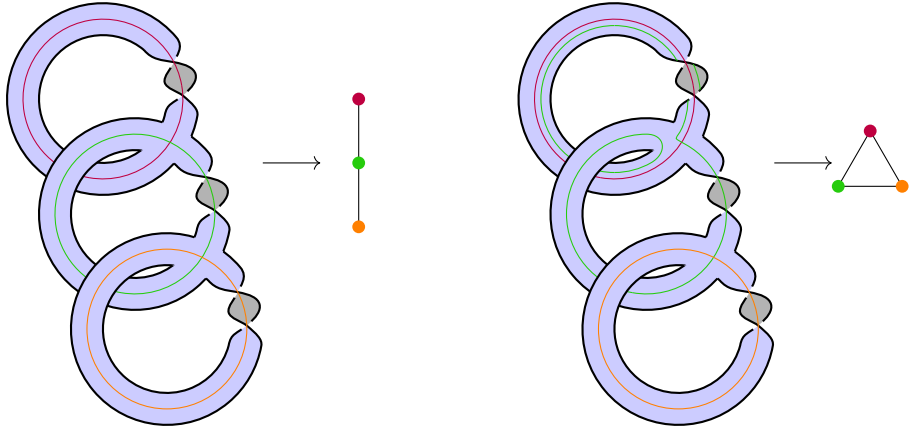

\subsection{Minors on surfaces, links, and plane trees}\label{ss:minors}

Since we focus our investigation on Hopf arborescent links, we define a stronger notion of minor that is well-tailored to these links. We say that a Hopf arborescent link $L_1$ is a \emphdef{link-minor} of $L_2$ if there exist $T_1$ and $T_2$, two labelled plane trees such that $\Sigma(T_1)$ and $\Sigma(T_2)$ are canonical Seifert surfaces of $L_1$ and $L_2$ respectively and $T_1 \hookrightarrow T_2$. The main result of this section is the following one, establishing that if $L_1$ is a link-minor of $L_2$ then the Seifert surface of $L_1$ is a surface-minor of the Seifert surface of $L_2$.

\begin{proposition}\label{P:mainminor}
Let $T_1$ and $T_2$ be two plane trees such that $T_1$ admits a homeomorphic embedding into $T_2$. Then the Hopf arborescent surface $\Sigma (T_1)$ is an incompressible subsurface of~$\Sigma (T_2)$.
\end{proposition}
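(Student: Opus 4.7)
The plan is to proceed by induction on the number of elementary operations realizing $T_1 \hookrightarrow T_2$, reducing to the case of a single operation and appealing to transitivity of the surface-minor relation. The four elementary operations are: (i) removing a leaf, (ii) removing a root with a single child and rerooting at that child, (iii) reducing labels, and (iv) contracting a length-two path $u \to v \to w$, i.e.\ removing a degree-two internal vertex $v$. Since labels lie in $\{+,-\}$ equipped with the empty order $\leq$, operation (iii) is vacuous, so there are three genuine cases to handle.

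For cases (i) and (ii), let $v$ be the removed vertex. The Hopf band $H_v$ is plumbed onto its unique neighbour's band via a square $D$; I would set
\[
\Sigma' = \Sigma(T_2) \setminus (\mathrm{int}(H_v) \setminus D),
\]
which amounts to undoing this plumbing. Up to isotopy this surface is exactly $\Sigma(T_1)$, and the complement $\mathrm{int}(H_v) \setminus D$ is a Hopf-band annulus with an interior disc deleted, i.e.\ a pair of pants, which has Euler characteristic $-1$ and is connected, hence not an open disc.

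Case (iv) is the technical heart of the proof. Given $u \to v \to w$ in $T_2$ with $v$ of degree two, I would construct $\Sigma'$ by excising most of $H_v$ while keeping a narrow strip $R$ inside $H_v$ running along one of the two arcs of the core of $H_v$ between the plumbing squares $D_{uv} = H_v \cap H_u$ and $D_{vw} = H_v \cap H_w$. Setting
\[
\Sigma' = \Sigma(T_2) \setminus (\mathrm{int}(H_v) \setminus (D_{uv} \cup R \cup D_{vw})),
\]
the enlarged rectangle $D_{uv} \cup R \cup D_{vw}$ plays the role of the plumbing square for $H_w$ directly onto $H_u$, so the resulting surface inherits the arborescent structure of $T_1$. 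The excised region is again an annulus with an embedded rectangle removed, topologically a pair of pants with Euler characteristic $-1$, so incompressibility follows as in the previous cases. A straightforward Euler-characteristic check $\chi(\Sigma(T_2)) - \chi(\Sigma(T_1)) = -1$ confirms the expected value and rules out a pathological disconnected complement containing a disc.

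The main obstacle is to verify that $\Sigma'$ in case (iv) is isotopic to $\Sigma(T_1)$ as an \emph{oriented} Seifert surface, meaning that the orientations prescribed by Step~\ref{step:orient} must agree on both sides. In $\Sigma(T_2)$, the orientation of $\alpha(w)$ is obtained from $\alpha(u)$ via two successive turning rules (first using $\ell(u)$ to orient $\alpha(v)$, then using $\ell(v)$ to orient $\alpha(w)$), whereas in $\Sigma(T_1)$ it is set directly from $\alpha(u)$ using only $\ell(u)$. The two arcs of $\alpha(v)$ available for the strip $R$ produce opposite framings for the pushed-off copy of $\alpha(w)$, so a case analysis on the sign $\ell(v)$ is needed to verify that exactly one of the two choices yields the orientation prescribed by the rule applied to $T_1$. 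This is precisely where the turning rule in Step~\ref{step:orient} pays off; I also need to check that the entire subtree rooted at $w$ is carried along intact by this construction, which follows from the locality of the operation around $H_v$.
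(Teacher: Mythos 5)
Your cases (i)--(ii) are, after repair, the same idea as the paper's Lemmas~\ref{lem:minor1} and~\ref{lem:minor2} (undo the plumbing), but your case (iv) is where the proposal has a genuine gap. The whole difficulty of the proposition is to verify that the candidate subsurface is isotopic to $\Sigma(T_1)$ \emph{as the surface built from the plane tree $T_1$ with the orientation rule of Step~\ref{step:orient}}, for every sign pattern of $(\ell(u),\ell(v))$. The paper does this by cutting along a diagonal of the plumbing square $D_{uv}$, which merges $H_u$ and $H_v$ into a single band whose twisting is explicitly computable (one diagonal contributes two negative crossings, the other two positive ones, Figure~\ref{pic_cut_diago}); the sign arithmetic plus the turning convention is exactly what makes the merged band a Hopf band of the right sign and makes the plumbing of $H_w$ land correctly. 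Your alternative --- excise $H_v$ down to a strip $R$ and slide the plumbing of $H_w$ down to $H_u$ --- requires two verifications you do not carry out: first, that $R$ can be chosen untwisted relative to the base, so that the slide produces an honest Hopf plumbing of $H_w$ on $H_u$ rather than an attachment through a band carrying the full twist of $H_v$ (in the latter case the resulting surface, and already its boundary link, is not $\Sigma(T_1)$); second, the case analysis on $(\ell(u),\ell(v))$ showing that the slid plumbing obeys the left/right turning rule for $\ell(u)$ and the planar order prescribed by $T_1$. You explicitly defer both (``a case analysis on the sign $\ell(v)$ is needed'', ``exactly one of the two choices yields\dots''), but this is precisely the content of Lemma~\ref{lem:minor3} and the reason the orientation convention was introduced; as written, the contraction case is an assertion, not a proof.

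The incompressibility justifications are also incorrect as stated. The plumbing square inside a Hopf band is a neighbourhood of its spanning (non-boundary-parallel) arc, so two of its sides lie on $\partial H_v$; removing it from the annulus $H_v$ leaves a disc, not a pair of pants, and removing $D_{uv}\cup R\cup D_{vw}$ leaves a disjoint union of discs, not a connected pair of pants, so the Euler-characteristic argument does not establish what you need. Worse, with the literal set difference $\Sigma(T_2)\setminus(\mathrm{int}(H_v)\setminus D)$ the complementary component contains no point of $\partial\Sigma(T_2)$ and is an open disc, so by the definition of incompressible subsurface used in the paper your $\Sigma'$ fails incompressibility outright. The standard repair is the paper's Lemma~\ref{lem_cut}: realize the deletion as a cut along a non-boundary-parallel arc (a side, or in case (iv) a diagonal, of the plumbing square), so that the complement is a collar retaining open arcs of $\partial\Sigma(T_2)$ and hence has no open-disc component. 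With that repair your cases (i)--(ii) become the paper's argument; case (iv) still lacks the twist and orientation bookkeeping described above.
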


The proof relies on Lemmas~\ref{lem:minor1}, \ref{lem:minor2}, and \ref{lem:minor3}, which correspond respectively to the operations (i), (ii), and (iv) defining homeomorphic embeddings of trees. We first prove:

\begin{lemma}\label{lem_cut}
Let $\Sigma$ be surface and $\gamma$ be an arc that is not boundary-parallel in $\Sigma$ with both extremities in $\partial \Sigma$. Then cutting $\Sigma$ along $\gamma$ yields a surface $\Sigma' = \overline{\Sigma \smallsetminus \gamma}$ such that $\Sigma' \preccurlyeq \Sigma$.
\end{lemma}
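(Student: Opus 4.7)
My plan is to realize $\Sigma'$ explicitly as an incompressible subsurface of $\Sigma$, verifying the definition of $\preccurlyeq$ directly. The first step is to fix a closed tubular neighbourhood $N(\gamma) \subset \Sigma$ of $\gamma$. Since $\gamma$ is properly embedded with both endpoints on $\partial\Sigma$, the neighbourhood $N(\gamma)$ will be a rectangle whose two ``horizontal'' sides lie on $\partial\Sigma$ (one near each endpoint of $\gamma$) while its two ``vertical'' sides are arcs in the interior of $\Sigma$ running parallel to $\gamma$. I would then argue that a short collaring isotopy, pushing the two new boundary arcs of $\Sigma' = \overline{\Sigma \setminus \gamma}$ onto the vertical sides of $N(\gamma)$, identifies the cut surface $\Sigma'$ canonically with the embedded subsurface $\Sigma_0 := \overline{\Sigma \setminus \mathring{N}(\gamma)} \subset \Sigma$.

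Next, I would verify that $\Sigma_0$ is incompressible in $\Sigma$. The complement $\Sigma \setminus \Sigma_0 = \mathring{N}(\gamma)$ is a single connected open region, homeomorphic to $(0,1) \times [0,1]$ in such a way that its two ``horizontal'' arcs sit inside $\partial\Sigma$. As a $2$-manifold with nonempty boundary, $\mathring{N}(\gamma)$ cannot be homeomorphic to an open $2$-disc, so the complement has no open disc component. This would give $\Sigma' \preccurlyeq \Sigma$.

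The hypothesis that $\gamma$ is not boundary-parallel is meant to ensure the resulting minor is genuine rather than degenerate: if $\gamma$ cobounded a disc $D \subset \Sigma$ with an arc of $\partial\Sigma$, cutting would split $D$ off as a trivial disc component of $\Sigma'$ and the minor $\Sigma' \preccurlyeq \Sigma$ would carry essentially no new information beyond an isotopy of $\Sigma$. The only delicate step will be to make the collaring isotopy between $\Sigma'$ and $\Sigma_0$ precise; the incompressibility check is then immediate from the structure of the tubular neighbourhood, so I expect no serious obstacle to the argument.
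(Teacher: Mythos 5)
Your construction is the same as the paper's: you realize the cut surface as $\overline{\Sigma\setminus \mathring N(\gamma)}$ for a tubular neighbourhood $N(\gamma)$, identify it with $\Sigma'$ by a collar push (the paper does this by isotoping the natural map $h$ so that the two copies of $\gamma$ land on the two sides $t_1,t_2$ of $N(\gamma)$), and observe that the complement is the single band $N(\gamma)\setminus(t_1\cup t_2)$, so the incompressibility condition holds; this is correct and complete under the definition as literally stated. The one divergence is the last step: the paper deduces ``no open disc component'' from the hypothesis that $\gamma$ is not boundary-parallel, whereas you deduce it from the fact that the band meets $\partial\Sigma$ and then declare the hypothesis inessential -- be careful here, since your reading makes the incompressibility check vacuous for any complementary region touching $\partial\Sigma$ and would ``prove'' the lemma just as well for a boundary-parallel $\gamma$, where cutting merely splits off a disc, $\Sigma'$ is disconnected (hence not a Seifert surface, so $\preccurlyeq$ does not even apply as defined) and the minor degenerates to an isotopy of $\Sigma$; that degenerate case is exactly what the hypothesis excludes and why the paper's proof invokes it at this point, so keep it rather than treating it as decoration.
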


\begin{proof}
By definition of $\Sigma'$, there is a natural map $h: \Sigma' \rightarrow \Sigma$ that is injective except on $h^{-1} (\gamma) = \gamma_1 \cup \gamma_2$. Let $T$ be a tubular neighbourhood of $\gamma$ in $\Sigma$. Its boundary can be decomposed into $t_1,t_2$, two arcs isotopic to $\gamma$ in $\Sigma$ and two open arcs of $\partial \Sigma$. Isotope $h$ within~$\Sigma$ so that $h(\gamma_1) = t_1$, $h(\gamma_2) = t_2$, and $h(\Sigma') \cap T = t_1 \cup t_2$. It follows that $h(\Sigma')$ is a subsurface of $\Sigma$ such that $\Sigma \smallsetminus \Sigma'$ is not an open disc since $\gamma$ is not boundary-parallel.
\end{proof}

Lemma~\ref{lem_cut} essentially states that our surfaces behave well with respect to the surface-minor relation when cut along any essential arc. 
An important point is that cutting along an arc that is the diagonal of a plumbing rectangle merges two bands into one new band with two extra crossings that are either negative or positive depending on the diagonal, see Figure~\ref{pic_cut_diago}. 
So, cutting the plumbing of two positive Hopf bands along the diagonal that produces two negative crossings yields a positive Hopf band. 
Symmetrically, one can merge two negative Hopf bands into one negative by cutting along the other diagonal. 
Furthermore, when having a plumbing of two bands with opposite signs, one can merge them into a band with either sign depending on which cut one chooses. 

\begin{figure}[h]
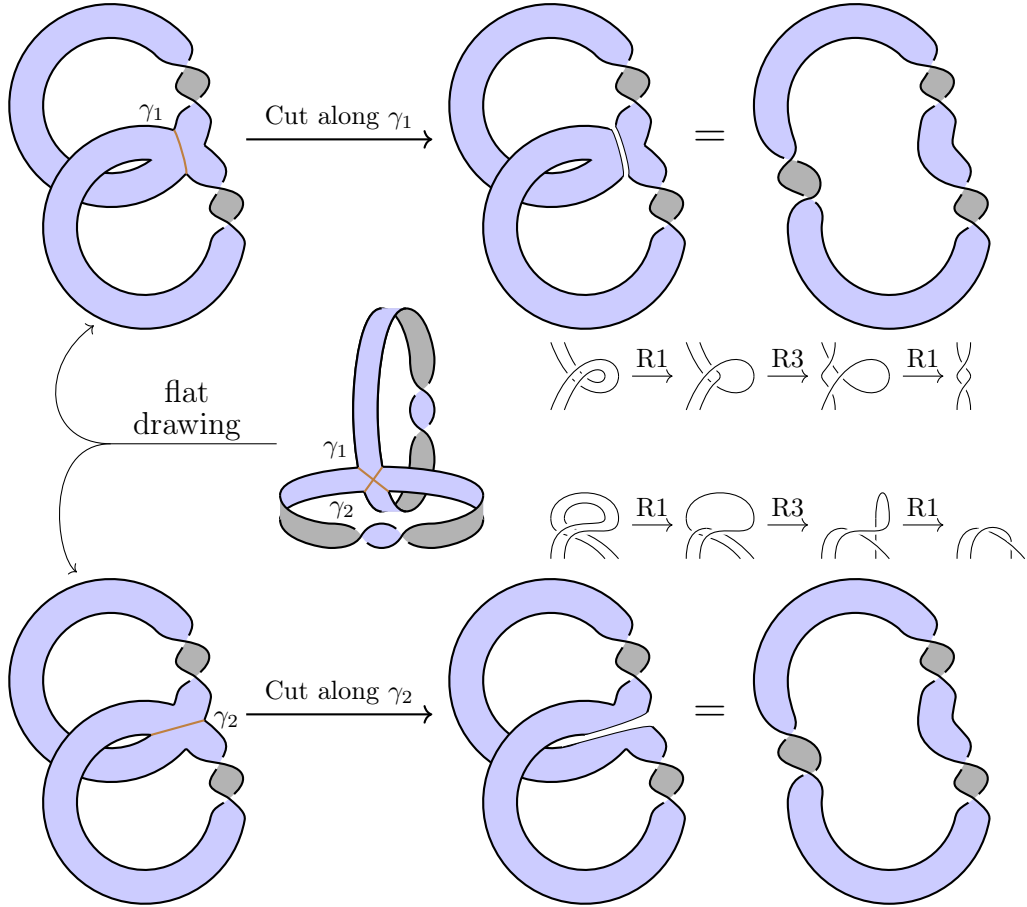

\begin{center}

\caption{Cutting a Hopf plumbing of two positive Hopf bands along the diagonal $\gamma_2$ (resp. $\gamma_1$) induces $2$ positive (resp. negative) crossings. The Reidemeister moves are showed to help understand which crossings are obtained. The top right surface is a Hopf band while the bottom right one is a band with $3$ positive full twists.}
\label{pic_cut_diago}
\end{center}
\end{figure}

\begin{lemma}\label{lem:minor1}
Assume that $T_1$ is obtained from $T_2$ by deleting a leaf. 
Then $\Sigma (T_1)$ is an incompressible subsurface of $\Sigma (T_2)$. 
\end{lemma}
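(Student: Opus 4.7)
The plan is to use the fact that since $v$ is a leaf of $T_2$, the inductive construction of $\Sigma(T_2)$ from Section~\ref{S:HopfArborescent} can be carried out in an order where the Hopf band $H(v)$ associated to $v$ is the last one to be plumbed. This is legitimate because Hopf plumbings along disjoint arcs commute up to isotopy, and because a leaf is maximal in the parent-child partial order on vertices of $T_2$, it can be placed last in any linear extension. Once this reordering is done, $\Sigma(T_2)$ is literally obtained from $\Sigma(T_1)$ by plumbing a single Hopf band $H(v)$ along some arc $\beta_v\subset\Sigma(T_1)$, which immediately gives a natural embedding $\Sigma(T_1)\subseteq\Sigma(T_2)$ as a subsurface.

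The next step is to check incompressibility by computing the complement directly. By the definition of Hopf plumbing, one has $\Sigma(T_2)\setminus\Sigma(T_1)=H(v)\setminus D$, where $D$ is the shared plumbing rectangle; since $H(v)$ is an annulus and $D$ is a rectangle with two opposite sides on $\partial H(v)$, the difference $H(v)\setminus D$ is connected and homeomorphic to a band, its two longer sides being precisely the surviving arcs of $\partial H(v)$, which after plumbing lie on $\partial\Sigma(T_2)$. Because this unique complementary region meets $\partial\Sigma(T_2)$ along non-trivial arcs, it is not an open disc component of $\Sigma(T_2)\setminus\Sigma(T_1)$ in the sense of Section~\ref{ss:minors}, so the inclusion of $\Sigma(T_1)$ into $\Sigma(T_2)$ is incompressible.

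The main obstacle I anticipate is a purely bookkeeping one: justifying that the reordering of plumbings does not change the isotopy class of the intermediate surface, i.e.\ that the copy of $\Sigma(T_1)$ sitting inside $\Sigma(T_2)$ coincides, up to isotopy, with the canonical Hopf arborescent surface associated to $T_1$. Two clean ways to handle this are (i) to invoke the standard commutativity of Hopf plumbings performed along disjoint arcs, noting that in the construction the arcs used for sibling or non-descendant vertices are disjoint from $\beta_v$; or (ii) to observe that every intermediate surface in the construction is fibred (since Hopf bands are fibred and plumbings preserve fibredness, as recalled in Section~\ref{S:prelim}), so by Theorem~\ref{th_fibred_surface_unique} its isotopy class is determined by its boundary link, forcing the embedded copy to agree with $\Sigma(T_1)$.
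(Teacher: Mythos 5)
Your argument is correct, but it establishes incompressibility by a more direct route than the paper. Both proofs rest on the same structural fact, namely that since $v$ is a leaf one may regard $\Sigma(T_2)$ as $\Sigma(T_1)$ with the single Hopf band $H(v)$ plumbed on top along the rectangle $D$; the paper takes this decomposition for granted, whereas you make the reordering explicit and justify it by commutativity of plumbings along disjoint arcs (your option (i) is the right one — option (ii) is shakier, since knowing that the reordered construction yields the same boundary link already presupposes essentially the same locality statement). From there the paper cuts $\Sigma(T_2)$ along one of the two sides $\gamma_1$ of $D$, invokes Lemma~\ref{lem_cut} to get an incompressible subsurface, and then isotopes the leftover disc of $H(v)$ into a neighbourhood of the other side $\gamma_2$ to recognize the cut surface as $\Sigma(T_1)$. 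You instead keep $\Sigma(T_1)$ in place and verify the definition of incompressibility directly: the complement $\Sigma(T_2)\setminus\Sigma(T_1)=H(v)\setminus D$ is a single band meeting $\partial\Sigma(T_2)$ in two arcs, hence has no open disc component. Your version is slightly more elementary (no appeal to Lemma~\ref{lem_cut} and no final isotopy of the residual disc), at the cost of having to spell out the reordering step that the paper's phrasing leaves implicit.
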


\begin{proof}
Let $D$ be the plumbing rectangle of the Hopf band $H$ associated to the additional leaf $v$ of $T_2$ compared to $T_1$. By definition, $D$ has two sides $\gamma_1, \gamma_2$ in $\partial \Sigma (T_1)$. Thus $\gamma_1$ is also an arc of $\Sigma (T_2)$ with its extremities in $\partial \Sigma (T_2)$. By Lemma~\ref{lem_cut}, $\Sigma (T_2)$ cut along $\gamma_1$ is an incompressible subsurface $\Sigma'$ of $\Sigma (T_2)$. Furthermore, the remaining of $H$ is a disc that can be isotoped into a neighbourhood of $\gamma_2$ so that $\Sigma (T_1) = \Sigma' \preccurlyeq \Sigma (T_2)$, see Figure~\ref{pic_lem_delete_leaf}.\end{proof}

\begin{figure}[ht]
\begin{center}
\begin{tikzpicture}
\begin{scope}
\clip (-0.75,-2) rectangle (2.75,2.5);
\fill [blue!20!white] (-2,-1) .. controls +(70:0.5) and \control{(0,0)}{(-166:0.5)} -- (1,0.25) .. controls +(14:1) and \control{(3.5,-0.5)}{(140:0.5)} -- (4.5,-1.25) .. controls +(140:0.5) and \control{(2,-0.5)}{(14:1)} -- (1,-0.75) .. controls +(-166:0.5) and \control{(-1,-1.75)}{(70:0.5)} -- cycle;
\draw (-2,-1) .. controls +(70:0.5) and \control{(0,0)}{(-166:0.5)} -- (1,0.25) .. controls +(14:1) and \control{(3.5,-0.5)}{(140:0.5)}; 
\draw [xshift = 1cm, yshift = -0.75cm] (-2,-1) .. controls +(70:0.5) and \control{(0,0)}{(-166:0.5)} -- (1,0.25) .. controls +(14:1) and \control{(3.5,-0.5)}{(140:0.5)}; 
\node [blue!80!white] at (0,-0.5) {$\Sigma (T_2)$};

\coordinate (h) at (1,1.95);
\coordinate (l) at (1, 1.55); 
\coordinate (i) at (0.5, 1.65);
\coordinate (e) at (1.5, 1.9);

\begin{scope}
\clip (h) .. controls +(180:0.2) and \control{(0,2)}{(-36.75:0.2)} .. controls +(143.25:0.1) and \control{(-0.25,2.15)}{(0:0.1)} .. controls +(180:0.35) and \control{(0,0)}{(143.25:1)} -- (1,-0.75) .. controls +(-36.75:0.1) and \control{(1.25,-0.9)}{(180:0.1)} .. controls +(0:0.35) and \control{(1,1.25)}{(-36.75:1)} -- cycle;
\fill [blue!20!white] (1,2.25) .. controls +(143.25:0.1) and \control{(0.75,2.4)}{(0:0.1)} .. controls +(180:0.35) and \control{(1,0.25)}{(143.25:1)} -- (2,-0.5) .. controls +(-36.75:0.1) and \control{(2.25,-0.65)}{(180:0.1)} -- ++(0, -1) -- ++(-3.5,0) -- ++(0,4.5) -- ++(2.5,0) -- cycle;
\draw [purple] (1,-0.75) -- (2,-0.5);
\end{scope}
\fill [blue!20!white] (h) .. controls +(-150:0.1) and \control{(i)}{(143.25:0.2)} .. controls +(-36.75:0.1) and \control{(l)}{(180:0.2)} .. controls +(60:0.2) and \control{(e)}{(-36.75:0.2)} .. controls +(143.25:0.1) and \control{(h)}{(0:0.2)};
\fill [white!40!gray] (-0.25,2.15) .. controls +(0:0.1) and \control{(0,2)}{(143.25:0.1)} .. controls +(-36.75:0.2) and \control{(h)}{(180:0.2)} .. controls +(30:0.1) and \control{(1,2.25)}{(-36.75:0.2)} .. controls +(143.25:0.1) and \control{(0.75,2.4)}{(0:0.1)}-- cycle;
\fill [white!40!gray] (1,1.25) .. controls +(143.25:0.1) and \control{(l)}{(-130:0.2)} .. controls +(180:0.2) and \control{(2,1.5)}{(143.25:0.2)} .. controls +(-36.75:1) and \control{(2.25,-0.65)}{(0:0.35)} -- (1.25,-0.9) .. controls +(0:0.35) and \control{(1,1.25)}{(-36.75:1)};
\begin{scope}
\clip(1,1.25) .. controls +(143.25:0.1) and \control{(l)}{(-130:0.2)} .. controls +(180:0.2) and \control{(2,1.5)}{(143.25:0.2)} .. controls +(-36.75:1) and \control{(2.25,-0.65)}{(0:0.35)} -- (1.25,-0.9) .. controls +(0:0.35) and \control{(1,1.25)}{(-36.75:1)};
\draw [purple, dotted] (1,-0.75) -- (2,-0.5);
\end{scope}

\begin{scope}
\clip (-0.25,2.15) .. controls +(0:0.1) and \control{(0,2)}{(143.25:0.1)} .. controls +(-36.75:0.2) and \control{(h)}{(180:0.2)} .. controls +(30:0.1) and \control{(1,2.25)}{(-36.75:0.2)} .. controls +(143.25:0.1) and \control{(0.75,2.4)}{(0:0.1)}-- cycle;
\draw [dashed] (0.75,2.4) .. controls +(180:0.35) and \control{(1,0.25)}{(143.25:1)};
\end{scope}
\begin{scope}
\clip (1,1.25) .. controls +(143.25:0.1) and \control{(l)}{(-130:0.2)} .. controls +(180:0.2) and \control{(2,1.5)}{(143.25:0.2)} .. controls +(-36.75:1) and \control{(2.25,-0.65)}{(0:0.35)} -- (1.25,-0.9) .. controls +(0:0.35) and \control{(1,1.25)}{(-36.75:1)};
\draw [dashed] (4.5, -1.25) .. controls +(140:0.5) and \control{(2,-0.5)}{(14:1)} .. controls +(-36.75:0.1) and \control{(2.25,-0.65)}{(180:0.1)};
\end{scope}
\begin{scope}
\clip (h) .. controls +(180:0.2) and \control{(0,2)}{(-36.75:0.2)} .. controls +(143.25:0.1) and \control{(-0.25,2.15)}{(0:0.1)} .. controls +(180:0.35) and \control{(0,0)}{(143.25:1)} -- (1,-0.75) .. controls +(-36.75:0.1) and \control{(1.25,-0.9)}{(180:0.1)} .. controls +(0:0.35) and \control{(1,1.25)}{(-36.75:1)} -- cycle;
\draw (0.75,2.4) .. controls +(180:0.35) and \control{(1,0.25)}{(143.25:1)};
\end{scope}
\draw (1,2.25) .. controls +(143.25:0.1) and \control{(0.75,2.4)}{(0:0.1)};
\draw (2.25,-0.65) .. controls +(0:0.35) and \control{(2,1.5)}{(-36.75:1)};
\draw (0,2) .. controls +(143.25:0.1) and \control{(-0.25,2.15)}{(0:0.1)} .. controls +(180:0.35) and \control{(0,0)}{(143.25:1)} (1,-0.75) .. controls +(-36.75:0.1) and \control{(1.25,-0.9)}{(180:0.1)} .. controls +(0:0.35) and \control{(1,1.25)}{(-36.75:1)} ;
\begin{scope}[even odd rule] 
\clip (0, 2.25) rectangle (3, 1) (l) circle (0.1cm);
\draw (1,1.25) .. controls +(143.25:0.1) and \control{(l)}{(-130:0.2)} .. controls +(60:0.2) and \control{(e)}{(-36.75:0.2)};
\end{scope}
\draw (e) .. controls +(143.25:0.1) and \control{(h)}{(0:0.2)} .. controls +(180:0.2) and \control{(0,2)}{(-36.75:0.2)};
\begin{scope}[even odd rule] 
\clip (0,2) -- (1,1.25) -- (2,1.5) -- (1,2.25) -- (0,2) (h) circle (0.13cm);
\draw (1,2.25) .. controls +(-36.75:0.2) and \control{(h)}{(30:0.1)} .. controls +(-150:0.1) and \control{(i)}{(143.25:0.2)};
\end{scope}
\draw (i) .. controls +(-36.75:0.1) and \control{(l)}{(180:0.2)} .. controls +(0:0.2) and \control{(2,1.5)}{(143.25:0.2)};
\node at (1.85,0.75) {$H$};
\draw [purple] (0,0) -- (1,0.25);
\node at (0,0) [above right, purple] {$\gamma_2$};
\node at (1,-0.75) [above, purple] {$\gamma_1$};
\end{scope}

\node at (3.75,0) [align = center] {cut \\ along $\gamma_1$};
\draw [->] (3,-0.5) -- (4.5,-0.5);

\begin{scope}[xshift = 5.5cm]
\clip (-0.75,-2) rectangle (2.75,2.5);
\fill [blue!20!white] (-2,-1) .. controls +(70:0.5) and \control{(0,0)}{(-166:0.5)} -- (1,0.25) .. controls +(14:1) and \control{(3.5,-0.5)}{(140:0.5)} -- (4.5,-1.25) .. controls +(140:0.5) and \control{(2,-0.5)}{(14:1)} -- (1,-0.75) .. controls +(-166:0.5) and \control{(-1,-1.75)}{(70:0.5)} -- cycle;
\draw (-2,-1) .. controls +(70:0.5) and \control{(0,0)}{(-166:0.5)} -- (1,0.25) .. controls +(14:1) and \control{(3.5,-0.5)}{(140:0.5)}; 
\draw [xshift = 1cm, yshift = -0.75cm] (-2,-1) .. controls +(70:0.5) and \control{(0,0)}{(-166:0.5)} -- (1,0.25) .. controls +(14:1) and \control{(3.5,-0.5)}{(140:0.5)}; 
\node [blue!80!white] at (0,-0.5) {$\Sigma'$};

\coordinate (h) at (1,1.95);
\coordinate (l) at (1, 1.55); 
\coordinate (i) at (0.5, 1.65);
\coordinate (e) at (1.5, 1.9);

\begin{scope}
\clip (h) .. controls +(180:0.2) and \control{(0,2)}{(-36.75:0.2)} .. controls +(143.25:0.1) and \control{(-0.25,2.15)}{(0:0.1)} .. controls +(180:0.35) and \control{(0,0)}{(143.25:1)} -- (1,-0.75) .. controls +(-36.75:0.1) and \control{(1.25,-0.9)}{(180:0.1)} .. controls +(0:0.35) and \control{(1,1.25)}{(-36.75:1)} -- cycle;
\fill [blue!20!white] (1,2.25) .. controls +(143.25:0.1) and \control{(0.75,2.4)}{(0:0.1)} .. controls +(180:0.35) and \control{(1,0.25)}{(143.25:1)} -- (2,-0.5)-- (1,-0.75) -- ++(0, -1) -- ++(-3.5,0) -- ++(0,4.5) -- ++(2.5,0) -- cycle;
\draw (1,-0.75) -- (2,-0.5);
\end{scope}
\fill [blue!20!white] (h) .. controls +(-150:0.1) and \control{(i)}{(143.25:0.2)} .. controls +(-36.75:0.1) and \control{(l)}{(180:0.2)} .. controls +(60:0.2) and \control{(e)}{(-36.75:0.2)} .. controls +(143.25:0.1) and \control{(h)}{(0:0.2)};
\fill [white!40!gray] (-0.25,2.15) .. controls +(0:0.1) and \control{(0,2)}{(143.25:0.1)} .. controls +(-36.75:0.2) and \control{(h)}{(180:0.2)} .. controls +(30:0.1) and \control{(1,2.25)}{(-36.75:0.2)} .. controls +(143.25:0.1) and \control{(0.75,2.4)}{(0:0.1)}-- cycle;
\fill [white!40!gray] (1,1.25) .. controls +(143.25:0.1) and \control{(l)}{(-130:0.2)} .. controls +(180:0.2) and \control{(2,1.5)}{(143.25:0.2)} .. controls +(-36.75:1) and \control{(2.25,-0.65)}{(0:0.35)} -- (1.25,-0.9) .. controls +(0:0.35) and \control{(1,1.25)}{(-36.75:1)};
\begin{scope}
\clip(1,1.25) .. controls +(143.25:0.1) and \control{(l)}{(-130:0.2)} .. controls +(180:0.2) and \control{(2,1.5)}{(143.25:0.2)} .. controls +(-36.75:1) and \control{(2.25,-0.65)}{(0:0.35)} -- (1.25,-0.9) .. controls +(0:0.35) and \control{(1,1.25)}{(-36.75:1)};
\end{scope}

\begin{scope}
\clip (-0.25,2.15) .. controls +(0:0.1) and \control{(0,2)}{(143.25:0.1)} .. controls +(-36.75:0.2) and \control{(h)}{(180:0.2)} .. controls +(30:0.1) and \control{(1,2.25)}{(-36.75:0.2)} .. controls +(143.25:0.1) and \control{(0.75,2.4)}{(0:0.1)}-- cycle;
\draw [dashed] (0.75,2.4) .. controls +(180:0.35) and \control{(1,0.25)}{(143.25:1)};
\end{scope}
\begin{scope}
\clip (1,1.25) .. controls +(143.25:0.1) and \control{(l)}{(-130:0.2)} .. controls +(180:0.2) and \control{(2,1.5)}{(143.25:0.2)} .. controls +(-36.75:1) and \control{(2.25,-0.65)}{(0:0.35)} -- (1.25,-0.9) .. controls +(0:0.35) and \control{(1,1.25)}{(-36.75:1)};
\draw [dashed] (4.5, -1.25) .. controls +(140:0.5) and \control{(2,-0.5)}{(14:1)} -- (1,-0.75);
\end{scope}
\begin{scope}
\clip (h) .. controls +(180:0.2) and \control{(0,2)}{(-36.75:0.2)} .. controls +(143.25:0.1) and \control{(-0.25,2.15)}{(0:0.1)} .. controls +(180:0.35) and \control{(0,0)}{(143.25:1)} -- (1,-0.75) .. controls +(-36.75:0.1) and \control{(1.25,-0.9)}{(180:0.1)} .. controls +(0:0.35) and \control{(1,1.25)}{(-36.75:1)} -- cycle;
\draw (0.75,2.4) .. controls +(180:0.35) and \control{(1,0.25)}{(143.25:1)};
\end{scope}
\draw (1,2.25) .. controls +(143.25:0.1) and \control{(0.75,2.4)}{(0:0.1)};
\draw (2.25,-0.65) .. controls +(0:0.35) and \control{(2,1.5)}{(-36.75:1)};
\draw (0,2) .. controls +(143.25:0.1) and \control{(-0.25,2.15)}{(0:0.1)} .. controls +(180:0.35) and \control{(0,0)}{(143.25:1)} (1,-0.75) (1.25,-0.9) .. controls +(0:0.35) and \control{(1,1.25)}{(-36.75:1)} ;
\begin{scope}[even odd rule] 
\clip (0, 2.25) rectangle (3, 1) (l) circle (0.1cm);
\draw (1,1.25) .. controls +(143.25:0.1) and \control{(l)}{(-130:0.2)} .. controls +(60:0.2) and \control{(e)}{(-36.75:0.2)};
\end{scope}
\draw (e) .. controls +(143.25:0.1) and \control{(h)}{(0:0.2)} .. controls +(180:0.2) and \control{(0,2)}{(-36.75:0.2)};
\begin{scope}[even odd rule] 
\clip (0,2) -- (1,1.25) -- (2,1.5) -- (1,2.25) -- (0,2) (h) circle (0.13cm);
\draw (1,2.25) .. controls +(-36.75:0.2) and \control{(h)}{(30:0.1)} .. controls +(-150:0.1) and \control{(i)}{(143.25:0.2)};
\end{scope}
\draw (i) .. controls +(-36.75:0.1) and \control{(l)}{(180:0.2)} .. controls +(0:0.2) and \control{(2,1.5)}{(143.25:0.2)};
\node at (1.85,0.75) {$H$};
\draw [purple] (0,0) -- (1,0.25);
\draw (2.25,-0.65) -- (1.25,-0.9);
\node at (0,0) [above right, purple] {$\gamma_2$};
\end{scope}

\node at (8.75,-0.5) {{\huge $=$}};

\begin{scope}[xshift = 10cm]
\clip (-0.75,-2) rectangle (2.75,2.5);
\fill [blue!20!white] (-2,-1) .. controls +(70:0.5) and \control{(0,0)}{(-166:0.5)} -- (1,0.25) .. controls +(14:1) and \control{(3.5,-0.5)}{(140:0.5)} -- (4.5,-1.25) .. controls +(140:0.5) and \control{(2,-0.5)}{(14:1)} -- (1,-0.75) .. controls +(-166:0.5) and \control{(-1,-1.75)}{(70:0.5)} -- cycle;
\draw (-2,-1) .. controls +(70:0.5) and \control{(0,0)}{(-166:0.5)} -- (1,0.25) .. controls +(14:1) and \control{(3.5,-0.5)}{(140:0.5)}; 
\draw [xshift = 1cm, yshift = -0.75cm] (-2,-1) .. controls +(70:0.5) and \control{(0,0)}{(-166:0.5)} -- (1,0.25) .. controls +(14:1) and \control{(3.5,-0.5)}{(140:0.5)}; 
\node [blue!80!white] at (0,-0.5) {$\Sigma(T_1)$};
\end{scope}

\end{tikzpicture}
\caption{Deleting a leaf yields a surface-minor.}
\label{pic_lem_delete_leaf}
\end{center}
\end{figure}
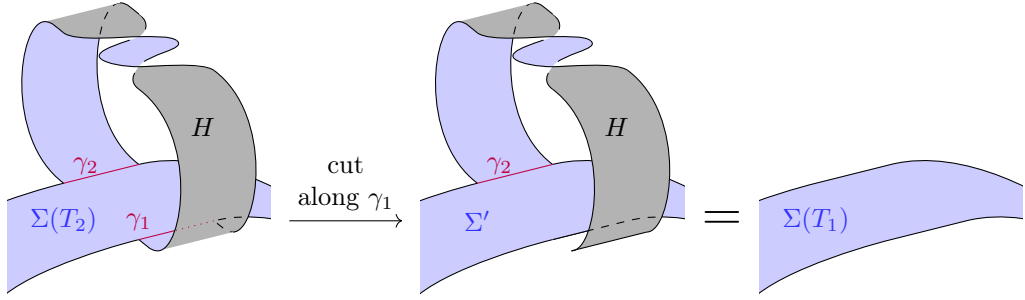

A very similar proof yields the following lemma.

\begin{lemma}\label{lem:minor2}
Assume that $T_1$ is a plane tree whose root has only one child, $T_2$ is the subtree rooted at that child. Then $\Sigma (T_1)$ is an incompressible subsurface of $\Sigma (T_2)$.
\end{lemma}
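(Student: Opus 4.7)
The plan is to follow the template of Lemma~\ref{lem:minor1}, transposing the argument from a removed leaf to a removed root-with-single-child. Let $r$ denote the root of $T_1$ and $c$ its unique child, so that $T_2$ is the subtree of $T_1$ rooted at~$c$. The two trees then differ by exactly one vertex, and correspondingly the two surfaces differ by exactly one plumbed Hopf band: the band $H = H(r)$ associated to the root, whose only intersection within the larger structure is with $H(c)$. By the interchangeability of top- and bottom-plumbings discussed in Section~\ref{sec_connections} and illustrated in Figure~\ref{F:Misev}, we may write $\Sigma(T_1) = \Sigma(T_2) \cup_D H$, where $D$ is the plumbing rectangle between $H$ and $H(c)$ whose two sides $\gamma_1, \gamma_2$ lie on $\partial\Sigma(T_2)$ prior to plumbing.

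After the plumbing, $\gamma_1$ and $\gamma_2$ become arcs in $\Sigma(T_1)$ with extremities on $\partial\Sigma(T_1)$; neither is boundary-parallel, since the two boundary components of the Hopf band $H$ are linked as the Hopf link. Invoking Lemma~\ref{lem_cut} with the arc $\gamma_1$, the surface $\Sigma(T_1)$ cut along $\gamma_1$ is an incompressible subsurface $\Sigma'$ of $\Sigma(T_1)$. The portion of $H$ surviving the cut is a disc, which can be isotoped rel boundary into a tubular neighbourhood of $\gamma_2$ without perturbing the rest of $\Sigma(T_2)$. The surface resulting from this isotopy is isotopic to $\Sigma(T_2)$, which realises the desired surface-minor relation between $\Sigma(T_1)$ and $\Sigma(T_2)$. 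The geometric picture is entirely analogous to Figure~\ref{pic_lem_delete_leaf}, with the root band $H(r)$ playing the role of the leaf band removed in Lemma~\ref{lem:minor1}.

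The principal technical point, exactly as in Lemma~\ref{lem:minor1}, is the verification that $\gamma_1$ is non-boundary-parallel in the plumbed surface; this is guaranteed by the non-triviality of the Hopf band $H$ (an untwisted annulus would fail here, which is precisely why the minor theory developed in this paper is specific to Hopf plumbings). Given this, Lemma~\ref{lem_cut} furnishes the incompressible inclusion, and the disc isotopy completes the proof. No genuinely new ideas beyond those in the proof of Lemma~\ref{lem:minor1} are required; the only adaptation is the identification of the ``extra'' band as the root band rather than as a leaf band, which is made legitimate by the top/bottom plumbing equivalence of Section~\ref{sec_connections}.
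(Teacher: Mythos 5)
Your argument is correct and follows essentially the same route as the paper: the paper's own proof of this lemma is a one-line reduction to the proof of Lemma~\ref{lem:minor1}, observing that the larger surface is obtained from the smaller one by plumbing a single Hopf band and that cutting along a side of the plumbing rectangle, via Lemma~\ref{lem_cut}, yields the required incompressible subsurface. Your extra appeal to the top/bottom plumbing equivalence of Figure~\ref{F:Misev} to realise the root band as a band plumbed onto $\Sigma$ of the subtree matches the discussion the paper defers to Section~\ref{sec_connections}, and you prove the evidently intended direction of the inclusion (the subtree's surface sits inside the full tree's surface), which is the one needed for Proposition~\ref{P:mainminor}.
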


\begin{proof}
The proof is identical to the previous one: in that situation, $\Sigma (T_1)$ is obtained from~$\Sigma (T_2)$ by plumbing a Hopf band, and cutting along one of the boundaries of the plumbing disc provides the needed incompressible subsurface, as in Figure~\ref{pic_lem_delete_leaf}.
\end{proof}

\begin{lemma}\label{lem:minor3}
Assume that $T_2$ is a plane tree in which $u\to v \to w$ are three consecutive vertices where $v$ has degree~$2$, and that
$T_1$ is obtained from $T_2$ by contracting $u\to v\to w$ into a single edge~$u\to w$, while preserving the labels of the endpoints. 
Then $\Sigma (T_1)$ is an incompressible subsurface of $\Sigma (T_2)$. 
\end{lemma}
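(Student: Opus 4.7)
The plan is to exhibit a non-boundary-parallel arc $\gamma$ in $\Sigma(T_2)$, with both endpoints on $\partial\Sigma(T_2)$, such that cutting $\Sigma(T_2)$ along $\gamma$ produces a surface isotopic to $\Sigma(T_1)$. Lemma~\ref{lem_cut} will then immediately yield $\Sigma(T_1)\preccurlyeq\Sigma(T_2)$.

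For $\gamma$ I would take a well-chosen diagonal of the plumbing square $D_w\subset\Sigma(T_2)$ used to plumb $H(w)$ on top of $H(v)$. The four corners of $D_w$ lie on $\partial\Sigma(T_2)$, since they are exactly the transition points between arcs of $\partial H(v)$ and arcs of $\partial H(w)$ on the boundary of the plumbed surface; hence either diagonal of $D_w$ is an arc from $\partial\Sigma(T_2)$ to $\partial\Sigma(T_2)$. Cutting along such a diagonal realizes the local operation pictured in Figure~\ref{pic_cut_diago}: the plumbed pair $H(v)\cup H(w)$ collapses into a single band, while everything outside a neighbourhood of $H(v)\cup H(w)$ is unaffected. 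The caption of that figure records that, for every sign combination $(s_v,s_w)$, one of the two diagonals produces a merged band that is a Hopf band of sign $s_w$: when $s_v=s_w$ exactly one diagonal yields a Hopf band (of their common sign), while when $s_v\neq s_w$ both diagonals yield Hopf bands of opposite signs, so we select the one producing sign $s_w$. We take $\gamma$ to be this specific diagonal.

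After the cut, the resulting surface $\Sigma'$ coincides with $\Sigma(T_2)$ outside a neighbourhood of $H(v)\cup H(w)$, and inside this neighbourhood the pair $H(v)\cup H(w)$ has been replaced by a single Hopf band $H'$ of sign $s_w$, attached to the rest of the surface through the original plumbing square $D_u$ between $H(u)$ and $H(v)$. Contracting the path $u\to v\to w$ to the single edge $u\to w$ preserves the cyclic ordering of $u$'s children, so the position of $D_u$ on $H(u)$ matches the position at which the construction of Section~\ref{S:HopfArborescent}, applied to $T_1$, plumbs $H(w)$ on top of $H(u)$; the subtrees of $T_2$ hanging at $u$ (away from the $v$-branch) and at $w$ are untouched by the cut. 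These observations combine to identify $\Sigma'$ with $\Sigma(T_1)$ up to isotopy, so an application of Lemma~\ref{lem_cut} concludes.

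The hard part will be to rigorously justify the local picture of Figure~\ref{pic_cut_diago} for all four cases of $(s_v,s_w)$, and then to verify that the orientation of the core of $H'$ inherited from the cut matches the orientation rule of Step~\ref{step:orient} applied to $T_1$: in $\Sigma(T_2)$ the orientation of $\alpha(w)$ is keyed on $s_v$, whereas in $\Sigma(T_1)$ it must be keyed on $s_u$. This is exactly where the specific orientation convention adopted in Section~\ref{S:HopfArborescent}---always plumbing on top when going away from the root, and alternating the turning direction with the sign of the parent---should turn out to be essential, and a careful case analysis on the signs of $u$, $v$ and $w$ is likely to be needed.
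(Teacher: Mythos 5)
Your overall strategy is the paper's strategy: cut along a well-chosen diagonal of a plumbing rectangle incident to $v$ and invoke Lemma~\ref{lem_cut}. The paper cuts the rectangle of the edge $u\to v$, so its merged band must carry the sign of $u$; in the mixed-sign cases this forces the orientation-mismatching diagonal, and the whole proof then rests on a cancellation between that mismatch and the turning rule of Step~\ref{step:orient} at the $v$--$w$ rectangle. You instead cut the rectangle $D_w$ of the edge $v\to w$ and ask the merged band to carry the sign $s_w$; this is a legitimate symmetric variant. The gap is that the identification of the cut surface with $\Sigma(T_1)$ is exactly the content of the lemma, and your proposal stops where that work begins: you choose the diagonal purely by the crossing count ($2s_v+2s_w\pm 2$) and then explicitly defer the orientation verification as ``the hard part''. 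Selecting the diagonal by sign alone does not by itself guarantee that the cut is compatible with the orientation convention: a priori the sign-correct diagonal could be the orientation-mismatching one, in which case the merged core would violate the rule of Step~\ref{step:orient} at $D_u$ or along the subtree hanging at $w$, and one would need a cancellation argument of the paper's kind. So as written, $\Sigma'\cong\Sigma(T_1)$ is asserted, not proved.

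The gap is fillable, and in your set-up it fills in cleanly, which is worth recording. With the convention of Section~\ref{S:HopfArborescent}, the turn from $\alpha(v)$ to $\alpha(w)$ at $D_w$ is left or right according to $s_v$, so the orientation-preserving diagonal contributes a twist of sign $-s_v$ (this is the computation of Figure~\ref{pic_cut_diago}); hence it yields a band with $2s_v+2s_w-2s_v=2s_w$ crossings, i.e.\ in all four sign cases the $s_w$-producing diagonal is automatically the orientation-preserving one, and no mismatch ever arises. Consequently the merged core keeps the orientation of $\alpha(v)$ at $D_u$, which was already keyed on $\ell(u)$, so the rule required by $T_1$ holds there (your worry that the orientation is ``keyed on $s_v$ but must be keyed on $s_u$'' conflates the two rectangles: the rule orients a child core from the parent's label only, and at $D_u$ the relevant orientation is inherited from $\alpha(v)$, not from $\alpha(w)$); at the rectangles of $w$'s children the merged core keeps the orientation of $\alpha(w)$ and the merged band has sign $s_w$, so the rule there is unchanged; finally one must note that the cyclic order of these intersection points along the merged core matches the plane structure of $T_1$ at $w$. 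Carrying out these checks, case by case, is precisely what the paper's proof does for its own choice of rectangle and what is missing from yours.
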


\begin{proof}
  By the construction of Hopf arborescent surfaces, the edge between $u$ and $v$ in~$T_2$ corresponds to a plumbing rectangle~$D$. It is important to recall here our orientation convention: if $u$ is labelled positively, the cores $\alpha(u)$ and $\alpha(v)$ are oriented so that one turns to the left when going from $u$ to $v$ at the rectangle $D$, while if $v$ is labelled negatively, one turns to the right, see Figure~\ref{pic_orient_plumbing}. Now, we consider two diagonal arcs $\gamma_1$ and $\gamma_2$ on the plumbing rectangle $D$ as pictured in Figure~\ref{pic_cut_diago}. When cutting along such a diagonal arc, we obtain a new surface in which the cores $\alpha(u)$ and $\alpha(v)$ merge into a single core. However, their orientations might mismatch, depending on whether we cut along $\gamma_1$ or $\gamma_2$. We take the convention that $\gamma_1$ is the arc the preserves the orientations, while $\gamma_2$ induces an orientation mismatch, see Figures~\ref{pic_cut_diago} and~\ref{pic_contract_path}.

  Now, let us first consider the case where the labels of $u$ and $v$ are the same. In this case, we consider the subsurface $\Sigma'$ of $\Sigma (T_2)$ obtained by cutting along $\gamma_1$. This has the effect of merging the core curves $\alpha(u)$ and $\alpha(v)$  in a way that respects their orientations. However, it might seem that since each curve $\alpha(u)$ and $\alpha(v)$ corresponds to a Hopf band, merging them like that yields a band that twists too much. But a key observation is that cutting along $\gamma_1$ adds a twist between these two bands, as pictured in Figure~\ref{pic_cut_diago}, and this twist is negative when the bands are positive, while it is positive when the bands are negative (indeed, this is the reason for our orientation convention). Therefore, the resulting surface $\Sigma'$ is exactly the same as the one corresponding to the tree $T_1$, and therefore $\Sigma (T_1)$ is an incompressible subsurface of $\Sigma (T_2)$ by Lemma~\ref{lem_cut}. See the top and bottom pictures of Figure~\ref{pic_contract_path} for an illustration.

  Now, let us consider the case where the label of $u$ is $+$ while the label of $v$ is $-$. In that case, we consider the surface $\Sigma'$ of $\Sigma (T_2)$ obtained by cutting along $\gamma_2$. This has the effect of merging the core curves $\alpha(u)$ and $\alpha(v)$ but with an orientation mismatch. We take the convention that the resulting core curve $\alpha'$ is oriented by $\alpha(u)$, and therefore disagrees with the orientation of $\alpha(v)$ while it follows it. Since $u$ and $v$ are labelled with opposite signs the two twists on their Hopf bands cancel out, but cutting along $\gamma_2$ adds a new positive twist, therefore we can consider $\alpha'$ as being the core curve of a positive Hopf band. Now, let us consider the plumbing rectangle $D'$ corresponding to the edge between $v$ and $w$. Due to the orientation mismatch, arriving at this rectangle from $\alpha'$, we are oriented in the direction opposed to the one we would arrive with if we were arriving from $\alpha(v)$. But due to the orientation convention, when going from $\alpha(v)$ to $\alpha(w)$ in $\Sigma (T_2)$ we turn to the right since $v$ is negative, while when going from $\alpha'$ to $\alpha(w)$ in $\Sigma'$ we turn to the left since $\alpha'$ is a positive band. Therefore, this effect cancels out the orientation mismatch, and $\Sigma'$ coincides exactly with the surface $\Sigma (T_1)$ corresponding to the tree $T_1$. Therefore $\Sigma (T_1)$ is an incompressible subsurface of $\Sigma (T_2)$ by Lemma~\ref{lem_cut}. See the third picture of Figure~\ref{pic_contract_path} for an illustration.

  The same cancellation effect happens when the label of $u$ is $-$ and the label of $v$ is $+$: when cutting along $\gamma_2$ we have an orientation mismatch which is cancelled out by the the fact that the new band is negative, and thus the orientation convention makes it turn in the opposite direction in the plumbing rectangle between $v$ and $w$. Therefore, in that case $\Sigma (T_1)$ is also an incompressible subsurface of $\Sigma (T_2)$ thanks to Lemma~\ref{lem_cut}. This is illustrated in the second picture of Figure~\ref{pic_contract_path}.
\end{proof}

\begin{figure}
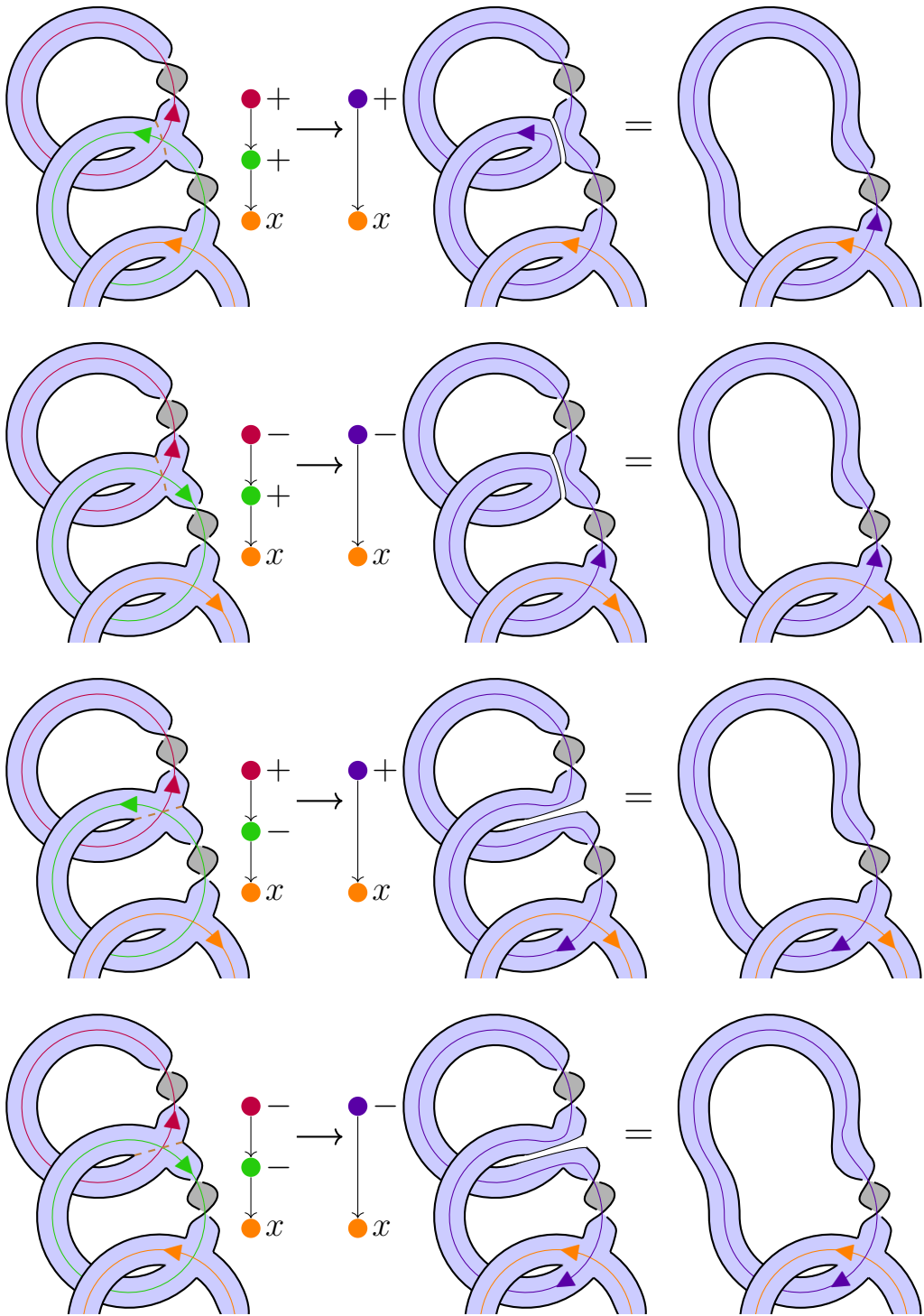

\begin{center}

\caption{All cases of contraction of a $3$-path to an edge preserving the labels of the endpoints.}
\label{pic_contract_path}
\end{center}
\end{figure}

As a corollary, contracting any path of $T_1$ into an edge whose labels match with the labels of the extremities of the path produces a tree $T_2$ such that $\Sigma (T_1) \preccurlyeq \Sigma (T_2)$.

\begin{proof}[Proof of Proposition~\ref{P:mainminor}]
By definition, if $T_1$ admits a homeomorphic embedding into $T_2$, it can be obtained iteratively from $T_2$ by (i) removing a child leaf, (ii) removing a parent leaf, (iii) reducing a label or (iv) contracting a path while preserving the labels of the endpoints. Since no two elements on the alphabet $\{+,-\}$ are comparable, case (iii) cannot happen. Then the cases (i), (ii) and (iv) are handled respectively by Lemma~\ref{lem:minor1}, Lemma~\ref{lem:minor2} and Lemma~\ref{lem:minor3}.\end{proof}

On the other hand, the Kruskal Tree Theorem directly yields the following proposition.

\begin{proposition}\label{prop_link_wqo}
Hopf arborescent links are well-quasi-ordered under the link-minor relation.
\end{proposition}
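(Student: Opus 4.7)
The plan is to derive this proposition essentially as an immediate consequence of Kruskal's tree theorem (Theorem~\ref{T:kruskal}), using the definition of the link-minor relation as a pullback along the encoding by labelled plane trees.

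First I would note that by the very definition of Hopf arborescent links (Section~\ref{S:HopfArborescent}), every such link $L$ arises as the boundary of $\Sigma(T)$ for at least one labelled plane tree $T$ over the alphabet $\{+,-\}$, and this surface $\Sigma(T)$ is, by Theorem~\ref{th_fibred_surface_unique} (together with the fact that Hopf plumbings preserve fibredness), the canonical Seifert surface of $L$. So one can define a map $\Phi$ from the set of labelled plane trees to the set of Hopf arborescent links, sending $T$ to $\partial \Sigma(T)$, which is surjective.

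Given an arbitrary infinite sequence $(L_n)_{n\in\N}$ of Hopf arborescent links, the plan is then to pick, for each $n$, a labelled plane tree $T_n$ such that $\Phi(T_n)=L_n$. This yields an infinite sequence $(T_n)_{n\in\N}$ of plane trees labelled by the alphabet $\{+,-\}$ equipped with the trivial (empty) order~$\le$ (which is trivially a well-quasi-order since the alphabet is finite). Applying Theorem~\ref{T:kruskal}, there exist indices $i<j$ with $T_i \hookrightarrow T_j$.

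Finally, by the very definition of the link-minor relation given at the start of Section~\ref{ss:minors}, the existence of such a pair $(T_i,T_j)$ with $\Sigma(T_i)$ and $\Sigma(T_j)$ canonical Seifert surfaces of $L_i$ and $L_j$ and $T_i\hookrightarrow T_j$ is exactly the statement that $L_i$ is a link-minor of $L_j$. This concludes the argument. There is no substantial obstacle: the proposition is a direct combinatorial consequence of Kruskal's theorem, and all the topological content (that homeomorphic embeddings of trees induce surface inclusions) is irrelevant here since link-minor is defined purely in terms of $\hookrightarrow$; the topological consequences of this definition are what is used in Proposition~\ref{P:mainminor}, not here.
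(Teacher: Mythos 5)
Your proposal is correct and follows essentially the same route as the paper: choose for each link an encoding plane tree, apply Kruskal's theorem (Theorem~\ref{T:kruskal}) to the resulting sequence of $\{+,-\}$-labelled plane trees, and conclude directly from the definition of link-minor. The extra remark that the surfaces $\Sigma(T_n)$ are the canonical Seifert surfaces (via fibredness and Theorem~\ref{th_fibred_surface_unique}) is a welcome precision but does not change the argument.
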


\begin{proof}
Take an infinite sequence $(L_n)_{n \in \mathbb{N}}$ of Hopf arborescent links, let $(T_n)_{n \in \mathbb{N}}$ a sequence of plane trees such that for all $n \in \N,$ $\Sigma (T_n)$ is a Seifert surface of $L_n$. Then, by Theorem~\ref{T:kruskal}, there exists $i<j$ such that $T_i$ admits a homeomorphic embedding into $T_2$. Hence $L_i$ is a link-minor of $L_j$.
\end{proof}

We can deduce Theorem~\ref{T:wqo} as a direct corollary of Proposition~\ref{P:mainminor} and Proposition~\ref{prop_link_wqo}.

\begin{proof}[Proof of Theorem~\ref{T:wqo}]
  Take an infinite sequence $(\Sigma_n)_{n \in \mathbb{N}}$ of canonical Seifert surfaces of Hopf arborescent links. Then by Proposition~\ref{prop_link_wqo}, $\partial \Sigma_i \hookrightarrow \partial \Sigma_j$ for some $i<j$. By Proposition~\ref{P:mainminor} we have $\Sigma_i \preccurlyeq \Sigma_j$, i.e. the surface-minor order is a well-quasi-order on Hopf arborescent surfaces.
 \end{proof}

\begin{remark}\label{R:minors}
The proof of Proposition~\ref{P:mainminor} highlights that the minor relation on the set of Hopf arborescent surfaces is more subtle and fragile than one might expect. 
Indeed, the cuts involved when taking an incompressible subsurface in the proof of Lemma~\ref{lem:minor3} inevitably merge Hopf bands and thus one needs to be careful in order to control the number of resulting twists. 
In particular, the proof does not seem to generalise to the more general classes of surfaces obtained by plumbing bands with a bounded number of twists (even though everything works well at the level of trees). 
\end{remark}

\section{Decidability of the genus defect for Hopf arborescent links}\label{sec_decidability}

\subsection{Monotonicity of the genus defect}

Now that we proved that link-minor is a well-quasi-order on the set of Hopf arborescent links, we want to highlight a property that is stable for this minor relation. 
Recall that the genus defect~$\Delta_g(L)$ of an oriented link~$L$ is the difference $g(L)-g_4(L)$ between its classical genus and its 4-dimensional genus. 
The latter can be either in the topological locally flat or in the smooth category. 
All statements in this section (and in particular Theorem~\ref{T:main}) hold in both categories. 
We reprove Lemma~6 of \cite{baader2018topological} in the form of Proposition~\ref{prop_stable_defect} using the fact that link-minor implies that the associated Seifert are surface-minors.

\begin{proposition}\label{prop_stable_defect}
The genus defect $\Delta_g$ is monotone on the family of Hopf arborescent links with respect to the link-minor relation, i.e., if $L_1$ is a link-minor of $L_2$, then $\Delta g(L_1) \leq \Delta g (L_2)$.
\end{proposition}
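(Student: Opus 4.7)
The plan is to reduce via Proposition~\ref{P:mainminor} to the surface-minor relation on Seifert surfaces, and then to construct a surface in $\B^4$ for $L_2$ by extending a minimum-$4$-genus surface for $L_1$. Suppose $L_1$ is a link-minor of $L_2$, witnessed by plane trees $T_1\hookrightarrow T_2$, and set $\Sigma_i=\Sigma(T_i)$. By Proposition~\ref{P:mainminor}, after isotopy in $\Sp^3=\partial\B^4$ we may assume $\Sigma_1\subseteq\Sigma_2$ as an incompressible subsurface. Let $\Sigma'=\overline{\Sigma_2\setminus\Sigma_1}$ and $\alpha=\Sigma_1\cap\Sigma'$ the frontier, which is a $1$-manifold (union of arcs and circles) contained in $L_1$.

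Next I would perform the gluing. Pick a connected locally flat (resp.\ smooth) surface $F_1\subset\B^4$ with $\partial F_1=L_1$ realising the $4$-genus of $L_1$. Push $\Sigma'$ slightly into $\B^4$ and glue it to $F_1$ by identifying the copies of $\alpha$ in $\partial F_1$ and $\partial\Sigma'$, forming $F_2=F_1\cup_\alpha\Sigma'\subset\B^4$. Since $\Sigma_2$ is connected, every component of $\Sigma'$ meets $\alpha$, so the resulting $F_2$ is connected. The gluing consumes $\alpha=L_1\setminus L_2$ but leaves the arcs of $L_1\cap L_2$ and of $L_2\setminus L_1$ untouched, and combined with the fact that the inclusion $\Sigma_1\subseteq\Sigma_2$ is orientation-preserving, one verifies that $\partial F_2=L_2$ as oriented links.

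The conclusion then follows from Euler characteristic additivity applied to the two parallel $1$-manifold gluings $\Sigma_2=\Sigma_1\cup_\alpha\Sigma'$ and $F_2=F_1\cup_\alpha\Sigma'$, which cancels the $\chi(\alpha)$ term and yields
\[
\chi(F_2)-\chi(F_1) \;=\; \chi(\Sigma_2)-\chi(\Sigma_1).
\]
Expressing each $\chi$ via $2-2g-n$ for a connected oriented surface with $n$ boundary circles, using that $\Sigma_i$ is of minimal $3$-genus (Theorem~\ref{th_fibred_surface_minimise}), that $F_1$ is of minimal $4$-genus by choice, and that $\chi(F_2)\le 2-2g_4(L_2)-|L_2|$ by connectedness of $F_2$, a short rearrangement gives $g_4(L_2)-g_4(L_1)\le g(L_2)-g(L_1)$, i.e.\ $\Delta g(L_1)\le\Delta g(L_2)$. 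The argument applies uniformly in the topological locally flat and smooth categories. The main technical point will be the bookkeeping on how $L_1$ sits inside $\partial\Sigma_2=L_2$, ensuring that the oriented boundary, connectedness, and Euler characteristic of $F_2$ come out as claimed.
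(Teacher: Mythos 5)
Your proposal is correct and follows essentially the same route as the paper: reduce to surface-minors via Proposition~\ref{P:mainminor}, glue the complementary piece $\overline{\Sigma_2\setminus\Sigma_1}$ onto a minimal $4$-genus surface for $L_1$ to produce a surface in $\B^4$ bounding $L_2$, and compare via Euler characteristic together with the minimality of fibre surfaces (Theorem~\ref{th_fibred_surface_minimise}); this is exactly the content of the paper's Lemma~\ref{lem_defect} combined with its short deduction of Proposition~\ref{prop_stable_defect}. Your version is if anything slightly more careful in identifying the gluing locus as the frontier $\alpha\subseteq L_1$ rather than all of $L_1$, but the argument is the same.
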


We rely on the following lemma that highlights how the $4$-genus behaves with respect to surface-minors. It uses a cut-and-paste construction and an Euler characteristic argument.

\begin{lemma}\label{lem_defect}
Let $\Sigma$ be an oriented surface of $\Sp^3$ and $\Sigma'$ be a surface-minor of $\Sigma$. If we write  $L = \partial \Sigma$ and $L' = \partial \Sigma'$, then we have $g(\Sigma) - g_4(L) \geq g( \Sigma' ) - g_4(L')$.
\end{lemma}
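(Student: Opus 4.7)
The plan is to build an efficient 4-dimensional surface for $L$ by cutting $\Sigma$ along $L' = \partial \Sigma'$, replacing the subsurface $\Sigma'$ by an optimal 4-dimensional filling of $L'$, and pushing the result into $\B^4$. After isotoping $\Sigma'$ inside $\Sigma$ so that $\Sigma' \subseteq \Sigma$ as an incompressible subsurface, I would set $C := \overline{\Sigma \setminus \Sigma'}$, which is a compact oriented subsurface of $\Sp^3$ with $\partial C = L \sqcup L'$. Choose a connected locally flat (respectively smooth) surface $F' \subset \B^4$ realizing $g_4(L')$, i.e.\ with $\partial F' = L'$ and $g(F') = g_4(L')$. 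Using a collar neighbourhood of $\Sp^3 = \partial \B^4$ in $\B^4$, push the interior of $C$ slightly into $\B^4$ while keeping $L$ on $\Sp^3$ and matching $L'$ with $\partial F'$; gluing along $L'$ produces a properly embedded locally flat (respectively smooth) surface $F \subset \B^4$ with $\partial F = L$.

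The key computation is then an Euler characteristic count. Cutting $\Sigma$ along the circles $L'$ gives $\chi(\Sigma) = \chi(\Sigma') + \chi(C)$, and the gluing of $C$ and $F'$ along $L'$ (a disjoint union of circles, so contributing $0$ to Euler characteristic) yields $\chi(F) = \chi(C) + \chi(F')$. A crucial auxiliary fact is that $F$ is connected: since $\Sigma$ is connected, every component of $C$ meets $L'$, and since $F'$ is connected, all components of $C$ become identified with $F'$ in $F$. Writing $\ell$ and $\ell'$ for the number of components of $L$ and $L'$ and using $\chi(S) = 2 - 2g(S) - |\partial S|$ for a connected compact oriented surface $S$, a short algebraic manipulation of the two displayed equalities yields
\[
g(F) = g(\Sigma) - g(\Sigma') + g_4(L').
\]
Since $F$ realizes an admissible surface for the 4-genus of $L$, this gives $g_4(L) \leq g(F) = g(\Sigma) - g(\Sigma') + g_4(L')$, which is exactly the desired inequality after rearranging.

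The main obstacle, in my view, is not the Euler characteristic bookkeeping itself but making the cut-and-paste construction rigorous in the relevant regularity category: one must verify that pushing $C$ off $\Sp^3$ through a bicollar of the boundary produces a locally flat (or smooth) properly embedded surface, and that the match with $F'$ along $L'$ can be arranged to be locally flat (or smooth) across the seam. In both categories this is standard, using a collar neighbourhood theorem for $(\B^4, \Sp^3)$ and a small isotopy of $F'$ near its boundary; I would invoke these without detailed verification. The connectivity check of $F$ and the use of a genuinely genus-minimizing $F'$ (which requires the convention that $g_4$ is taken over connected surfaces) are the remaining points to emphasize.
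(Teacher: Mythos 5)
Your proposal is correct and is essentially the paper's own argument: glue the pieces of $\Sigma \setminus \Sigma'$ (pushed into $\B^4$) to a $4$-genus-minimizing surface for $L'$ along $L'$, observe that the resulting surface bounds $L$ with genus $g(\Sigma) - g(\Sigma') + g_4(L')$, and conclude by minimality of $g_4(L)$. The Euler-characteristic bookkeeping, the connectivity check, and the collar/regularity remarks you include are just the details the paper leaves implicit in its one-line genus formula.
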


\begin{proof}
Seeing $\Sp^3$ as the boundary of the $4$-ball $\B^4$, consider a surface $S'$ in $\B^4$ such that $\partial S' = L'$ and $S' \cap \Sp^3 = L'$. 
Gluing the remaining pieces of $\Sigma \smallsetminus \Sigma'$ to $S'$ along $L'$ yields a surface $S$ in $B^4$ such that $\partial S = L$, see Figure~\ref{pic_proof_defect_monotonous}. 

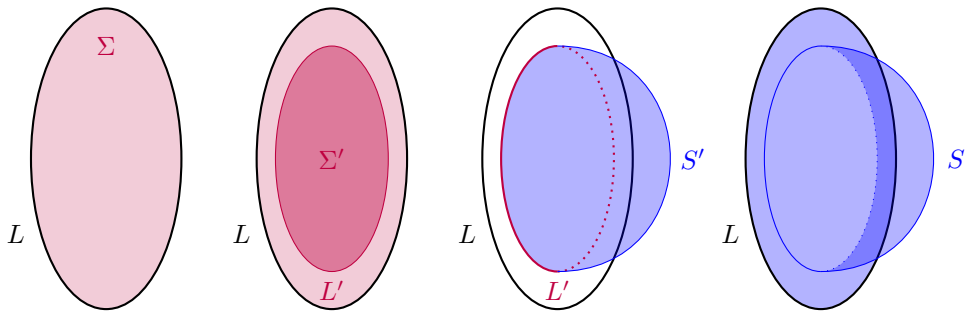
\begin{figure}[H]
\begin{center}
\begin{tikzpicture}
\fill [opacity = 0.2, purple] (0,0) circle (1 and 2);
\draw [black, thick] (0,0) circle (1 and 2);
\node at (-1.2,-1) {$L$};
\node [purple] at (0,1.5) {$\Sigma$};

\begin{scope}[xshift = 3cm]
\fill [opacity = 0.2, purple] (0,0) circle (1 and 2);
\draw [black, thick] (0,0) circle (1 and 2);
\node at (-1.2,-1) {$L$};
\filldraw [purple, fill opacity = 0.4] (0,0) circle (0.75 and 1.5);
\node [purple] at (0,0) {$\Sigma'$};
\node [purple] at (0,-1.75) {$L'$};
\end{scope}

\begin{scope}[xshift = 6cm]
\draw [black, thick] (0,0) circle (1 and 2);
\node at (-1.2,-1) {$L$};
\node [purple] at (0,-1.75) {$L'$};
\fill[opacity = 0.3, blue] (0,1.5) arc (90:270:0.75 and 1.5) .. controls +(2,0) and \control{(0,1.5)}{(2,0)};
\draw [purple, thick] (0,1.5) arc (90:270:0.75 and 1.5);
\draw [purple, thick, dotted] (0,-1.5) arc (-90:90:0.75 and 1.5);
\draw [blue] (0,-1.5) .. controls +(2,0) and \control{(0,1.5)}{(2,0)};
\node at (1.8,0) {${\color{blue} S'}$};
\end{scope}

\begin{scope}[xshift = 9.5cm]
\fill [opacity = 0.3, blue] (0,0) circle (1 and 2);
\fill [white] (0,0) circle (0.75 and 1.5);
\draw [black, thick] (0,0) circle (1 and 2);
\node at (-1.2,-1) {$L$};
\fill[opacity = 0.3, blue] (0,1.5) arc (90:270:0.75 and 1.5) .. controls +(2,0) and \control{(0,1.5)}{(2,0)};
\draw [blue] (0,1.5) arc (90:270:0.75 and 1.5);
\draw [blue, dotted] (0,-1.5) arc (-90:90:0.75 and 1.5);
\draw [blue] (0,-1.5) .. controls +(2,0) and \control{(0,1.5)}{(2,0)};
\node [blue] at (1.8,0) {$S$};
\end{scope}
\end{tikzpicture}
\caption{\label{pic_proof_defect_monotonous} Illustration of the construction of $\Sigma$, $\Sigma'$, $L'$, $S,$ and $S'$. Purple surfaces are in $\Sp^3$ while blue one are considered to be in $B^4$.}
\end{center}
\end{figure}

By definition of $g_4(L)$, we have $g_4(L) \leq g(S)$, and thus $g( \Sigma ) - g_4( L ) \geq g( \Sigma ) - g( S )$.
Furthermore, the genus of $S$ is given by $g(S) = g(\Sigma) - g(\Sigma') + g(S')$. Thus one has $g( \Sigma ) - g_4( L ) \geq g( \Sigma' ) - g( S' )$. 
Now if we assume $S'$ to minimise the 4-genus over surfaces bounded by~$L'$, we conclude: $g(\Sigma) - g_4(L) \geq g( \Sigma' ) -g_4(L')$.
\end{proof}

As Hopf arborescent surfaces are of minimal genus for Hopf arborescent links, Lemma~\ref{lem_defect} can be used to prove Proposition~\ref{prop_stable_defect}.

\begin{proof}[Proof of Proposition~\ref{prop_stable_defect}]
Let $L$ and $L'$ be two Hopf arborescent links such that $L'$ is a link-minor of $L$. 
Consider $\Sigma$ and $\Sigma'$ the corresponding canonical Seifert surfaces. 
Since Hopf arborescent links are fibred, $\Sigma$ is a Seifert surface of $L$ of minimal genus (see~Theorem~\ref{th_fibred_surface_minimise}), i.e. $g(L) = g(\Sigma)$ and similarly, $g(\Sigma') = g(L')$. 
By Proposition~\ref{P:mainminor} one has $\Sigma \preccurlyeq \Sigma'$. 
Hence, by Lemma~\ref{lem_defect}, one gets $\Delta_g(L) \geq \Delta_g(L')$.
\end{proof}

\subsection{Proof of Theorem~\ref{T:main}}\label{S:MainProof}

We first show that the link-minor relation can be decided using the decidability of link equivalence. For knots, equivalence can be tested as a combination of an algorithm that allows to decide whether two $3$-manifolds with boundary are homeomorphic~\cite{Kuperberg_elem_rec,book_Matveev} and the Gordon-Luecke Theorem \cite{Gordon_Luecke} that states that two knots are equivalent if their complements, which are $3$-manifolds with boundaries, are equivalent. In the case of links, we additionally need to keep track of a longitude of each torus boundary component in the complement of the link. We refer to the survey of Lackenby \cite[Section~2]{Lackenby_knot_survey} for a summary of the techniques that allow to prove the following theorem:

\begin{theorem}[Link equivalence]\label{th_link_equiv}
Given two links $L_1$ and $L_2$, the problem of testing whether~$L_1$ is ambient isotopic to $L_2$ is decidable.
\end{theorem}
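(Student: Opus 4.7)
The plan is to reduce the link equivalence problem to a decidable homeomorphism problem on $3$-manifolds with marked boundary data. Given an oriented link $L$, let $N_L$ denote the compact $3$-manifold obtained from $\Sp^3$ by removing an open tubular neighbourhood of $L$; its boundary consists of $k$ tori $T_1,\dots,T_k$, one per component. On each torus $T_i$, the meridian $\mu_i$ is the canonical isotopy class of a simple closed curve bounding a disc in the removed neighbourhood, and a longitude $\lambda_i$ is a simple closed curve intersecting $\mu_i$ once and oriented coherently with the chosen orientation of $L_i$. The classical extension of the Gordon--Luecke theorem to links (see the discussion in~\cite{Lackenby_knot_survey}) states that two oriented links $L_1$ and $L_2$ in $\Sp^3$ are isotopic if and only if there exists a homeomorphism $N_{L_1} \to N_{L_2}$ that matches boundary tori component by component and sends the meridian-longitude pair on each component to the corresponding meridian-longitude pair, compatibly with orientations.

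From this characterisation, the first step is algorithmic: from a diagram of $L$ one can effectively construct a triangulation of $N_L$ together with simplicial representatives of the meridians $\mu_i$ and of the oriented cores of the tubular neighbourhoods (which determine the longitudes once a framing convention, e.g.\ the Seifert framing coming from a Seifert surface, is fixed). Both of these can be carried out by standard subdivision procedures on the link diagram.

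The second step is to invoke the algorithm of Kuperberg~\cite{Kuperberg_elem_rec}, building on Matveev's work on $3$-manifold recognition~\cite{book_Matveev}, which decides whether two triangulated compact $3$-manifolds (possibly with boundary) are homeomorphic. To incorporate the boundary data, one enumerates the finitely many bijections between the boundary tori of $N_{L_1}$ and those of $N_{L_2}$, and for each such matching enlarges the manifolds by Dehn filling along the prescribed meridians, respectively along a suitable annular collar encoding both the meridian and the longitude. The equivalence of $L_1$ and $L_2$ then reduces to the existence of at least one boundary matching for which the corresponding enlarged (or decorated) $3$-manifolds are homeomorphic, a finite disjunction of decidable problems.

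The subtle point, which is the main obstacle, is precisely the bookkeeping of the boundary framings: unlike the knot case handled directly by Gordon--Luecke, a link is not determined by its complement alone, and one has to certify that the homeomorphism produced by the $3$-manifold recognition algorithm carries the complete peripheral structure of $L_1$ to that of $L_2$, not merely the underlying manifolds. This is handled by encoding the meridian-longitude system as part of the input (for instance as a simplicial $1$-subcomplex of the boundary) and applying the recognition algorithm to the pair (manifold, subcomplex), which Kuperberg's framework accommodates. Once this is in place, an exhaustive search over component matchings and orientations yields the desired decision procedure.
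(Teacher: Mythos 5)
Your proposal takes essentially the same route as the paper: the paper also reduces link isotopy to a homeomorphism problem for the complement equipped with its peripheral (meridian--longitude) data, settled by combining the Gordon--Luecke theorem with the Kuperberg/Matveev homeomorphism algorithms, and it simply refers to Lackenby's survey for the bookkeeping you spell out. One small caution: Dehn filling all meridians alone would just return $\Sp^3$ and lose the link, so the decisive step is the one you end with, namely carrying the peripheral curve system along as marked boundary data (a subcomplex/boundary pattern) in the recognition algorithm and enumerating the finitely many matchings of boundary components.
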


Given a Hopf arborescent link $L$, denote by $\mathcal{T}(L)$ the set of plane trees $T$ whose associated Hopf arborescent surface $\Sigma(T)$ has $L$ as oriented boundary. 
As a corollary, we obtain:

\begin{lemma}\label{lem_find_tree} 
Given a Hopf arborescent link $L$, the set $\mathcal{T}(L)$ is computable.  
\end{lemma}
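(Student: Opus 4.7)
The plan is to show that $\mathcal{T}(L)$ is a finite set of bounded size that can be enumerated effectively, and then filter its members by checking, with the help of Theorem~\ref{th_link_equiv}, which of them realise $L$ as an oriented boundary.

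First, I would observe that all elements of $\mathcal{T}(L)$ have the same number of vertices. Indeed, every $\Sigma(T)$ for $T \in \mathcal{T}(L)$ is a fibre surface for $L$ (Hopf bands are fibred and plumbing preserves fibredness), so Theorem~\ref{th_fibred_surface_unique} implies that all such $\Sigma(T)$ are pairwise isotopic relatively to $L$. In particular they share the same Euler characteristic. Now the construction of Section~\ref{S:HopfArborescent} builds $\Sigma(T)$ from a single Hopf band by $|V(T)|-1$ successive plumbings along a disc. Since a Hopf band has Euler characteristic $0$ and plumbing two surfaces along a rectangle adds Euler characteristics and subtracts $1$, we get $\chi(\Sigma(T)) = 1 - |V(T)|$. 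Writing $\chi(\Sigma(T)) = 2 - 2g(L) - b(L)$, with $b(L)$ the number of components of $L$, this forces
\[
|V(T)| \;=\; 2g(L) + b(L) - 1 \;=:\; n.
\]

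Next, I would describe the algorithm. Given a diagram for $L$, read off $b(L)$ and compute $g(L)$ using classical algorithms for the Seifert genus (Haken normal surface theory, made algorithmic by Hass--Lagarias--Pippenger and Agol--Hass--Thurston), which yields $n$. Then enumerate the finite set of rooted plane trees on $n$ vertices with labels in $\{+,-\}$; for each candidate $T$ run the explicit plumbing construction of Section~\ref{S:HopfArborescent} to produce a concrete diagram for the oriented link $\partial \Sigma(T)$ in $\Sp^3$. Finally, for each $T$, invoke Theorem~\ref{th_link_equiv} to decide whether $\partial \Sigma(T)$ is isotopic to $L$ as an oriented link, and retain exactly those trees for which this test succeeds. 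The output is $\mathcal{T}(L)$ by definition.

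The main obstacle is conceptual rather than computational: one must know a priori a uniform bound on $|V(T)|$ for $T \in \mathcal{T}(L)$, which is what the uniqueness of fibre surfaces (Theorem~\ref{th_fibred_surface_unique}) delivers through the Euler characteristic count above. Once that bound is established, the rest of the procedure reduces to two decidable primitives, namely Seifert genus computation and oriented link equivalence, which we treat as black boxes. Note also that enumerating rooted plane trees on $n$ vertices with labels in a two-letter alphabet is elementary, and building the link diagram $\partial \Sigma(T)$ from $T$ is a finite combinatorial procedure, so no further decidability issue arises.
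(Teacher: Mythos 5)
Your proof is correct and takes essentially the same approach as the paper: enumerate $\{+,-\}$-labelled plane trees, build $\Sigma(T)$, and filter by oriented link equivalence via Theorem~\ref{th_link_equiv}, using the fact that fibre surfaces of $L$ all have the same genus to bound the number of vertices. The only difference is that you pin the tree size $n=2g(L)+b(L)-1$ in advance with a genus computation, which is exactly the alternative the paper mentions immediately after its proof, whereas the paper's main argument increments $k$ and stops at the first value admitting a match (using Theorem~\ref{th_fibred_surface_minimise}).
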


\begin{proof}
For increasing $k\in \N$, we enumerate all plane trees $T_i$ with $k$ vertices labelled by $\ens{-,+}$ and store the trees $T_i$ such that $\partial \Sigma (T_i)$ is isotopic to $L$, where we test isotopy using Theorem~\ref{th_link_equiv}. 
If we find a $k$ for which such a tree exists, we finish the enumeration for this value and return the stored trees. 
Indeed, plumbing $n$ Hopf bands produces a surface with Betti number $n$. 
By Theorem~\ref{th_fibred_surface_minimise}, all the trees $T_i$ such that $\partial \Sigma (T_i) = L$ produce surfaces $\Sigma(T_i)$ with the same genus, hence have the same number of vertices. 
As the entry is a Hopf arborescent link, there exists a tree $T$ such that $\partial \Sigma (T) = L$. So the algorithm terminates.
\end{proof}

Alternatively, and if one wants some control on the complexity of that algorithm, one can avoid blindly testing for increasing $k$ by first computing the genus of the link~\cite{Hass_trivial_NP,book_Matveev}, or just computing an upper bound to it using, e.g., Seifert's algorithm, and then enumerate only the trees that produce surfaces up to that genus. From Lemma~\ref{lem_find_tree}, we obtain:

\begin{lemma}\label{lem_lminor_test}
Given two Hopf arborescent links $L_1$ and $L_2$, testing if $L_1$ is a link-minor of~$L_2$ is decidable.
\end{lemma}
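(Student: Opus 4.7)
The plan is to reduce the question entirely to a finite combinatorial test at the level of labelled plane trees. The starting observation is that, by Lemma~\ref{lem_find_tree}, for any Hopf arborescent link $L$ the set $\mathcal{T}(L)$ is finite and computable: all trees in $\mathcal{T}(L)$ produce fibre surfaces with the same first Betti number (equal to $2g(L)+|L|-1$, which is determined by~$L$), so only finitely many labelled plane trees can represent~$L$, and these are explicitly enumerable.

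First, I would invoke Lemma~\ref{lem_find_tree} twice to compute the finite sets $\mathcal{T}(L_1)$ and $\mathcal{T}(L_2)$. By the very definition of the link-minor relation given in Section~\ref{ss:minors}, $L_1$ is a link-minor of $L_2$ if and only if there exist $T_1\in \mathcal{T}(L_1)$ and $T_2\in \mathcal{T}(L_2)$ such that $T_1 \hookrightarrow T_2$. Here we crucially use Theorem~\ref{th_fibred_surface_unique}, which ensures that the canonical Seifert surfaces of $L_1$ and $L_2$ are unique up to isotopy, so the sets $\mathcal{T}(L_1)$ and $\mathcal{T}(L_2)$ exhaust \emph{all} possible tree encodings of the canonical fibre surfaces.

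Next, I would iterate over the finite product $\mathcal{T}(L_1)\times\mathcal{T}(L_2)$ and, for each pair $(T_1,T_2)$, test whether $T_1$ admits a homeomorphic embedding into~$T_2$. Since both trees are finite labelled plane trees, this test is a purely combinatorial problem on finite objects: one may enumerate all candidate embeddings (respecting the root, the planar cyclic orders of children at each vertex, and the labels in $\{+,-\}$), or, more efficiently, design a standard dynamic programming procedure on the rooted tree structure. Either way, this step is obviously decidable. Returning \emph{yes} whenever some pair satisfies $T_1\hookrightarrow T_2$ and \emph{no} otherwise yields the desired algorithm.

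The only nontrivial ingredient is Lemma~\ref{lem_find_tree} itself, which reduces to the decidability of link equivalence (Theorem~\ref{th_link_equiv}) together with the fact that fibre surfaces minimise genus (Theorem~\ref{th_fibred_surface_minimise}), so that enumeration of candidate trees can be restricted to a finite list; once this is in hand, the remaining combinatorial matching at the tree level is routine, and there is no further obstacle.
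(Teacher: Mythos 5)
Your proposal is correct and follows essentially the same route as the paper: compute the finite sets $\mathcal{T}(L_1)$ and $\mathcal{T}(L_2)$ via Lemma~\ref{lem_find_tree}, then decide by a finite combinatorial check whether some $T_1\in\mathcal{T}(L_1)$ homeomorphically embeds into some $T_2\in\mathcal{T}(L_2)$, which is exactly the definition of link-minor. The only cosmetic difference is that the paper phrases the finite check as brute-forcing leaf deletions and path contractions on trees of $\mathcal{T}(L_2)$ and comparing with $\mathcal{T}(L_1)$, whereas you test the embedding relation directly; these are equivalent.
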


\begin{proof}
Using Lemma~\ref{lem_find_tree}, we compute $\mathcal{T}(L_1)$ and $\mathcal{T}(L_2)$. 
The trees in~$\mathcal{T}(L_1)$ (resp. $\mathcal{T}(L_2)$) all have the same number $k_1$ (resp. $k_2$) of vertices. 
Then we brute force every possible path contraction to an edge and iterated leaf deletion on trees of $\mathcal{T}(L_2)$ such that the result is a tree with $k_1$ vertices, and test whether it is equal to a tree of $\mathcal{T}(L_1)$. If such a test succeeds, we output yes, otherwise we return no. 
There is a finite number of trees in both $\mathcal{T}(L_1)$ and~$\mathcal{T}(L_2)$ and a finite number of a trees with $k_1$ vertices that homeomorphically embed into a tree of $\mathcal{T}(L_2)$. 
Hence that algorithm eventually terminates. Its correctness follows directly from the definition of link-minor.
\end{proof}

Finally we prove Theorem~\ref{T:main} by using the stability of the genus-defect by link-minor, the previous algorithms, and the well-quasi-order properties.

\begin{proof}[Proof of Theorem~\ref{T:main}]

  By Proposition~\ref{prop_link_wqo}, the order defined by link-minors is a well-quasi-order on the set of Hopf arborescent links. Hence, the set of Hopf arborescent links $\mathcal{H}_{k}$ that have defect at most $k$ is characterized by a finite family $\mathcal{F}_k$ of forbidden minors. It follows, by Proposition~\ref{prop_stable_defect} that $\Delta_g (L) \leq k$ if and only if for all $f$ in $\mathcal{F}_k$,  $f$ is not a link-minor of $L$. Using Lemma~\ref{lem_lminor_test} we test for each $f \in \mathcal{F}_{k}$ if $f$ is a link-minor of $L$. If such a test succeeds, output no, otherwise the input link has $\Delta_g (L) \leq k$.
\end{proof}

As said in the introduction, our proof in not constructive as it relies at its core on the existence of a set of forbidden minors for having defect at most $k$. This set of forbidden minors is not explicit and hard-coded in the algorithm. Furthermore, the sets of excluded minors will be different for the two different notions of defect (smooth and locally flat). It is likely that computing them is a topological challenge requiring arguments of different nature. 

Theorem~\ref{T:wqo} provides the existence of a set of forbidden minors for having defect at most $k$ but for a different and stronger definition of minors on links that relies only on the surface-minor relation on the Seifert surface and not the trees. However deciding this relation, even by a brute force argument, seems challenging: in addition to the fact that no algorithm seems to be known for testing isotopy of surfaces, one would also need to control the complexity of the cutting arcs. Even with positive Hopf arborescent links only, this seems delicate~\cite{Misev_cutting}.

\section{Examples: Hopf arborescent links with non-trivial defect}\label{sec_defect}

It is not clear a priori that the topological and/or smooth defects of  Hopf arborescent links are nonzero. For instance, our building block, the Hopf band, has both defects equal to $0$ since it bounds an annulus in $\Sp^3$; which has genus $0$. Furthermore, as mentioned in the introduction, when Hopf arborescent links are only made with positive Hopf bands, they belong to a class of links called \emph{positive} links, which implies that they are \emph{strongly quasi-positive}~\cite{Rudolph_positive}, and this implies in turn that their smooth 4-genus equal their 3-genus \cite{rudolph1993}. So for this class of links, the smooth defect is always zero. In contrast, in this section, we provide an example of a Hopf arborescent knot for which both the topological and the smooth defect are nonzero, and an argument explaining how to use this knot to provide examples with arbitrarily large defects.

\subparagraph*{Example 1.} A Hopf arborescent link with non trivial topological defect is given by Baader, Feller, Lewark and Liechti~\cite[Example~4]{baader2018topological}. It consists in plumbing $6$ positive Hopf bands in a latin cross-like pattern. Since it is positive, its smooth defect is zero. However, one can find a torus sub-surface of~$S_X$ such that the restricted Alexander polynomial vanishes. By Freedman's Disc Theorem~\cite{freedman1982topology}, this implies that one can remove this subsurface and glue along its boundary a locally flat disc in the 4-ball. This means that the topological defect is at least 1. We refer to the article~\cite{baader2018topological} for more details.

\subparagraph*{Example 2.} By the aforementioned result of Rudolph, in order to obtain a Hopf arborescent link with nonzero smooth defect, we need to use both positive and negative Hopf bands. Here is such an example: the Hopf arborescent link obtained from the tree pictured in Figure~\ref{pic_knot_8_10} is in fact a knot, named $8_{10}$ in the Rolfsen table~\cite{Rolfsen_Knots}. One can readily check from knot censuses (for example from the Knot Atlas~\cite{knotatlas}) the smooth and topological genera of this knot, and we have $\Delta_g^{smooth} (8_{10}) = \Delta_g^{top} (8_{10}) = 2$. For completeness, we provide here another argument to explain why the defects are nonzero.

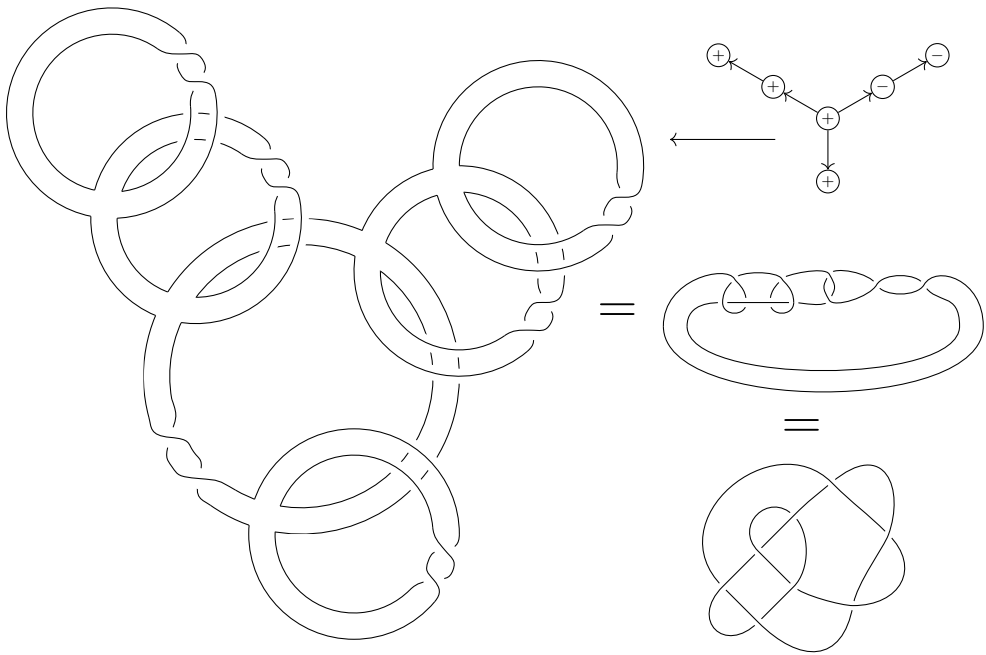
\begin{figure}[h]
\begin{center}
\begin{tikzpicture}[scale = 1.4]
\clip (-3,-2.75) rectangle (7,3.75);

\begin{scope}[even odd rule]
\clip [xshift =0.5cm, yshift = -1.5cm] (-3,-3) rectangle (3,3) (-135:0.75) arc (-135:-220:0.75) -- (-220:1) arc (-220:-135:1) --cycle;
\clip [xshift =-1cm, yshift = 1.5cm] (-3,-3) rectangle (3,3) (-80:0.75) arc (-80:-160:0.75) -- (-160:1) arc (-160:-80:1) --cycle;
\clip [xshift =1.5cm, yshift = 1cm] (-3,-3) rectangle (3,3) (80:0.75) arc (80:190:0.75) -- (190:1) arc (190:80:1) --cycle;
\draw [rotate = 125] (115:1.25) arc (115:425:1.25);
\draw [rotate = 125] (110:1.5) arc (110:430:1.5);
\end{scope}
\begin{scope}[rotate = 125]
\coordinate (l) at (80:1.375); 
\coordinate (h) at (100:1.375); 
\coordinate (i) at (90:1.25); 
\coordinate (e) at (90:1.5);
\begin{scope}[even odd rule]
\clip (-1,0.5) rectangle (1,2) (l) circle (0.1cm);
\draw (115:1.25) .. controls +(25:0.2) and \control{(h)}{(-130:0.2)} .. controls +(50:0.2) and \control{(e)}{(180:0.1)} .. controls +(0:0.1) and \control{(l)}{(130:0.2)} .. controls +(-50:0.2) and \control{(65:1.25)}{(165:0.2)}; 
\end{scope}
\begin{scope}[even odd rule]
\clip (-1,0.5) rectangle (1,2) (h) circle (0.1cm);
\draw (110:1.5) .. controls +(20:0.1) and \control{(h)}{(130:0.2)} .. controls +(-50:0.2) and \control{(i)}{(180:0.1)} .. controls +(0:0.1) and \control{(l)}{(-130:0.2)} .. controls +(50:0.2) and \control{(70:1.5)}{(160:0.1)}; 
\end{scope}
\end{scope}

\def\r{30} \def\xs{-1} \def\ys{1.5}
\begin{scope}[even odd rule]
\clip (-3,-3) rectangle (3,3) (110:1.25) arc (110:180:1.25) -- (180:1.5) arc (180:110:1.5) --cycle;
\clip (-3,-3) rectangle (3,6) [xshift = -1.8cm, yshift = 2.5cm] (-60:0.75) arc (-60:-120:0.75) -- (-120:1) arc (-120:-60:1) --cycle;
\draw [xshift = \xs cm, yshift = \ys cm, line width=0.2cm, white] (10:0.75) arc (10:-70:0.75);
\draw [xshift = \xs cm, yshift = \ys cm, rotate = \r] (25:0.75) arc (25:335:0.75);
\draw [xshift = \xs cm, yshift = \ys cm, line width=0.2cm, white] (10:1) arc (10:-70:1);
\draw [xshift = \xs cm, yshift = \ys cm, rotate = \r] (20:1) arc (20:340:1);
\end{scope}
\begin{scope}[xshift = \xs cm, yshift = \ys cm, rotate = \r]
\coordinate (l) at (-10:0.875); 
\coordinate (h) at (10:0.875); 
\coordinate (i) at (0:0.75); 
\coordinate (e) at (0:1);
\begin{scope}[even odd rule]
\clip (-1.5,-1) rectangle (1.5,1) (l) circle (0.1cm);
\draw (25:0.75) .. controls +(-65:0.1) and \control{(h)}{(150:0.1)} .. controls +(-30:0.1) and \control{(e)}{(90:0.1)} .. controls +(-90:0.1) and \control{(l)}{(30:0.1)} .. controls +(-150:0.1) and \control{(-25:0.75)}{(65:0.1)}; 
\end{scope}
\begin{scope}[even odd rule]
\clip (-1.5,-1) rectangle (1.5,1)  (h) circle (0.1cm);
\draw (20:1) .. controls +(-70:0.1) and \control{(h)}{(30:0.1)} .. controls +(-150:0.1) and \control{(i)}{(90:0.1)} .. controls +(-90:0.1) and \control{(l)}{(150:0.1)} .. controls +(-30:0.1) and \control{(-20:1)}{(70:0.1)}; 
\end{scope}
\end{scope}

\def\r{30} \def\xs{-1.8} \def\ys{2.5}
\begin{scope}[even odd rule]
\clip (-3,-3) rectangle (3,6) [xshift = -1cm, yshift = 1.5cm] (110:0.75) arc (110:190:0.75) -- (190:1) arc (190:110:1) --cycle;
\draw [xshift = \xs cm, yshift = \ys cm, line width=0.2cm, white] (10:0.75) arc (10:-70:0.75);
\draw [xshift = \xs cm, yshift = \ys cm, rotate = \r] (25:0.75) arc (25:335:0.75);
\draw [xshift = \xs cm, yshift = \ys cm, line width=0.2cm, white] (10:1) arc (10:-70:1);
\draw [xshift = \xs cm, yshift = \ys cm, rotate = \r] (20:1) arc (20:340:1);
\end{scope}
\begin{scope}[xshift = \xs cm, yshift = \ys cm, rotate = \r]
\coordinate (l) at (-10:0.875); 
\coordinate (h) at (10:0.875); 
\coordinate (i) at (0:0.75); 
\coordinate (e) at (0:1);
\begin{scope}[even odd rule]
\clip (-1.5,-1) rectangle (1.5,1) (l) circle (0.1cm);
\draw (25:0.75) .. controls +(-65:0.1) and \control{(h)}{(150:0.1)} .. controls +(-30:0.1) and \control{(e)}{(90:0.1)} .. controls +(-90:0.1) and \control{(l)}{(30:0.1)} .. controls +(-150:0.1) and \control{(-25:0.75)}{(65:0.1)}; 
\end{scope}
\begin{scope}[even odd rule]
\clip (-1.5,-1) rectangle (1.5,1)  (h) circle (0.1cm);
\draw (20:1) .. controls +(-70:0.1) and \control{(h)}{(30:0.1)} .. controls +(-150:0.1) and \control{(i)}{(90:0.1)} .. controls +(-90:0.1) and \control{(l)}{(150:0.1)} .. controls +(-30:0.1) and \control{(-20:1)}{(70:0.1)}; 
\end{scope}
\end{scope}

\def\r{-30} \def\xs{1.5} \def\ys{1}
\begin{scope}[even odd rule]
\clip (-3,-3) rectangle (3,3) (90:1.25) arc (90:30:1.25) -- (30:1.5) arc (30:90:1.5) --cycle;
\clip (-3,-3) rectangle (6,6) [xshift = 2.25cm, yshift = 2cm] (-120:0.75) arc (-120:-240:0.75) -- (-240:1) arc (-240:-120:1) --cycle;
\draw [xshift = \xs cm, yshift = \ys cm, line width=0.2cm, white] (10:0.75) arc (10:-150:0.75);
\draw [xshift = \xs cm, yshift = \ys cm, rotate = \r] (25:0.75) arc (25:335:0.75);
\draw [xshift = \xs cm, yshift = \ys cm, line width=0.2cm, white] (10:1) arc (10:-150:1);
\draw [xshift = \xs cm, yshift = \ys cm, rotate = \r] (20:1) arc (20:340:1);
\end{scope}
\begin{scope}[xshift = \xs cm, yshift = \ys cm, rotate = \r]
\coordinate (l) at (-10:0.875); 
\coordinate (h) at (10:0.875); 
\coordinate (i) at (0:0.75); 
\coordinate (e) at (0:1);
\begin{scope}[even odd rule]
\clip (-1.5,-1) rectangle (1.5,1) (h) circle (0.1cm);
\draw (25:0.75) .. controls +(-65:0.1) and \control{(h)}{(150:0.1)} .. controls +(-30:0.1) and \control{(e)}{(90:0.1)} .. controls +(-90:0.1) and \control{(l)}{(30:0.1)} .. controls +(-150:0.1) and \control{(-25:0.75)}{(65:0.1)}; 
\end{scope}
\begin{scope}[even odd rule]
\clip (-1.5,-1) rectangle (1.5,1)  (l) circle (0.1cm);
\draw (20:1) .. controls +(-70:0.1) and \control{(h)}{(30:0.1)} .. controls +(-150:0.1) and \control{(i)}{(90:0.1)} .. controls +(-90:0.1) and \control{(l)}{(150:0.1)} .. controls +(-30:0.1) and \control{(-20:1)}{(70:0.1)}; 
\end{scope}
\end{scope}

\def\r{-30} \def\xs{2.25} \def\ys{2}
\begin{scope}[even odd rule]
\clip (-3,-3) rectangle (6,6) [xshift = 1.5cm, yshift = 1cm] (60:0.75) arc (60:120:0.75) -- (120:1) arc (120:60:1) --cycle;
\draw [xshift = \xs cm, yshift = \ys cm, line width=0.2cm, white] (10:0.75) arc (10:-130:0.75);
\draw [xshift = \xs cm, yshift = \ys cm, rotate = \r] (25:0.75) arc (25:335:0.75);
\draw [xshift = \xs cm, yshift = \ys cm, line width=0.2cm, white] (10:1) arc (10:-130:1);
\draw [xshift = \xs cm, yshift = \ys cm, rotate = \r] (20:1) arc (20:340:1);
\end{scope}
\begin{scope}[xshift = \xs cm, yshift = \ys cm, rotate = \r]
\coordinate (l) at (-10:0.875); 
\coordinate (h) at (10:0.875); 
\coordinate (i) at (0:0.75); 
\coordinate (e) at (0:1);
\begin{scope}[even odd rule]
\clip (-1.5,-1) rectangle (1.5,1) (h) circle (0.1cm);
\draw (25:0.75) .. controls +(-65:0.1) and \control{(h)}{(150:0.1)} .. controls +(-30:0.1) and \control{(e)}{(90:0.1)} .. controls +(-90:0.1) and \control{(l)}{(30:0.1)} .. controls +(-150:0.1) and \control{(-25:0.75)}{(65:0.1)}; 
\end{scope}
\begin{scope}[even odd rule]
\clip (-1.5,-1) rectangle (1.5,1)  (l) circle (0.1cm);
\draw (20:1) .. controls +(-70:0.1) and \control{(h)}{(30:0.1)} .. controls +(-150:0.1) and \control{(i)}{(90:0.1)} .. controls +(-90:0.1) and \control{(l)}{(150:0.1)} .. controls +(-30:0.1) and \control{(-20:1)}{(70:0.1)}; 
\end{scope}
\end{scope}

\def\r{-20} \def\xs{0.5} \def\ys{-1.5}
\begin{scope}[even odd rule]
\clip (-3,-3) rectangle (3,3) (-90:1.25) arc (-90:-125:1.25) -- (-125:1.5) arc (-125:-90:1.5) --cycle;
\draw [xshift = \xs cm, yshift = \ys cm, rotate = \r, line width=0.2cm, white] (40:0.75) arc (40:120:0.75);
\draw [xshift = \xs cm, yshift = \ys cm, rotate = \r] (25:0.75) arc (25:335:0.75);
\draw [xshift = \xs cm, yshift = \ys cm, rotate = \r, line width=0.2cm, white] (60:1) arc (60:120:1);
\draw [xshift = \xs cm, yshift = \ys cm, rotate = \r] (20:1) arc (20:340:1);
\end{scope}
\begin{scope}[xshift = \xs cm, yshift = \ys cm, rotate = \r]
\coordinate (l) at (-10:0.875); 
\coordinate (h) at (10:0.875); 
\coordinate (i) at (0:0.75); 
\coordinate (e) at (0:1);
\begin{scope}[even odd rule]
\clip (-1.5,-1) rectangle (1.5,1) (l) circle (0.1cm);
\draw (25:0.75) .. controls +(-65:0.1) and \control{(h)}{(150:0.1)} .. controls +(-30:0.1) and \control{(e)}{(90:0.1)} .. controls +(-90:0.1) and \control{(l)}{(30:0.1)} .. controls +(-150:0.1) and \control{(-25:0.75)}{(65:0.1)}; 
\end{scope}
\begin{scope}[even odd rule]
\clip (-1.5,-1) rectangle (1.5,1)  (h) circle (0.1cm);
\draw (20:1) .. controls +(-70:0.1) and \control{(h)}{(30:0.1)} .. controls +(-150:0.1) and \control{(i)}{(90:0.1)} .. controls +(-90:0.1) and \control{(l)}{(150:0.1)} .. controls +(-30:0.1) and \control{(-20:1)}{(70:0.1)}; 
\end{scope}
\end{scope}

\begin{scope}[xshift = 5cm, yshift = 2.45cm]
\def\d{0.025}
\def\a{0.6}
\node (n1) [draw, inner sep = \d cm, circle] at (0,0) {{\tiny $+$}};
\node (n2) [draw, inner sep = \d cm, circle] at ($(n1)+(150:\a)$) {{\tiny $+$}};
\node (n3) [draw, inner sep = \d cm, circle] at ($(n2)+(150:\a)$) {{\tiny $+$}}; 
\node (n4) [draw, inner sep = \d cm, circle] at ($(n1)+(30:\a)$) {{\tiny $-$}}; 
\node (n5) [draw, inner sep = \d cm, circle] at ($(n4)+(30:\a)$) {{\tiny $-$}};
\node (n6) [draw, inner sep = \d cm, circle] at ($(n1)+(-90:\a)$) {{\tiny $+$}};
\draw [->] (n1) -- (n2);
\draw [->] (n1) -- (n4);
\draw [->] (n1) -- (n6);
\draw [->] (n2) -- (n3);
\draw [->] (n4) -- (n5);
\end{scope}

\draw [<-] (3.5,2.25) -- +(1,0);

\node at (3,0.6) {\huge $=$};

\begin{scope}[xshift = 4cm, yshift = 0.7cm, scale = 0.45]
\def\c{0.1}
\coordinate (n1) at (0,0);
\coordinate (n2) at (0.5,0);
\coordinate (n3) at (0.25,0.5);
\coordinate (n4) at (1,0);
\coordinate (n5) at (1.5,0);
\coordinate (n6) at (1.25,0.5);
\coordinate (n7) at (2.25,0.05);
\coordinate (n8) at (2.25,0.6);
\coordinate (n9) at (3.25,0.375);
\coordinate (n10) at (4.25,0.375);
\coordinate (n11) at (-1.25,-0.5);
\coordinate (n12) at (-0.75,-0.5);
\coordinate (n13) at (5,-0.5);
\coordinate (n14) at (5.5,-0.5);

\begin{scope}[even odd rule]
\clip (-1.5, 1.5) rectangle (6,-2) (n1) circle (\c);
\clip (-1.5, 1.5) rectangle (6,-2) (n5) circle (\c);
\clip (-1.5, 1.5) rectangle (6,-2) (n7) circle (\c);
\draw (n12) .. controls +(90:0.5) and \control{(n1)}{(180:0.2)} -- (n5) .. controls +(0:0.2) and \control{(n7)}{(-120:0.2)};
\end{scope}
\begin{scope}[even odd rule]
\clip (-1.5, 1.5) rectangle (6,-2) (n4) circle (\c);
\clip (-1.5, 1.5) rectangle (6,-2) (n6) circle (\c);
\clip (-1.5, 1.5) rectangle (6,-2) (n7) circle (\c);
\draw (n7) .. controls +(60:0.3) and \control{(n8)}{(-60:0.3)} .. controls +(120:0.2) and \control{(n6)}{(60:0.1)} .. controls +(-120:0.1) and \control{(n4)}{(90:0.3)} .. controls + (-90:0.3) and \control{(n5)}{(-90:0.3)};
\end{scope}
\begin{scope}[even odd rule]
\clip (-1.5, 1.5) rectangle (6,-2) (n3) circle (\c);
\clip (-1.5, 1.5) rectangle (6,-2) (n2) circle (\c);
\draw (n5) .. controls +(90:0.3) and \control{(n6)}{(-60:0.1)} .. controls +(120:0.2) and \control{(n3)}{(60:0.2)} .. controls +(-120:0.1) and \control{(n1)}{(90:0.3)} .. controls +(-90:0.3) and \control{(n2)}{(-90:0.3)};
\end{scope}
\begin{scope}[even odd rule]
\clip (-1.5, 1.5) rectangle (6,-2) (n2) circle (\c);
\draw (n2) .. controls +(90:0.3) and \control{(n3)}{(-60:0.1)} .. controls +(120:0.35) and \control{(n11)}{(90:1)};
\end{scope}
\begin{scope}[even odd rule]
\clip (-1.5, 1.5) rectangle (6,-2) (n9) circle (\c);
\clip (-1.5, 1.5) rectangle (6,-2) (n8) circle (\c);
\draw (n11) .. controls +(-90:1.85) and \control{(n14)}{(-90:1.85)} .. controls +(90:1) and \control{(n10)}{(60:0.5)} .. controls +(-120:0.3) and \control{(n9)}{(-60:0.3)} .. controls +(120:0.3) and \control{(n8)}{(60:0.2)};
\end{scope}
\begin{scope}[even odd rule]
\clip (-1.5, 1.5) rectangle (6,-2) (n8) circle (\c);
\clip (-1.5, 1.5) rectangle (6,-2) (n10) circle (\c);
\draw (n8) .. controls +(-120:0.3) and \control{(n7)}{(120:0.3)} .. controls +(-60:0.2) and \control{(n9)}{(-120:0.3)} .. controls +(60:0.3) and \control{(n10)}{(120:0.3)} .. controls +(-60:0.5) and \control{(n13)}{(90:0.75)} .. controls +(-90:1.25) and \control{(n12)}{(-90:1.25)};
\end{scope}
\end{scope}

\node at (4.75,-0.5) {\huge $=$};

\begin{scope}[xshift = 4cm, yshift = -2cm, scale = 0.45]
\def\c{0.1}
\coordinate (n1) at (0,0);
\coordinate (n2) at (0.75,-0.75);
\coordinate (n3) at (1.5,0);
\coordinate (n4) at (0.75,0.75);
\coordinate (n5) at (1.5,1.5);
\coordinate (n6) at (2.75,-0.4);
\coordinate (n7) at (3.5,1.1);
\coordinate (n8) at (2.3,2.2);
\begin{scope}[even odd rule]
\clip (-1.5, 3) rectangle (4.5,-3) (n1) circle (\c);
\clip (-1.5, 3) rectangle (4.5,-3) (n6) circle (\c);
\clip (-1.5, 3) rectangle (4.5,-3) (n8) circle (\c);
\draw (n1) -- (n2) .. controls +(-45:1.25) and \control{(n6)}{(-100:1.25)} .. controls +(80:0.5) and \control{(n7)}{(-110:0.3)} .. controls +(70:1) and \control{(n8)}{(40:1.5)};
\end{scope}
\begin{scope}[even odd rule]
\clip (-1.5, 3) rectangle (4.5,-3) (n2) circle (\c);
\clip (-1.5, 3) rectangle (4.5,-3) (n4) circle (\c);
\clip (-1.5, 3) rectangle (4.5,-3) (n8) circle (\c);
\draw (n8) .. controls +(-140:0.5) and \control{(n5)}{(45:0.5)} -- (n1) .. controls +(-135:1) and \control{(n2)}{(-135:1)};
\end{scope}
\begin{scope}[even odd rule]
\clip (-1.5, 3) rectangle (4.5,-3) (n2) circle (\c);
\clip (-1.5, 3) rectangle (4.5,-3) (n5) circle (\c);
\draw (n2) -- (n3) .. controls +(45:0.5) and \control{(n5)}{(-45:0.5)};
\end{scope}
\begin{scope}[even odd rule]
\clip (-1.5, 3) rectangle (4.5,-3) (n2) circle (\c);
\clip (-1.5, 3) rectangle (4.5,-3) (n5) circle (\c);
\clip (-1.5, 3) rectangle (4.5,-3) (n3) circle (\c);
\draw (n5) .. controls +(135:0.75) and \control{(n4)}{(135:0.75)} -- (n3);
\end{scope}
\begin{scope}[even odd rule]
\clip (-1.5, 3) rectangle (4.5,-3) (n1) circle (\c);
\clip (-1.5, 3) rectangle (4.5,-3) (n7) circle (\c);
\clip (-1.5, 3) rectangle (4.5,-3) (n3) circle (\c);
\draw (n3) .. controls +(-45:0.25) and \control{(n6)}{(180:0.25)} .. controls +(0:1) and \control{(n7)}{(-45:1)} .. controls + (135:0.75) and \control{(n8)}{(-45:0.75)} .. controls +(135:1.75) and \control{(n1)}{(135:2)};
\end{scope}
\end{scope}
\end{tikzpicture}
\end{center}
\caption{The Hopf arborescent knot $8_{10}$ (from the Rolfsen table).}
\label{pic_knot_8_10}
\end{figure}

The genus-minimizing surface is the Hopf arborescent surface pictured in Figure~\ref{pic_frame_0_curve}, whose genus is $3$, and with $1$ boundary component. The red curve on the left of Figure~\ref{pic_frame_0_curve} is the core of an embedded annulus and its two boundaries are unlinked. Replacing this annulus by two discs glued on its two boundary components yields an immersed surface with genus $1$ less, still bounded by~$K_{8_{10}}$. The singularities of this immersion are of a particular type called ribbon singularities~\cite[p.225]{Rolfsen_Knots} (right side of Figure~\ref{pic_frame_0_curve}). Such singularities can be resolved in the $4$-ball, and thus this implies that the Hopf arborescent surface can be pushed into a smoothly embedded surface in the $4$-ball so that its smooth defect is at least $1$. Furthermore, for any link $L$, we have that $g_4^{smooth} (L) \geq g_4^{top} (L)$ so that its topological defect is also at least $1$.

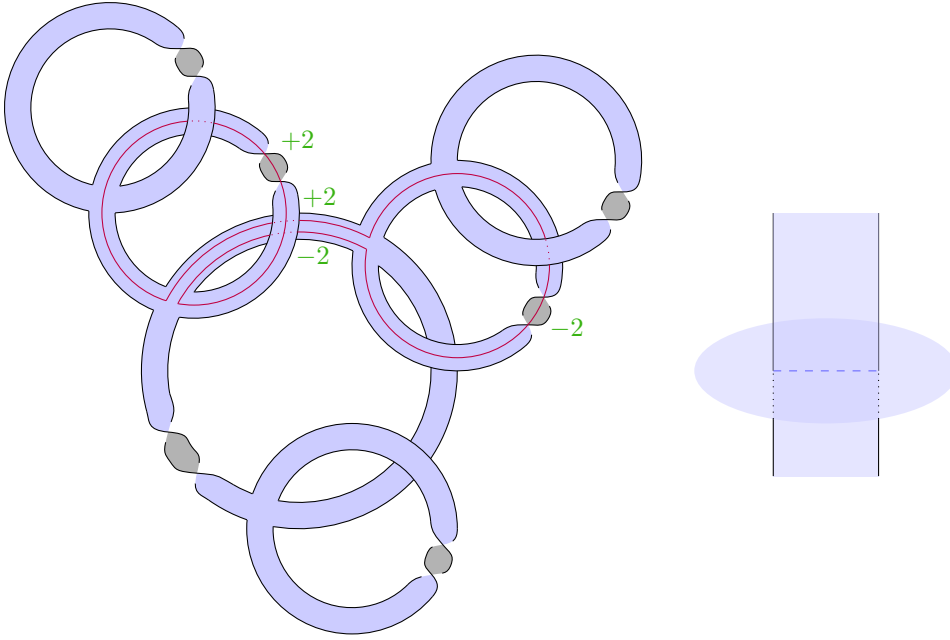
\begin{figure}[h]
\begin{center}
\begin{tikzpicture}[scale = 1.4]
\clip (-3,-2.75) rectangle (7,3.75);

\def\r{125} \def\xs{0} \def\ys{0}
\begin{scope}[rotate = \r]
\coordinate (l) at (80:1.375); 
\coordinate (h) at (100:1.375); 
\coordinate (i) at (90:1.25); 
\coordinate (e) at (90:1.5);
\fill [blue!20!white, xshift = \xs cm, yshift = \ys cm] (h) .. controls +(130:0.2) and \control{(110:1.5)}{(20:0.1)} arc (110:430:1.5) .. controls +(160:0.1) and \control{(l)}{(50:0.2)} .. controls +(-50:0.2) and \control{(65:1.25)}{(165:0.2)} arc (425:115:1.25) .. controls +(25:0.2) and \control{(h)}{(-130:0.2)};
\fill [white!40!gray] (h) .. controls +(-50:0.2) and \control{(i)}{(180:0.1)} .. controls +(0:0.1) and \control{(l)}{(-130:0.2)} .. controls +(130:0.2) and \control{(e)}{(0:0.1)} .. controls + (180:0.1) and \control{(h)}{(50:0.2)};
\end{scope}
\begin{scope}[even odd rule]
\clip [xshift =0.5cm, yshift = -1.5cm] (-4,-4) rectangle (4,4) (-135:0.75) arc (-135:-220:0.75) -- (-220:1) arc (-220:-135:1) --cycle;
\clip [xshift =-1cm, yshift = 1.5cm] (-4,-4) rectangle (4,4) (-80:0.75) arc (-80:-160:0.75) -- (-160:1) arc (-160:-80:1) --cycle;
\clip [xshift =1.5cm, yshift = 1cm] (-4,-4) rectangle (4,4) (80:0.75) arc (80:190:0.75) -- (190:1) arc (190:80:1) --cycle;
\draw [rotate = 125] (115:1.25) arc (115:425:1.25);
\draw [rotate = 125] (110:1.5) arc (110:430:1.5);
\end{scope}
\begin{scope}[rotate = \r]
\begin{scope}[even odd rule]
\clip (-2,0.5) rectangle (1,2) (l) circle (0.1cm);
\draw (115:1.25) .. controls +(25:0.2) and \control{(h)}{(-130:0.2)} .. controls +(50:0.2) and \control{(e)}{(180:0.1)} .. controls +(0:0.1) and \control{(l)}{(130:0.2)} .. controls +(-50:0.2) and \control{(65:1.25)}{(165:0.2)}; 
\end{scope}
\begin{scope}[even odd rule]
\clip (-1,0.5) rectangle (1,2) (h) circle (0.1cm);
\draw (110:1.5) .. controls +(20:0.1) and \control{(h)}{(130:0.2)} .. controls +(-50:0.2) and \control{(i)}{(180:0.1)} .. controls +(0:0.1) and \control{(l)}{(-130:0.2)} .. controls +(50:0.2) and \control{(70:1.5)}{(160:0.1)}; 
\end{scope}
\end{scope}

\def\r{30} \def\xs{-1} \def\ys{1.5}
\begin{scope}[xshift = \xs cm, yshift = \ys cm, rotate = \r]
\coordinate (l) at (-10:0.875); 
\coordinate (h) at (10:0.875); 
\coordinate (i) at (0:0.75); 
\coordinate (e) at (0:1);
\fill [blue!20!white] (h) .. controls +(30:0.1) and \control{(20:1)}{(-70:0.1)} arc (20:340:1) .. controls +(70:0.1) and \control{(l)}{(-30:0.1)}  .. controls +(-150:0.1) and \control{(-25:0.75)}{(65:0.1)} arc (335:25:0.75) .. controls +(-65:0.1) and \control{(h)}{(150:0.1)};
\fill [white!40!gray] (h).. controls +(-30:0.1) and \control{(e)}{(90:0.1)} .. controls +(-90:0.1) and \control{(l)}{(30:0.1)} .. controls + (150:0.1) and \control{(i)}{(-90:0.1)} .. controls +(90:0.1) and \control{(h)}{(-150:0.1)};
\end{scope}
\begin{scope}[even odd rule]
\clip (-3,-3) rectangle (3,3) (110:1.25) arc (110:180:1.25) -- (180:1.5) arc (180:110:1.5) --cycle;
\clip (-3,-3) rectangle (3,6) [xshift = -1.8cm, yshift = 2.5cm] (-60:0.75) arc (-60:-120:0.75) -- (-120:1) arc (-120:-60:1) --cycle;
\draw [xshift = \xs cm, yshift = \ys cm, rotate = \r] (25:0.75) arc (25:335:0.75);
\draw [xshift = \xs cm, yshift = \ys cm, rotate = \r] (20:1) arc (20:340:1);
\end{scope}
\begin{scope}[xshift = \xs cm, yshift = \ys cm, rotate = \r]
\begin{scope}[even odd rule]
\clip (-1.5,-1) rectangle (1.5,1) (l) circle (0.1cm);
\draw (25:0.75) .. controls +(-65:0.1) and \control{(h)}{(150:0.1)} .. controls +(-30:0.1) and \control{(e)}{(90:0.1)} .. controls +(-90:0.1) and \control{(l)}{(30:0.1)} .. controls +(-150:0.1) and \control{(-25:0.75)}{(65:0.1)}; 
\end{scope}
\begin{scope}[even odd rule]
\clip (-1.5,-1) rectangle (1.5,1)  (h) circle (0.1cm);
\draw (20:1) .. controls +(-70:0.1) and \control{(h)}{(30:0.1)} .. controls +(-150:0.1) and \control{(i)}{(90:0.1)} .. controls +(-90:0.1) and \control{(l)}{(150:0.1)} .. controls +(-30:0.1) and \control{(-20:1)}{(70:0.1)}; 
\end{scope}
\end{scope}

\def\r{30} \def\xs{-1.8} \def\ys{2.5}
\begin{scope}[xshift = \xs cm, yshift = \ys cm, rotate = \r]
\coordinate (l) at (-10:0.875); 
\coordinate (h) at (10:0.875); 
\coordinate (i) at (0:0.75); 
\coordinate (e) at (0:1);
\fill [blue!20!white] (h) .. controls +(30:0.1) and \control{(20:1)}{(-70:0.1)} arc (20:340:1) .. controls +(70:0.1) and \control{(l)}{(-30:0.1)}  .. controls +(-150:0.1) and \control{(-25:0.75)}{(65:0.1)} arc (335:25:0.75) .. controls +(-65:0.1) and \control{(h)}{(150:0.1)};
\fill [white!40!gray] (h).. controls +(-30:0.1) and \control{(e)}{(90:0.1)} .. controls +(-90:0.1) and \control{(l)}{(30:0.1)} .. controls + (150:0.1) and \control{(i)}{(-90:0.1)} .. controls +(90:0.1) and \control{(h)}{(-150:0.1)};
\end{scope}
\begin{scope}[even odd rule]
\clip (-3,-3) rectangle (3,6) [xshift = -1cm, yshift = 1.5cm] (110:0.75) arc (110:190:0.75) -- (190:1) arc (190:110:1) --cycle;
\draw [xshift = \xs cm, yshift = \ys cm, rotate = \r] (25:0.75) arc (25:335:0.75);
\draw [xshift = \xs cm, yshift = \ys cm, rotate = \r] (20:1) arc (20:340:1);
\end{scope}
\begin{scope}[xshift = \xs cm, yshift = \ys cm, rotate = \r]
\begin{scope}[even odd rule]
\clip (-1.5,-1) rectangle (1.5,1) (l) circle (0.1cm);
\draw (25:0.75) .. controls +(-65:0.1) and \control{(h)}{(150:0.1)} .. controls +(-30:0.1) and \control{(e)}{(90:0.1)} .. controls +(-90:0.1) and \control{(l)}{(30:0.1)} .. controls +(-150:0.1) and \control{(-25:0.75)}{(65:0.1)}; 
\end{scope}
\begin{scope}[even odd rule]
\clip (-1.5,-1) rectangle (1.5,1)  (h) circle (0.1cm);
\draw (20:1) .. controls +(-70:0.1) and \control{(h)}{(30:0.1)} .. controls +(-150:0.1) and \control{(i)}{(90:0.1)} .. controls +(-90:0.1) and \control{(l)}{(150:0.1)} .. controls +(-30:0.1) and \control{(-20:1)}{(70:0.1)}; 
\end{scope}
\end{scope}
\begin{scope}[even odd rule]
\clip (-3,-3) rectangle (3,6) [xshift = \xs cm, yshift = \ys cm] (10:0.75) arc (10:-60:0.75) -- (-60:1) arc (-60:10:1) --cycle;
\clip (-3,-3) rectangle (3,6)  (180:1.32) arc (180:120:1.32) -- (120:1.43) arc (120:180:1.43) --cycle;
\draw [purple, xshift = -1 cm, yshift = 1.5 cm] (0,0) circle (0.875);
\end{scope}
\begin{scope}
\clip [xshift = \xs cm, yshift = \ys cm] (10:0.75) arc (10:-60:0.75) -- (-60:1) arc (-60:10:1) --cycle;
\draw [purple, dotted, xshift = -1 cm, yshift = 1.5 cm] (0,0) circle (0.875);
\end{scope}

\def\r{-30} \def\xs{1.5} \def\ys{1}
\begin{scope}[xshift = \xs cm, yshift = \ys cm, rotate = \r]
\coordinate (l) at (-10:0.875); 
\coordinate (h) at (10:0.875); 
\coordinate (i) at (0:0.75); 
\coordinate (e) at (0:1);
\fill [blue!20!white] (h) .. controls +(30:0.1) and \control{(20:1)}{(-70:0.1)} arc (20:340:1) .. controls +(70:0.1) and \control{(l)}{(-30:0.1)}  .. controls +(-150:0.1) and \control{(-25:0.75)}{(65:0.1)} arc (335:25:0.75) .. controls +(-65:0.1) and \control{(h)}{(150:0.1)};
\fill [white!40!gray] (h).. controls +(-30:0.1) and \control{(e)}{(90:0.1)} .. controls +(-90:0.1) and \control{(l)}{(30:0.1)} .. controls + (150:0.1) and \control{(i)}{(-90:0.1)} .. controls +(90:0.1) and \control{(h)}{(-150:0.1)};
\end{scope}
\begin{scope}[even odd rule]
\clip (-3,-3) rectangle (3,3) (90:1.25) arc (90:30:1.25) -- (30:1.5) arc (30:90:1.5) --cycle;
\clip (-3,-3) rectangle (6,6) [xshift = 2.25cm, yshift = 2cm] (-120:0.75) arc (-120:-240:0.75) -- (-240:1) arc (-240:-120:1) --cycle;
\draw [xshift = \xs cm, yshift = \ys cm, rotate = \r] (25:0.75) arc (25:335:0.75);
\draw [xshift = \xs cm, yshift = \ys cm, rotate = \r] (20:1) arc (20:340:1);
\end{scope}
\begin{scope}[xshift = \xs cm, yshift = \ys cm, rotate = \r]
\begin{scope}[even odd rule]
\clip (-1.5,-1) rectangle (1.5,1) (h) circle (0.1cm);
\draw (25:0.75) .. controls +(-65:0.1) and \control{(h)}{(150:0.1)} .. controls +(-30:0.1) and \control{(e)}{(90:0.1)} .. controls +(-90:0.1) and \control{(l)}{(30:0.1)} .. controls +(-150:0.1) and \control{(-25:0.75)}{(65:0.1)}; 
\end{scope}
\begin{scope}[even odd rule]
\clip (-1.5,-1) rectangle (1.5,1)  (l) circle (0.1cm);
\draw (20:1) .. controls +(-70:0.1) and \control{(h)}{(30:0.1)} .. controls +(-150:0.1) and \control{(i)}{(90:0.1)} .. controls +(-90:0.1) and \control{(l)}{(150:0.1)} .. controls +(-30:0.1) and \control{(-20:1)}{(70:0.1)}; 
\end{scope}
\end{scope}

\def\r{-30} \def\xs{2.25} \def\ys{2}
\begin{scope}[xshift = \xs cm, yshift = \ys cm, rotate = \r]
\coordinate (l) at (-10:0.875); 
\coordinate (h) at (10:0.875); 
\coordinate (i) at (0:0.75); 
\coordinate (e) at (0:1);
\fill [blue!20!white] (h) .. controls +(30:0.1) and \control{(20:1)}{(-70:0.1)} arc (20:340:1) .. controls +(70:0.1) and \control{(l)}{(-30:0.1)}  .. controls +(-150:0.1) and \control{(-25:0.75)}{(65:0.1)} arc (335:25:0.75) .. controls +(-65:0.1) and \control{(h)}{(150:0.1)};
\fill [white!40!gray] (h).. controls +(-30:0.1) and \control{(e)}{(90:0.1)} .. controls +(-90:0.1) and \control{(l)}{(30:0.1)} .. controls + (150:0.1) and \control{(i)}{(-90:0.1)} .. controls +(90:0.1) and \control{(h)}{(-150:0.1)};
\end{scope}
\begin{scope}[even odd rule]
\clip (-3,-3) rectangle (6,6) [xshift = 1.5cm, yshift = 1cm] (60:0.75) arc (60:120:0.75) -- (120:1) arc (120:60:1) --cycle;
\draw [xshift = \xs cm, yshift = \ys cm, rotate = \r] (25:0.75) arc (25:335:0.75);
\draw [xshift = \xs cm, yshift = \ys cm, rotate = \r] (20:1) arc (20:340:1);
\end{scope}
\begin{scope}[xshift = \xs cm, yshift = \ys cm, rotate = \r]
\begin{scope}[even odd rule]
\clip (-1.5,-1) rectangle (1.5,1) (h) circle (0.1cm);
\draw (25:0.75) .. controls +(-65:0.1) and \control{(h)}{(150:0.1)} .. controls +(-30:0.1) and \control{(e)}{(90:0.1)} .. controls +(-90:0.1) and \control{(l)}{(30:0.1)} .. controls +(-150:0.1) and \control{(-25:0.75)}{(65:0.1)}; 
\end{scope}
\begin{scope}[even odd rule]
\clip (-1.5,-1) rectangle (1.5,1)  (l) circle (0.1cm);
\draw (20:1) .. controls +(-70:0.1) and \control{(h)}{(30:0.1)} .. controls +(-150:0.1) and \control{(i)}{(90:0.1)} .. controls +(-90:0.1) and \control{(l)}{(150:0.1)} .. controls +(-30:0.1) and \control{(-20:1)}{(70:0.1)}; 
\end{scope}
\end{scope}
\begin{scope}[even odd rule]
\clip (-3,-3) rectangle (3,6) [xshift = \xs cm, yshift = \ys cm] (-60:0.75) arc (-60:-120:0.75) -- (-120:1) arc (-120:-60:1) --cycle;
\clip (-3,-3) rectangle (3,6)  (80:1.32) arc (80:50:1.32) -- (50:1.43) arc (50:80:1.43) --cycle;
\draw [purple, xshift = 1.5 cm, yshift = 1 cm] (0,0) circle (0.875);
\end{scope}
\begin{scope}
\clip [xshift = \xs cm, yshift = \ys cm] (-60:0.75) arc (-60:-120:0.75) -- (-120:1) arc (-120:-60:1) --cycle;
\draw [purple, dotted, xshift = 1.5 cm, yshift = 1 cm] (0,0) circle (0.875);
\end{scope}
\begin{scope}[even odd rule]
\clip (-3,-3) rectangle (3,6) [xshift = -1cm, yshift=1.5cm] (-70:0.75) arc (-70:10:0.75) -- (10:1) arc (10:-70:1);
\clip ($(-1,1.5)+(180:0.875)$) arc (180:270:0.875) -- ($(1.5,1)+(-150:0.875)$) arc (-150:-230:0.875) --cycle;
\draw [purple] (180:1.32) arc (180:60:1.32) -- (60:1.43) arc (60:180:1.43) --cycle;
\end{scope}
\begin{scope}
\clip [xshift = -1cm, yshift=1.5cm] (-70:0.75) arc (-70:10:0.75) -- (10:1) arc (10:-70:1);
\draw [purple, dotted] (180:1.32) arc (180:60:1.32) -- (60:1.43) arc (60:180:1.43) --cycle;
\end{scope}

\def\r{-20} \def\xs{0.5} \def\ys{-1.5}
\begin{scope}[xshift = \xs cm, yshift = \ys cm, rotate = \r]
\coordinate (l) at (-10:0.875); 
\coordinate (h) at (10:0.875); 
\coordinate (i) at (0:0.75); 
\coordinate (e) at (0:1);
\fill [blue!20!white] (h) .. controls +(30:0.1) and \control{(20:1)}{(-70:0.1)} arc (20:340:1) .. controls +(70:0.1) and \control{(l)}{(-30:0.1)}  .. controls +(-150:0.1) and \control{(-25:0.75)}{(65:0.1)} arc (335:25:0.75) .. controls +(-65:0.1) and \control{(h)}{(150:0.1)};
\fill [white!40!gray] (h).. controls +(-30:0.1) and \control{(e)}{(90:0.1)} .. controls +(-90:0.1) and \control{(l)}{(30:0.1)} .. controls + (150:0.1) and \control{(i)}{(-90:0.1)} .. controls +(90:0.1) and \control{(h)}{(-150:0.1)};
\end{scope}
\begin{scope}[even odd rule]
\clip (-3,-3) rectangle (3,3) (-90:1.25) arc (-90:-125:1.25) -- (-125:1.5) arc (-125:-90:1.5) --cycle;
\draw [xshift = \xs cm, yshift = \ys cm, rotate = \r] (25:0.75) arc (25:335:0.75);
\draw [xshift = \xs cm, yshift = \ys cm, rotate = \r] (20:1) arc (20:340:1);
\end{scope}
\begin{scope}[xshift = \xs cm, yshift = \ys cm, rotate = \r]
\begin{scope}[even odd rule]
\clip (-1.5,-1) rectangle (1.5,1) (l) circle (0.1cm);
\draw (25:0.75) .. controls +(-65:0.1) and \control{(h)}{(150:0.1)} .. controls +(-30:0.1) and \control{(e)}{(90:0.1)} .. controls +(-90:0.1) and \control{(l)}{(30:0.1)} .. controls +(-150:0.1) and \control{(-25:0.75)}{(65:0.1)}; 
\end{scope}
\begin{scope}[even odd rule]
\clip (-1.5,-1) rectangle (1.5,1)  (h) circle (0.1cm);
\draw (20:1) .. controls +(-70:0.1) and \control{(h)}{(30:0.1)} .. controls +(-150:0.1) and \control{(i)}{(90:0.1)} .. controls +(-90:0.1) and \control{(l)}{(150:0.1)} .. controls +(-30:0.1) and \control{(-20:1)}{(70:0.1)}; 
\end{scope}
\end{scope}

\node [green!70!purple] at ($(1.5,1)+(-30:1.2)$) {$-2$};
\node [green!70!purple] at ($(-1,1.5)+(35:1.2)$) {$+2$};
\node [green!70!purple] at ($(-1,1.5)+(7:1.2)$) {$+2$};
\node [green!70!purple] at ($(-1,1.5)+(-20:1.2)$) {$-2$};

\begin{scope}[xshift = 5cm]
\fill [blue!20!white, opacity = 0.5] (0,0) circle (1.25cm and 0.5cm);
\draw [dashed, blue] (-0.5, 0) -- (0.5,0);
\draw (-0.5,0) -- (-0.5,1.5);
\draw (0.5,0) -- (0.5,1.5);
\fill [blue!20!white, opacity = 0.5] (-0.5,-1) rectangle (0.5,1.5);
\begin{scope}[even odd rule]
\clip (-2,-1.5) rectangle (2,1.5) (0,0) circle (1.25cm and 0.5cm);
\draw (-0.5,-1) -- (-0.5,0);
\draw (0.5,-1) -- (0.5,0);
\end{scope}
\begin{scope}
\clip (0,0) circle (1.25cm and 0.5cm);
\draw [dotted] (-0.5,-1) -- (-0.5,0);
\draw [dotted] (0.5,-1) -- (0.5,0);
\end{scope}
\end{scope}
\end{tikzpicture}
\caption{Left: A simple red curve on the surface associated to $\Sigma (8_{10})$. The intersection number in green show that the boundaries of the corresponding annulus are unlinked. Right : A ribbon intersection.}
\label{pic_frame_0_curve}
\end{center}
\end{figure}

One can also check in the censuses that these two examples are hyperbolic.

\paragraph*{Arbitrarily large defect.}

We can use the previous examples to create Hopf arborescent links with arbitrarily large (smooth or topological) defect. 
The following proposition proves that the defect coming from tree patterns add up if they are isolated enough.

\begin{proposition}\label{prop_isolated_pattern}
Let $L$ be an arborescent link associated to a plane tree $T$. Let $T_1, \ldots, T_n$ be disjoint subtrees of $T$ that are pairwise at distance at least two within $T$. Denote $L_i =\partial (\Sigma (T_i))$, then $\oper{\sum}{1 \leq i \leq n}{}{\Delta_g (L_i)}  \leq \Delta_g (L)$.
\end{proposition}

\begin{proof}
For each $i$, identifying the Hopf bands of the vertices of $T_i$ within $\Sigma (T)$ provides a family of embeddings of the $\Sigma (T_i)$. These embeddings are disjoint by the distance assumption. 

We can now repeat the proof of Lemma~\ref{lem_defect} and Proposition~\ref{prop_stable_defect} with a family of links to obtain the desired inequality. Since the $\Sigma(T_i)$ are disjoint subsurfaces of $\Sigma(L)$,  $\bigcup \Sigma (T_i)$ is a surface-minor of $\Sigma (L)$. As in the proof of Lemma~\ref{lem_defect}, we define for each $L_i$, a surface $S'_i$ in $\B^4$ that has $L_i$ as its boundary. Gluing each $S'_i$ to the pieces of $\Sigma (T) \smallsetminus \bigcup \Sigma (T_i)$ along its associated $L_i$ yields a surface $S$ in $\B^4$ such that $\partial S = L$. 

Now we have $g(S) = g(\Sigma (T)) - \sum_i g(\Sigma(T_i)) + \sum_i g(S'_i)$. It follows, by definition of $g_4$ that $g(\Sigma (T)) - g_4 (L) \geq g(\Sigma (T)) - g(S) = \sum_i g(\Sigma (T_i)) - g(S'_i)$. By Theorem~\ref{th_fibred_surface_minimise}, and taking each $S'_i$ to be of minimal genus, $\Delta_g (L) = g(L) - g_4(L) \geq \sum_i g(L_i) - g_4(L_i) = \sum_i \Delta_g (L_i)$.
\end{proof}

Consider a Hopf arborescent link $L$ associated to some tree $T$ such that $L$ has non zero defect. By identifying the roots of $n$ copies of $T$ with the leaves of a $n$-star graph (one vertex of degree $n$, $n$ vertices of degree $1$), we create a tree $T'$ such that the link $L' = \partial \Sigma (T')$ has defect at least $n\Delta_g (L)$ by Proposition~\ref{prop_isolated_pattern}, therefore obtaining links with arbitrarily large topological and smooth defect.

\subparagraph*{Acknowledgements.} We would like to thank Sebastian Baader for helpful discussions, and the anonymous reviewers for their questions and suggestions which allowed us to significantly improve the paper.

\bibliographystyle{plainurl}
\bibliography{biblio}

\end{document}